\date{}
\DeclareMathAlphabet\mathbfcal{OMS}{cmsy}{b}{n}
\newcommand{\ud}{\mathrm{d}}
\newcommand{\rk}{\mathrm{rank}}
\newcommand{\ve}{\mathrm{vec}\,}
\newcommand{\Dnt}{\tilde{D}_n}
\newcommand{\nt}{\tilde{n}}
\newcommand{\Cholpi}{\Sigma_{p,tr}^{-1}}
\newcommand{\Cholpit}{\Sigma_{p,tr}^{-1\prime}}
\newcommand{\Atf}{A_{10}^\prime,A_{1+}^\prime}
\newcommand{\Ats}{A_{s0}^\prime,A_{s+}^\prime}
\newcommand{\Ap}{A_{p0}}
\newcommand{\Apt}{A_{p0}^\prime}
\newcommand{\Api}{A_{p0}^{-1}}
\newcommand{\Apit}{A_{p0}^{-1\prime}}
\newcommand{\Apd}{A_{p+}}
\newcommand{\Apdt}{A_{p+}^{\prime}}
\newcommand{\Qp}{Q_p}
\newcommand{\Qpt}{Q_p^{\prime}}
\newcommand{\e}{\varepsilon}
\newcommand{\Rj}{\textbf{R}_{j}}
\newcommand{\Lj}{\textbf{L}_{j}}
\newcommand{\Sj}{\textbf{S}_{j}}
\newcommand{\Atot}{\left(\mathcal{A}_0,\mathcal{A}_+\right)}
\newcommand{\Btot}{\left(\mathcal{B},\mathbold{\Sigma}\right)}
\newcommand{\Gp}{G(A_{p0}^\prime,A_{p+}^\prime)}
\newcommand{\Vj}{\textbf{V}_j}
\newcommand{\Vtj}{\textbf{V}_j(\theta)}
\newcommand{\tVtj}{\bm{\widetilde V}_j(\theta)}
\newcommand{\tVt}{\tilde{\textbf{V}}(\theta)}
\newcommand{\tVtz}{\accentset{\:\,\sim}{\textbf{V}}(\theta)}
\newcommand{\Ttt}{\accentset{\approx}{T}}
\newcommand{\Vtt}{\accentset{\:\approx}{\textbf{V}}(\theta)}
\newcommand{\Vttz}{\accentset{\:\,\approx}{\textbf{V}}(\theta_0)}
\newcommand{\IS}{IS_\eta(\phi)}
\newcommand{\ISp}{IS_{\eta,p}(\phi)}
\newcommand{\Ar}{\ensuremath\mathcal{A}(\phi)}
\newcommand{\Ons}{\ensuremath\mathcal{O}\left(ns\right)}
\newcommand{\Qr}{\ensuremath\mathcal{Q}(\phi)}
\newcommand{\Qrp}{\ensuremath\mathcal{Q}_p(\phi)}
\newcommand{\Qt}{\tilde{\ensuremath\mathcal{Q}}}
\newcommand{\AllReg}{\{1,\ldots,s\}}
    \renewcommand{\O}{\ensuremath\mathcal{O}\left(n\right)}
    \newcommand{\O}{\ensuremath\mathcal{O}\left(n\right)}
    \renewcommand{\Re}{\ensuremath\mathbb{R}}
    \newcommand{\Re}{\ensuremath\mathbb{R}}
    \renewcommand{\F}{\textbf{F}}
    \newcommand{\F}{\textbf{F}}
    \renewcommand{\S}{\textbf{S}}
    \newcommand{\S}{\textbf{S}}
\def\thesection{\Roman{section}}
\newcommand*{\encircled}[1]{\relax\ifmmode\mathpalette\@encircled@math{#1}\else\@encircled{#1}\fi}
\newcommand*{\@encircled@math}[2]{\@encircled{$\m@th#1#2$}}
\newcommand*{\@encircled}[1]{%
  \tikz[baseline,anchor=base]{\node[draw,circle,outer sep=0pt,inner sep=.2ex] {#1};}}
\newtheorem{theo}{Theorem}%[section]
\newtheorem{corol}{Corollary}%[section]
\newtheorem{lemma}{Lemma}%[section]
\newtheorem{algo}{Algorithm}%[section]
\theoremstyle{definition}
\newtheorem{defin}{Definition}
\newtheorem{condition}{Condition}
\newtheorem{remark}{Remark}
\begin{document}

\title{SVARs with breaks: Identification and inference{\Large \thanks{We would like to thank Majid Al Sadoon, Giuseppe Cavaliere, Marco Del Negro, Thorsten Drautzburg, Luca Fanelli, Elisabetta Mirto, Luca Neri, Michele Piffer, Matthew Read, Anna Simoni, Alessio Volpicella, for beneficial discussions and comments. We have also benefited from presenting this work at the 7th RCEA time series workshop (June 2021), 3rd IWEEE (January 2022), Durham University (February 2022), 10th Italian Conference in Econometrics and Empirical Economics (ICEEE, May 2023), 9th Annual Conference of the International Association for Applied Econometrics (IAAE, June 2023). Emanuele Bacchiocchi gratefully acknowledges financial support from Italian Ministry of University and Research (PRIN 2022, Grant 20229PFAX5) and the University of Bologna (RFO grants).}}}
\author{%
\begin{tabular}{ccc}
Emanuele Bacchiocchi\thanks{University of Bologna, Department of Economics. Email: e.bacchiocchi@unibo.it} &  & 
Toru Kitagawa\thanks{Brown University, Department of Economics. Email: toru\_kitagawa@brown.edu} \\ 
University of Bologna &  & Brown University \\ 
\end{tabular}%
}
\vspace{4cm}

\date{May 2024}
\maketitle

\begin{abstract}
In this paper we propose a class of structural vector autoregressions (SVARs) characterized by structural breaks (SVAR-WB). Together with standard restrictions on the parameters and on functions of them, we also consider constraints across the different regimes. Such constraints can be either (a) in the form of stability restrictions, indicating that not all the parameters or impulse responses are subject to structural changes, or (b) in terms of inequalities regarding particular characteristics of the SVAR-WB across the regimes. We show that all these kinds of restrictions provide benefits in terms of identification. We derive conditions for point and set identification of the structural parameters of the SVAR-WB, mixing equality, sign, rank and stability restrictions, as well as constraints on forecast error variances (FEVs). As point identification, when achieved, holds locally but not globally, there will be a set of isolated structural parameters that are observationally equivalent in the parametric space. In this respect, both common frequentist and Bayesian approaches produce unreliable inference as the former focuses on just one of these observationally equivalent points, while for the latter on a non-vanishing sensitivity to the prior. To overcome these issues, we propose alternative approaches for estimation and inference that account for all admissible observationally equivalent structural parameters. Moreover, we develop a pure Bayesian and a robust Bayesian approach for doing inference in set-identified SVAR-WBs. Both the theory of identification and inference are illustrated through a set of examples and an empirical application on the transmission of US monetary policy over the great inflation and great moderation regimes.
\end{abstract}

\vspace{3cm}

\begin{flushleft}
\textit{Keywords}: SVARs with breaks, restrictions across regimes, identification, admissible identified set, robust Bayesian inference, 
									 frequentist-valid inference.\newline
\bigskip
\textit{JEL codes}: C01,C13,C30,C51.
\end{flushleft}

\newpage

%%%%%%%%%%%%%%%%%%%%%%%%%%%  INTRODUCTION %%%%%%%%%%%%%%%%%%%%%%%%%%%%%%%%%%%%%%%%%%%

\section{Introduction}
\label{sec:intro}

In a recent contribution, \cite{MM14ECTA} indicate two major concerns in the macroeconomic literature: the limited variation in the data leading to weak identification from one side, and the parameter instability characterizing many empirical relations among macroeconomic and financial variables from the other. Widening the horizon to the relationships between economic theory and empirical modeling, one can think of a third further concern regarding the lack of information coming from the economic theory to uniquely identify the parameters of the empirical models. This paper deals with these three concerns by considering a widespread family of dynamic models, structural vector autoregressions (SVARs), allowing for possible changes in the parameters and where such instability can be exploited constructively to solve or reduce the identification issue. 

There are strong empirical evidences of instabilities both in the data and in the relations among macroeconomic variables. Perhaps the most debated one is the sharp decline in the volatility of many macroeconomic variables documented, among others, by \cite{MPQ00AER} and \cite{StockWatson02NBER} and labeled ``Great Moderation''. However, while some authors, such as Stock and Watson (2002, 2003)\nocite{StockWatson02NBER}\nocite{StockWatson03}, \cite{JP08AER} and \cite{SimsZha06AER}, attribute this changes to a decrease in the volatility of the shocks (``good luck''), \cite{BG06RESTATS} and \cite{InoueRossi11RESTATS} provide empirical evidence of a difference in the conduct of the monetary policy during the Great Moderation and of changes in the private sector's parameters. A more recent literature, instead, investigates the disconnect between inflation and the business cycle observed since mid 1990s that could be interpreted as a sort of flattening of the Phillips curve; see, among others, \cite{DLPT20} and \cite{BFV23}.

Apart few exceptions, such instabilities have been detected through SVAR models, that have been playing a key role in the empirical macroeconomic literature for the policy analysis evaluation and for understanding stylized facts. However, since their introduction in econometrics (\citeauthor{Sims80Ecta}, \citeyear{Sims80Ecta}), many researchers faced the identification issue and proposed different techniques, generally based on restricting the parametric space, to solve it. A more recent stream of research, instead, points to a different approach and, instead of imposing restrictions on the parameters, searches for further information observable in the data in order to enrich the information set. \cite{Rigobon03REStat} opens the road to this new literature and proposes to use the heteroskedasticity present in the data and show that the identification conditions are much less stringent than in the traditional simultaneous equation models; \cite{LanneLutkepohl08JMCB} extend this idea to SVAR models. The limit of this literature is that the different heteroskedasticity regimes observed in the data are due, by construction, to the changes in the volatility of the shocks. This is in line with some empirical literature mentioned above, but completely at odds with other findings in which the different paths in the data are to be ascribed to policy regime shifts or structural breaks in behavioral relationships among economic agents. In this direction, \cite{SimsZha06AER}, in a multivariate regime-switching framework, allow both structural parameters and variances of the shocks to change over time. Moreover, in some of their specifications, they consider subsets of parameters to change only, leaving the others to remain constant over time. \cite{BacchiocchiFanelli15} exploit this idea of mixing constant and time-varying parameters to alleviate the identification issue, as advocated by \cite{MM14ECTA}, although in a different setup.

Our paper presents new results in terms of identification, estimation and inference in structural vector autoregression characterized by structural breaks (SVAR-WB henceforth, as in the terminology of \citeauthor{BCF17}, \citeyear{BCF17}). Importantly, we assume the break dates to be exogenously determined. This choice is coherent with many empirical papers in the literature, where the break dates correspond to policy shifts or the occurrence of financial crises: see, among many others, \cite{Rigobon03REStat}, \cite{LanneLutkepohl08JMCB}, \cite{BG06RESTATS}, \cite{ABCF17}. While this choice has no impact on the identification issue, estimation and inference are substantially different when the break dates have to be determined endogenously within the model.

Concerning the identification issue, the first contribution of the present paper refers to the kind of restrictions the researcher can impose to identify the model. Other than presenting restrictions on the parameters of the SVAR-WB, we also allow to constraints on functions of the parameters, like impulse responses. Among this set of constraints, we include our stability restrictions, i.e. when one or a set of parameters or functions of them do not change across the regimes. In this respect, when restrictions combine equality and stability constraints, we present a set of criteria to check the identification of the SVAR-WB. Specifically, we present a necessary and sufficient rank condition that, when met, guarantees the model to be identified at a specific parameter point. All our conditions, however, refer to local identification, which represents the best possibility one can get for SVAR-WBs to benefit of the multiple regimes to have gains in terms of identification (see \citeauthor{BK19}, \citeyear{BK19}).

The second main result is about the estimation of the parameters in case of local identification. In such particular case, in fact, the Gaussian likelihood function is characterized by observationally equivalent multiple peaks. The common frequentist approach simply selects one of them, and behaves as it were the only one, may be inducing to misleading results, as shown in \cite{BK19} for locally-identified SVARs. Our proposal, instead, is to calculate all the locally-identified points in the parametric space and account for all of them in the inferential analysis. Three algorithms are developed to estimate all these parameters in the form of the orthogonal matrices rotating the reduced-form parameters and satisfying the equality and stability restrictions.

The third contribution, strictly related to the previous ones, concerns the possibility to impose sign and rank restrictions, as in the spirit of \cite{Uhlig2005} and \cite{AHD21}, respectively. The researcher, thus, is allowed to combine equality and inequality constraints, where, among the formers, we include the stability restrictions, and among the latters, we also propose inequality restrictions across the regimes, involving parameters, impulse responses and FEVs. In this respect, we will consider SVAR-WBs that are only set, and not point, identified. Moreover, sign restrictions can reduce the number of admissible parameters in locally-identified SVAR-WB, as suggested by \cite{BK19}.

The fourth main contribution of the paper, thus, is to use such exhaustive set of admissible observationally equivalent orthogonal matrices in a Bayesian and in a frequentist-valid inferential framework. Moreover, as the researcher, in general, can not form priors over such observationally equivalent parameters, we extend the multi-prior robust Bayesian approach, originally proposed by \cite{GK18}, to the case of SVAR-WBs. 
%As in \cite{BK19} for locally-identified SVARs, the so called identified set, collecting all the observationally equivalent parameters, will consist of a finite number of points, rather than a dense subset of the parametric space as in \cite{GK18}.

\subsection{Related literature}
\label{sec:lit}

This paper is directly inspired by the ``identification through heteroskedasticity'' approach originally proposed by \cite{SF01JoE}, and subsequently extended to simultaneous equation models by \cite{Rigobon03REStat} and to SVARs by \cite{LanneLutkepohl08JMCB}.\footnote{See also \cite{KleinVella10} and \cite{Lewbel12} who exploit the heteroskedasticity present in the data to have gains on the identification of simultaneous equation models, but in a micro-econometric oriented framework.} In all these approaches, however, the heteroskedasticity present in the data is ascribed to different variances of the structural shocks. Two different papers by \cite{LS22} and \cite{BL23} propose testing for time-varying impulse responses in heteroskedastic SVARs by exploiting external information. The specification we adopt in the present paper, instead, is more similar to the proposal by \cite{BacchiocchiFanelli15}, who focus on SVAR models with possible breaks on the contemporaneous relations among the shocks and provide necessary and sufficient conditions for local identification when some of the parameters remain stable in the different regimes. Our approach, however, is more general than the one presented in \cite{BacchiocchiFanelli15}, as we allow for more general restrictions, involving equality and inequality constraints on the parameters and on functions of them, as in the spirit of \cite{RWZ10RES} (henceforth, RWZ).

As gains in terms of identification can be obtained when some parameters remain stable across the different regimes, our contribution is also related to the literature using parameters stability for identification of structural relationships among the variables as in \cite{MM14ECTA}, where the authors a) use stability restrictions as a source of identification of the stable structural parameters in economic models and b) develop new econometric methods for statistical inference. Related to our paper is also the recent literature where censored variables enter multivariate econometric models: \cite{IW20}, \cite{ILMZ22}, \cite{M21}, \cite{JM21}, \cite{CCMM21}, \cite{AMSV22}. All these models consider the nominal interest rate as the censored variable that could be useful to identify the monetary policy effects on the real economy. Such effects, as in the spirit of our research, could be different whether or not the interest rate has reached the zero lower bound. 

Our approach for studying the identification of the SVAR-WBs, instead, is related both to the classical results on local identification for SVAR models developed by \cite{AGbook}, as well as to the more recent ones on global identification by RWZ. Concerning this latter, although some of our results can be seen as generalization of the RWZ ones, when extending to SVAR-WBs, only local identification can be obtained. 

%Moreover, in developing our identification conditions, we find inspiration from the 
%contributions by \cite{Johansen95} and \cite{Lucchetti06ET}, who study the identification of simultaneous systems of equations and 
%traditional SVARs, respectively, simply focusing on the restrictions imposed and independently of the estimated structural parameters, 
%that are clearly unknown when the analysis of identification is performed.

The estimation of SVAR-WBs \textit{\'a la} \cite{BacchiocchiFanelli15} or heteroskedastic SVARs \textit{\'a la} \cite{LanneLutkepohl08JMCB} has been obtained by generalizing the ML estimator usually used for locally-identified SVARs. However,
specifically for SVARs, \cite{BK19} focus on the possible misleading consequences of multiple peaks present in the likelihood function on estimation and inference of the structural parameters and impulse responses. Estimation and inference concerns on locally-identified models, however, are not confined to the largely diffused frequentist approach. Also in standard Bayesian practice, one faces challenges when the likelihood has multiple modes at least for two reasons. Firstly, the posterior remains sensitive to the choice of prior even asymptotically due to the lack of global identification. Secondly, standard posterior sampling algorithms may fail to explore all the modes, resulting in an inaccurate approximation of the posterior. Our contribution, thus, is related to \cite{BK19} both for the estimation and inference methods. The estimation algorithms presented in the paper, for a given reduced-form parameter point, aim at collecting all the solutions of the identification problem, that will define the identified set to be considered in a multi-prior robust Bayesian approach \textit{\'a la} \cite{GK18}. 

In this respect, our paper is also related to the growing literature on identification through sign restrictions and set-identified SVARs (\citeauthor{Faust98}, \citeyear{Faust98}; \citeauthor{CanovadeNicolo02JME}, \citeyear{CanovadeNicolo02JME}; \citeauthor{Uhlig2005}, \citeyear{Uhlig2005}; \citeauthor{MU09}, \citeyear{MU09}, among others). The identified set for an impulse response in this class of models is a set with a positive measure, while the identified set in our setting can also be a finite number of isolated points, each of which corresponds to one of the solutions of the identification problem (when local identification is met). 

Finally, the empirical application completing the paper relates our research to the wide literature on the ``good luck'' versus ``good policy'' debate about the effectiveness of the US monetary policy in reducing the volatility of many macroeconomic variables since
the beginning of the eighties (see, among many others, \citeauthor{BS09AER}, \citeyear{BS09AER}).

The rest of the paper is organized as follows. In Section \ref{sec:model} we present the SVAR-WB while the identification issue is discussed in Section \ref{sec:identification}. Section \ref{sec:Estim} and Section \ref{sec:Inference} discuss on how to estimate and draw inference on the parameters of the model, respectively. Both local and set identification are considered. In Section \ref{sec:EmpApp} we propose an empirical application based on the transmission of US monetary policy. Section \ref{sec:conclusion} provides some concluding remarks. All the proofs and other theoretical results, as well as a set of examples, are reported in the Appendix.

%%%%%%%%%%%%%%%%%%%%%%%%%%%  ECONOMETRIC MODEL  %%%%%%%%%%%%%%%%%%%%%%%%%%%%%%%%%%%%%%%%%%%

\section{Econometric framework: SVAR with breaks}
\label{sec:model}

Suppose that the sample of interest $1\leq t\leq T$ is characterized by $s$ regimes, or, equivalently, by $s-1$ structural breaks $1<T_{B_1}<\ldots<T_{B_{s-1}}<T$.  In this section we present a structural vector autoregression that can assume different parameters 
over the different regimes. Consider the following model
\begin{equation}
	\label{eq:SVARWB}
	A_{p0}y_{t}=a_p+\sum_{i=1}^{l}A_{pi}y_{t-i}+\e_{t}\hspace{2cm}\text{if }\quad T_{B_{p-1}}\leq t<T_{B_p}
\end{equation}
where $y_{t}$ is a n-dimensional vector of observable variables, $\e_{t}$ is a vector of white noise processes, normally distributed with mean zero and covariance matrix $\Lambda_p=I_n$, the identity matrix, independently of the regime we are in.\footnote{We indicate, by convention, $T_{B_0}=1$ and $T_{B_s}=T$. Moreover, we consider the initial conditions for the first regime, $y_{0},\ldots,y_{1-l}$, as given, while for all other regimes they are fixed as the last $l$ observations of the previous regime, in order to guarantee the contiguity of the regimes on the whole sample. The lag length, for simplicity, is fixed to be equal in all the regimes. However, generalizing to a regime-specific number of lags does not alter the results in any directions.} The $n\times 1$ vectors $a_p$ are the intercepts and the $n\times n$ matrices $A_{pi}$, with $i=0,\ldots,l$ are the structural parameters in the regime $p$, with $p\in\{1,\ldots,s\}$. For the \textit{p}-th regime the structural parameters can be indicated as $U_p=\left(\Ap, \Apd\right)\in \Theta_p \subset \Re^{\left(n+m\right)n}$, with $m=nl+1$, and where the $n\times m$ matrix $\Apd$ is defined as $\Apd\equiv \left(a_{p}, A_{p1},\ldots, A_{pl}\right)$. We indicate the open dense set of all structural parameters by $\mathbb{P}^S\equiv\bigcup\limits_{p=1}^{s}U_p\subset \Re^{sn\left(n+m\right)}$. Hereafter, the model in Eq. (\ref{eq:SVARWB}) is denoted as SVAR-WB.\footnote{This acronym has been already used by \cite{BCF17} for a similar, but less general, SVAR model with structural breaks.}

The reduced form VAR model can be written as  
\begin{equation}
	\label{eq:VARWB}
	y_{t}=b_p+\sum_{i=1}^{l}B_{pi}y_{t-i}+u_{t}\hspace{2cm}\text{if }\quad T_{B_{p-1}}\leq t<T_{B_p}
\end{equation}
where $b_p=\Api a_p$, $B_{pi}=\Api A_{pi}$, with $p\in\{1,\ldots,s\}$. Furthermore, for the \textit{p}-th regime, the vector of error terms is defined as $u_t=\Api\e_t$, with $E\left(u_tu_t^\prime\right)=\Sigma_p=\Api A_{p0}^{-1\prime}$. The VAR model, thus, other than presenting different parameters concerning the conditional mean, also presents different covariance matrices of the error terms $\Sigma_1,\ldots,\Sigma_s$, and thus heteroskedasticity, as in \cite{Rigobon03REStat}, \cite{LanneLutkepohl08JMCB}, and 
\cite{BacchiocchiFanelli15}. For each regime $p$, the reduced form parameters are denoted by $\phi_p=(B_{p+},\Sigma_p)\in \Phi_p \subset \Re^{n+n^2p}\times \Omega$, where $B_{p+}\equiv \left(b_p,B_{p1},\ldots,B_{pl}\right)$ and $\Omega$ is the space of positive-semidefinite matrices. The set of all reduced-form parameters is denoted by $\mathbb{P}^R\equiv\bigcup\limits_{p=1}^{s}\Phi_p\subset \Re^{sn\big( m+(n+1)/2\big)}$. 

In order to avoid problems with non-stationarity, we restrict the domain $\Phi_p$ such that all the roots of the characteristic polynomial lie outside the unit circle, and thus,  for each of the regimes $p=\{1,\ldots,s\}$, there exists a corresponding VMA$(\infty)$ representation.

Under this last condition, for each regime it is possible to find the VMA$(\infty)$ representation of the reduced form model in Eq. (\ref{eq:VARWB}) as
\begin{eqnarray}
\label{eq:VMA}
	y_t & = & c_p+\sum_{i=0}^{\infty}C_{pi}u_{t-i}\nonumber\\
	& = & c_p+\sum_{i=0}^{\infty}C_{pi}\Api\e_{t-i}
\end{eqnarray}
where $C_{pi}$ is the \textit{i}-th coefficient matrix of $\bigg(I_n-\sum_{i=1}^{l}B_{pi}L^i\bigg)^{-1}$, with $p\in\{1,\ldots,s\}$. The impulse response $IR_p^h$ becomes
\begin{equation}
\label{eq:IRh}
	IR_p^h=C_{ph}\Api
\end{equation}
whose \textit{(i,j)}-element represents, for each regime, the response of the \textit{i}-th variable of $y_{t+h}$ to a unit shock on the \textit{j}-th element of $\varepsilon_{t}$. 
%The long-run impulse response matrix is defined as
%\begin{equation}
%\label{eq:IRinf}
%	IR_p^{\infty}=\lim_{h \to \infty} IR_p^h=\bigg(I_n-\sum_{j=1}^{p}B_{pj}\bigg)\Api
%\end{equation}
%and the 
The long-run cumulative impulse response matrix, instead, is defined as
\begin{equation}
	\label{eq:CIRinf}
	CIR_p^{\infty}=\sum_{h=0}^{\infty}IR_p^h=\bigg(\sum_{h=0}^{\infty}C_{ph}\bigg)\Api
\end{equation}
where, as usual, $p\in\{1,\ldots,s\}$.

Suppose, for the moment, to focus on each single regime, where the model behaves as a traditional SVAR. As is well known in the SVAR literature, without any restriction it is impossible to uniquely pin down the structural parameters based on the reduced form of the model. This clearly continues to hold even in the case of different regimes. Furthermore, looking at Eq. (\ref{eq:SVARWB}), if all the parameters are completely unrelated across the $s$ regimes, then there is no advantage in investigating the regimes jointly and the analysis on each SVAR can be equivalently performed separately. 

If, instead, we suppose some parameters to remain stable across the regimes, then \cite{BacchiocchiFanelli15} proved there is some gain in terms of identification in performing the analysis jointly in all the regimes. Were all the regimes independent one to the other, the mapping between the reduced and structural form can be defined on each single regime separately as a function $h_p:\Theta_p \longrightarrow \Phi_p$ where 
\begin{equation}
	\label{eq:Hfunction_p}
	h_p(\Ap,\Apd)=\bigg(\Api\Apd,\Api\Apit\bigg).
\end{equation}

Instead, if we allow for potential relationships among the structural parameters across the regimes, we cannot restrict the analysis regime-by-regime, but we need a more general formulation of such a relationship. Indicating with $\Atot$ the generic point on the joint (all the regimes) parametric space, i.e. $\Atot=(A_{10},A_{1+},\ldots,A_{s0},A_{s+})$, we define a function 
$h:\mathbb{P}^S \longrightarrow \mathbb{P}^R$ of the form
\begin{eqnarray}
	h\Atot & = & \Big[A_{10}^{-1}A_{1+}\:,\:A_{10}^{-1}A_{10}^{-1\prime}\:,\:\cdots\:,\:A_{s0}^{-1}A_{s+}\:,\:A_{s0}^{-1}A_{s0}^{-1\prime}\Big]
		\nonumber\\
	& = & \Big[B_{1+}\:,\:\Sigma_1\:,\:\cdots\:,\:B_{s+}\:,\:\Sigma_{s}\Big]\nonumber\\
	& = & \Btot 	\label{eq:Hfunction}
\end{eqnarray}
mapping the structural-form parameters to those of the reduced form for the whole SVAR-WB, denoted by $\Btot\in \mathbb{P}^R$. This function will be used in the identification issue presented in the following section.

%%%%%%%%%%%%%%%%%%%%%%%%%%%%%%%%%%%%%%%%%%%%%%%%%%%%%%%%%%%%%%%%%%%%%%%%%%%%%%%
%														 SECTION IDENTIFICATION														%
%%%%%%%%%%%%%%%%%%%%%%%%%%%%%%%%%%%%%%%%%%%%%%%%%%%%%%%%%%%%%%%%%%%%%%%%%%%%%%%

\section{Theory of identification}
\label{sec:identification}

This section is dedicated to the identification issue in a context of SVAR-WB, that, as shown in \cite{BacchiocchiFanelli15}, is more general than the separate analysis of each single regime. In Section \ref{sec:ident_def} and \ref{sec:ident_restr}  we introduce the notion of identification adopted in the paper and the kind of restrictions to which the theory applies, respectively. Section \ref{sec:GlvsLoc} examines in depth the two notions of global and local identification in SVAR-WBs. In section \ref{sec:Id}, instead, 
we present and discuss some conditions for checking the identification of SVAR-WBs. A set of illustrative examples are shown in Appendix \ref{app:examples}.

\subsection{Identification}
\label{sec:ident_def}

The notion of identification we adopt in the present paper is the one introduced by \cite{Rothenberg71ECTA}, and recently adapted to SVARs by RWZ. For each regime \textit{p}, we state that two parameter points $\left(\Ap,\Apd\right)$ and 
$\left(\tilde{A}_{p0},\tilde{A}_{p+}\right)$, are observationally equivalent if and only if they imply the same distribution of $y_t$ within the \textit{p}-th regime, or equivalently, given the gaussianity assumption made before, they have the same reduced form representation $\left(B_{p+},\Sigma_p\right)$. It is easy to see that this definition refers to the existence of an orthogonal matrix $Q_p \in \O$, the set of $n\times n$ orthogonal matrices, such that two matrices $\Ap$ and $\tilde{A}_{p0}=\Qpt\Ap$ satisfy $\Sigma_p=\Api\Apit=\tilde{A}_{p0}^{-1}\tilde{A}_{p0}^{-1\prime}$.

%If we consider each regime separately, as in the standard literature of SVAR models, it is common to introduce the following definitions of identification:
%\begin{defin}[Global identification in the \textit{p}-th regime]
%	\label{def:global}
%	For the \textit{p}-th regime, a parameter point $\left(\Ap,\Apd\right)\in U_p$ is globally identified if and only if there is no other parameter 
%	point that is observationally equivalent. 
%\end{defin}

%\begin{defin}[Local identification in the \textit{p}-th regime]
%	\label{def:local}
%	For the \textit{p}-th regime, a parameter point $\left(\Ap,\Apd\right)\in U_p$ is locally identified if and only if there is an open 
%	neighborhood about $\left(\Ap,\Apd\right)$ containing no other observationally equivalent parameter point. 
%\end{defin}

However, as we allow for relationships between the structural parameters of different regimes, and consequently the analysis is performed at SVAR-WB level (and not regime-by-regime), we need to extend the previous definitions to the case of identification of the entire model. This is extremely important because, as we will see later on, the SVAR-WB can be identified even if each single regime alone can not. We provide the following definitions.

\begin{defin}[Global identification in SVAR-WB]
	\label{def:globalWB}
	A parameter point $\Atot\in\mathbb{P}^S$ is globally identified if and only if there is no other parameter point that is 
	observationally equivalent. 
\end{defin}

\begin{defin}[Local identification in SVAR-WB]
	\label{def:localWB}
	A parameter point $\Atot\in\mathbb{P}^S$ is locally identified if and only if there is an open neighborhood about $\Atot$ containing 
	no other observationally equivalent parameter point. 
\end{defin}

%These last two definitions are in line with the intuition in \cite{BacchiocchiFanelli15}, according to which, if some structural parameter does not change across the regimes, there are some gains in terms of identification for all the structural parameters of the SVAR-WB. 

As for the single SVARs, it is not difficult to see that Definitions \ref{def:globalWB} and \ref{def:localWB} refer to the existence of a $Q\in \Ons$ orthogonal matrix such that the two block-diagonal matrices
\begin{equation}
	\label{eq:AvsAtilde}
	\left(\begin{array}{ccc}
	A_{10} & &\\
	& \ddots &\\
	& & A_{s0}
	\end{array}\right)\hspace{1cm}\text{and}\hspace{1cm}
		Q^\prime \left(\begin{array}{ccc}
	A_{10} & &\\
	& \ddots &\\
	& & A_{s0}
	\end{array}\right) 
\end{equation}
give rise to the same VAR-WB covariance matrix
\begin{eqnarray}
	\left(\begin{array}{ccc}
	\Sigma_{1} & &\\
	& \ddots &\\
	& & \Sigma_{s}
	\end{array}\right) & = & \left(\begin{array}{ccc}
	A_{10}^{-1} & &\\
	& \ddots &\\
	& & A_{s0}^{-1}
	\end{array}\right)\left(\begin{array}{ccc}
	A_{10}^{-1\prime} & &\\
	& \ddots &\\
	& & A_{s0}^{-1\prime}
	\end{array}\right)\nonumber\\
	& = & \left(\begin{array}{ccc}
	A_{10}^{-1} & &\\
	& \ddots &\\
	& & A_{s0}^{-1}
	\end{array}\right)QQ^{\prime}\left(\begin{array}{ccc}
	A_{10}^{-1\prime} & &\\
	& \ddots &\\
	& & A_{s0}^{-1\prime}
	\end{array}\right)	\label{eq:Sigma_A0}
\end{eqnarray}
where the block-diagonal\footnote{The matrix $Q$ must be block diagonal to guarantee $\mathbold{\Sigma}$ to be block diagonal. As an example, let 
\begin{equation}
	\nonumber
	\left(\begin{array}{cc}\Sigma_1&\\&\Sigma_2\end{array}\right)=
	\left(\begin{array}{cc}A_{10}^{-1}&0\\0&A_{20}^{-1}\end{array}\right)
	\left(\begin{array}{cc}Q_1&Q_{12}\\Q_{21}&Q_2\end{array}\right)
	\left(\begin{array}{cc}Q_1^\prime&Q_{21}^\prime\\Q_{12}^\prime&Q_2^\prime\end{array}\right)
	\left(\begin{array}{cc}A_{10}^{-1\prime}&0\\0&A_{20}^{-1\prime}\end{array}\right).
\end{equation}
With simple algebra we obtain the two conditions $A_{10}^{-1}Q_{12}Q_{12}^\prime A_{10}^{-1\prime}=0$ and $A_{20}^{-1}Q_{21}Q_{21}^\prime A_{20}^{-1\prime}=0$ to hold. However, given that both $A_{10}$ and $A_{20}$ are invertible, then it must hold $Q_{12}=0$ and $Q_{21}=0$. The generalization is straightforward.} orthogonal matrix $Q$ is defined as
\begin{equation}
	\label{eq_bigP}
	Q=\left(\begin{array}{ccc}
	Q_{1} & & \\
	& \ddots & \\
	&  			 & Q_{s}
	\end{array}\right)\nonumber. 
\end{equation}

This clearly corresponds to say that, local or global identification for the SVAR-WB is related to the existence of an orthogonal matrix $Q_p$, such that for the \textit{p}-th regime the two sets of parameters $(A_{p0},A_{p+})$ and $(Q^\prime A_{p0},Q^\prime A_{p+})$ are observationally equivalent in the parametric space or in the neighborhood of a specific point. 

Combining Eq. (\ref{eq:Hfunction}) with Eq. (\ref{eq:Sigma_A0}) clearly shows as the identification issue is related with the mapping between the structural- and reduced-form parameters, as denoted by the $h$ function introduced before.\footnote{This does correspond to the orthogonal reduced-form parameterization as in the spirit of \cite{ARW18}.} Thus, given a reduced-form parameter point $\Btot\in \mathbb{P}^R$, if the set $\big\{\Atot\in\mathbb{P}^S:\:\Atot=h^{-1}\Btot\big\}$ is a singleton, then we have global (or point) identification, if it is a set of numerable isolated points we have local identification, and finally, if the set is open, we have set identification.\footnote{This does not complete all the possibilities. \cite{BK19} show that, for locally-identified SVARs, given a reduced-form parameter point, all the associated structural-form parameters obtained through the mapping function $h$ could be not real.}

%%%%%%%%%%%%%%%%%%%%%%%%%%%%% Subsection: Restrictions  %%%%%%%%%%%%%%%%%%%%%%%%%%%%%%%%%%%%%%%%%%%

\subsection{Identifying restrictions}
\label{sec:ident_restr}

%It is well known that a standard SVAR model is not identified and the common practice to get identification is to restrict the parametric space. Recent contributions, instead, have proposed alternative approaches that, instead of introducing restrictions on the parameters, search for further information contained in the data. \cite{Rigobon03REStat} and \cite{LanneLutkepohl08JMCB} exploit different heteroskedasticity regimes featuring the data to identify the structural parameters of simultaneous equations models and SVARs without imposing restrictions, imposing however that the structural parameters do not change in the different regimes, that are only generated by the different variability of the structural shocks. \cite{BacchiocchiFanelli15} provide a more general framework that considers changes in the structural parameters too. Our theory of identification relates to this literature and genaralizes the \cite{BacchiocchiFanelli15} contribution in two directions: a) we introduce a wider set of possible restrictions on the structural parameters and b) provide conditions for global rather than local identification.

\cite{BacchiocchiFanelli15} propose a SVAR-WB with linear restrictions on $\Api$, as in line with the large part of contributions in the standard SVAR literature, both triangular (e.g. \citeauthor{CEE96}, \citeyear{CEE96}) and non-triangular (e.g. \citeauthor{SimsZha06AER}, \citeyear{SimsZha06AER}). In this perspective, we extend their contribution in three directions. Firstly, we consider a more general class of restrictions based on linear and non-linear functions of the structural parameters as those on the impulse responses proposed by \cite{BlanchardQuah89AER}, \cite{Gali92} and RWZ. In this section we show how to consider these restrictions in a SVAR-WB framework, including the possibility of constraints across the regimes. Secondly, as largely diffused in modern applied macroeconomics, we allow for possible sign restrictions on the parameters, on impulse responses, and on FEVs. Thirdly, as a consequence of the previous one, we open the door to set identified SVAR-WBs.

\vspace{0.5cm}
\noindent
\textit{Equality restrictions}
\vspace{0.5cm}

As in RWZ we consider restrictions on a $g\times n$ transformation of the structural parameters in each regime, through a function $G(\cdot)$ with domain $U_p\subset \Theta_{p}$. Moreover, given that some structural parameters, or transformations of them through $G(\cdot)$, could remain unchanged across the regimes, we define a very general form of imposing restrictions that generalizes RWZ as follows 
\begin{equation}
	\label{eq:G}
	\underset{f_{j}\times sg}{\Rj}\underset{sg\times n}{\left(\begin{array}{c}G\left(\Atf\right)\\ \vdots\\ G\left(\Ats\right)\end{array}\right)}
	\underset{n\times 1}{e_j}=\underset{f_{j}\times 1}{0},\hspace{0.3cm}\text{for}\hspace{0.3cm}1\leq j\leq n  
\end{equation}
where $\left(\Ap,\Apd\right)$ are the structural parameters of the \textit{p}-th regime, and $e_j$ is the \textit{j}-th column of the $n\times n$ identity matrix $I_n$. For simplifying the notation, we also define $\textbf{G}\Atot\equiv\left(\begin{array}{c}G\left(\Atf\right)\\ \vdots\\ G\left(\Ats\right)\end{array}\right)$. The $f_{j}\times sg$ matrix $\Rj$ is a selection matrix allowing to impose restrictions on the \textit{j}-th structural shock of all the regimes. Through the $\Rj$ matrix we allow to impose equality restrictions on the structural parameters within each regime, or to restrict jointly parameters belonging to different regimes. In particular, we consider separately the columns of the transformed parameters $G(\cdot)$, but allow parameters of the same column, but belonging to different regimes, to be linearly related. The form of imposing restrictions described in Eq. (\ref{eq:G}) clearly generalizes RWZ; in fact, if there is no break, the $\Rj$ matrix simply collects the restrictions of the \textit{j}-th column of $G(A_0^\prime,A_{+}^\prime)$, exactly as in RWZ. In Section \ref{sec:Id} and Appendix \ref{app:examples} we will go back to this way of imposing the restrictions with more details and examples.

The rank of $\Rj$ is $f_{j}$, indicating the number of linearly independent restrictions on the \textit{j}-th shock of all the regimes. The total number of restrictions for the whole SVAR-WB, instead, is $f=\sum_{j=1}^{n}f_{j}$. Moreover, as the ordering of the columns is arbitrary, we use the suggestion by RWZ and assume that
\begin{equation}
	\label{eq:ordering}
	f_{1}\geq f_{2}\geq \cdots \geq f_{n}
\end{equation}
that simplifies the identification analysis discussed in the next sections. 

Similarly to RWZ, the transformation $G(\cdot)$ must satisfy a set of conditions in order to be admissible. In particular, the first one imposes the transformation to be the same in all the regimes, while the second states that left multiplication by an orthogonal matrix in the generic \textit{p}-th regime must commute with the transformation. These two conditions, as well as two more technical ones, necessary for the formal proofs of the identification conditions, are formalized as follows.
\begin{condition}
	\label{cond:Gequal}
	The function $G(\cdot)$ is the same in all the regimes.
\end{condition}
\begin{condition}
	\label{cond:LeftMult}
	For the \textit{p}-th regime, the transformation $G(\cdot)$, with the domain $U_p$, is admissible if and only if for any 
	$\Qp \in \O$ and $\left(\Ap,\Apd\right)\in U_p$, $G\left(\Apt\Qp,\Apdt\Qp\right)= G\left(\Apt,\Apdt\right)\Qp$. 
\end{condition}
\begin{condition}
	\label{cond:Regular}
  The transformation $\textbf{G}(\cdot)$, with the domain $\mathbb{P}^S$, is regular if and only if $\mathbb{P}^S$ is open and 
	$\textbf{G}$ is continuously differentiable with $\textbf{G}^\prime\Atot$ of rank $sgn$ for all $\Atot\in \mathbb{P}^S$.
\end{condition}
\begin{condition}
	\label{cond:StrRegular}
  For the \textit{p}-th regime, the transformation $\textbf{G}(\cdot)$, with the domain $\mathbb{P}^S$, is strongly regular 	if and only if it is regular and $\textbf{G}(\mathbb{P}^S)$ is dense in the set of $sg\times n$ matrices.
\end{condition}

It is well known that the restrictions alone are not able to uniquely identify the parameters of SVAR models, as simple multiplication by any diagonal $n\times n$ matrix $D$ with plus or minus ones along the diagonal has the effect of changing the sign to each 
row of $\left(\Ap,\Apd\right)$ leaving unchanged their relationships with those of the reduced form. The model is thus not identified and a normalization rule is required. 

\begin{defin}[Normalization]
	\label{def:norm} For each regime, a normalization rule can be characterized by a set $N_p\subset \Theta_p$ such that for any parameter point $\left(\Ap,\Apd\right)\in \Theta_p$, there exists a unique $n\times n$ diagonal matrix $D$ with plus or minus 
	ones along the diagonal such that $\left(\Apt D,\Apdt D\right) \in N_p$, $p\in\{1,\ldots,s\}$. Furthermore, as might be expected, we fix all the normalization rules to be identical, i.e. $N_1=\ldots=N_s\equiv N$.
\end{defin}

\vspace{0.5cm}
\noindent
\textit{Restrictions on the parameters as restrictions on the orthogonal matrices $Q_1,\ldots,Q_s$}
\vspace{0.5cm}

\noindent
The way we impose restrictions on the parameters, as described in Eq. (\ref{eq:G}), can be easily reconciled with the strategy of constraining the orthogonal matrices mapping the reduced-form and the structural-form parameters, as in the spirit of \cite{ARW18}, 
\cite{GK18} and \cite{BK19} for traditional SVARs. Focusing on the \textit{p}-th regime and following the intuition in \cite{Uhlig2005}, we define $A_{p0}$ as a rotation of the Cholesky factor of $\Sigma_p$, i.e. $A_{p0}=Q_p^\prime\Cholpi$, which will admit the 
decomposition $\Sigma_p=\Api\Apit$. The function $\Gp$, using the mapping between the reduced-form and the structural-form parameters, can be rewritten as $\Gp=G(\Cholpit Q_p,B_p^\prime Q_p)=G(\Cholpit,B_p^\prime)Q_p$, where the last result follows from Condition \ref{cond:LeftMult}, before. Extending the simple result to all regimes allows to write the restrictions in Eq. (\ref{eq:G}) as 
%\begin{equation}
%	\label{eq:GQ}
%	\Rj\left(\begin{array}{c}G(\Sigma_{1,tr}^{-1\prime}\,,\,B_{1+}^\prime)\:Q_1\\ \vdots\\ 
%	G\left(\Sigma_{s,tr}^{-1\prime}\,,\,B_{s+}^\prime\right) Q_s\end{array}\right)e_j=0 
%	\hspace{0.3cm}\Longleftrightarrow \hspace{0.3cm}
%	\Rj\left(\begin{array}{c}G\left(\Sigma_{1,tr}^{-1\prime}\,,\,B_{1+}^\prime\right)q_{1,j}\\ \vdots\\ 
%	G\left(\Sigma_{s,tr}^{-1\prime}\,,\,B_{s+}^\prime\right)q_{s,j}\end{array}\right)=0,
%	\hspace{0.3cm}\text{for}\hspace{0.3cm}1\leq j\leq n
%\end{equation}
\begin{eqnarray}
	\Rj\left(\begin{array}{c}G(\Sigma_{1,tr}^{-1\prime}\,,\,B_{1+}^\prime)\:Q_1\\ \vdots\\ 
	G\left(\Sigma_{s,tr}^{-1\prime}\,,\,B_{s+}^\prime\right) Q_s\end{array}\right)e_j=0 
	& \hspace{0.3cm}\Longleftrightarrow \hspace{0.3cm} &
	\Rj\left(\begin{array}{c}G\left(\Sigma_{1,tr}^{-1\prime}\,,\,B_{1+}^\prime\right)q_{1,j}\\ \vdots\\ 
	G\left(\Sigma_{s,tr}^{-1\prime}\,,\,B_{s+}^\prime\right)q_{s,j}\end{array}\right)=0,\nonumber\\
	&&\nonumber\\
	& \hspace{0.3cm}\Longleftrightarrow \hspace{0.3cm} &
	\mathbfcal{R}_j(\phi,Q) = 0,
	\hspace{0.3cm}\text{for}\hspace{0.3cm}1\leq j\leq n \label{eq:GQ}
\end{eqnarray}
with $\phi\equiv\Btot\in \mathbb{P}^R$, and where it clearly emerges as the restrictions on the structural parameters become restrictions on the columns of the orthogonal matrices. Moreover, this way of writing the restrictions allows to manage constraints across the regimes that, as said before, is crucial for the aim of the present paper. In practice, other than imposing restrictions within each regime as in RWZ or \cite{GK18}, for instance, we allow the coefficients on the same equation but in different regimes to remain unchanged, or the (short- or long-run) impact of a shock on a specific variable to be the same across regimes.\footnote{Actually, the restrictions in Eq.s (\ref{eq:G})-(\ref{eq:GQ}) are much more versatile, but these cases are the most interesting ones in empirical applications.} 

\vspace{0.5cm}
\noindent
\textit{Sign and ranking restrictions}
\vspace{0.5cm}

\noindent
An alternative, or in addition, to equality restrictions could be to impose more restrictive sign restrictions than normalization ones. When sign restrictions to impulse responses or to structural parameters are imposed alone, as in \cite{Uhlig2005}, we obtain identified sets rather than point (global or local) identification. When they complement (zero) equality restrictions, as in \cite{ARW18} and \cite{GK18}, they allow to tighten the identified sets. 

Moreover, in a recent contribution, \cite{AHD21} propose to consider ranking restrictions on impulse responses or on structural parameters of SVARs in order to narrow the identified sets generally obtained through sign restrictions. Such ranking restrictions can be found in previous investigations, or come from micro data on heterogeneous industries or households.\footnote{\cite{AHD21} introduce heterogeneous and elasticity restrictions, that essentially impose a ranking on the relative magnitude of impulse responses or elasticities of certain variables. Moreover, they consider slope restrictions, that consist in ranking the response of the same variable over different horizons.} Our SVAR-WB framework offers a natural extension to their approach. In fact, other than imposing ranking restrictions within each regime, exactly as advocated by the two authors, we open the possibility to ranking restrictions across regimes. As an example, we can have information that, at the same horizon, the response of a variable to a particular shock is higher or lower in a regime relative to another one. 

As for the equality restrictions, sign and ranking restrictions can be seen as further constraints on the columns of $Q_1,\ldots,Q_s$ matrices, jointly. Moreover, when extending to ranking restrictions across regimes, the constraints involve the columns of all $Q_1,\ldots,Q_s$, jointly. The way we can write such constraints is
%\begin{equation}
%	\label{eq:SignRestr_jp}
%	\Lj\left(\begin{array}{c}G(\Sigma_{1,tr}^{-1\prime}\,,\,B_{1+}^\prime)\:Q_1\\ \vdots\\ 
%	G\left(\Sigma_{s,tr}^{-1\prime}\,,\,B_{s+}^\prime\right)
%	Q_s\end{array}\right)e_j\,\geq\,\textbf{0}
%	\hspace{0.3cm}\text{for}\hspace{0.3cm}1\leq j\leq n, 
%\end{equation}
\begin{equation}
	\label{eq:SignRestr_jp}
	\Lj\left(\begin{array}{c}G(\Sigma_{1,tr}^{-1\prime}\,,\,B_{1+}^\prime)\:Q_1\\ \vdots\\ 
	G\left(\Sigma_{s,tr}^{-1\prime}\,,\,B_{s+}^\prime\right)
	Q_s\end{array}\right)e_j\,\geq\,\textbf{0}
	\hspace{0.3cm}\Longleftrightarrow \hspace{0.3cm}
	\mathbfcal{L}_j(\phi,Q)\,\geq\, 0,
	\hspace{0.3cm}\text{for}\hspace{0.3cm}1\leq j\leq n, 
\end{equation}
denoting a restriction on the sign or on the ranking of the function $G(\cdot)$ of the parameters, subject to the \textit{j}-th shock on the different regimes. The selection matrix $\Lj$  will be made of $1$, $-1$, or possible known parameters $\lambda$, in order to respect the particular sign denoted in Eq. (\ref{eq:SignRestr_jp}). 

\vspace{0.5cm}
\noindent
\textit{Restrictions on the Forecast Error Variance (FEV)}
\vspace{0.5cm}

For the generic regime $p$, following \cite{V22}, the FEV at horizon $\tilde{h}$ is
\begin{equation}
\label{eq:FEVp}
	FEV_p(\tilde{h})=\sum_{h=0}^{\tilde{h}}IR_p^h\,IR_p^{h\prime}.  
\end{equation}
Focusing on the contribution of shock $j$ to the $FEV$ of variable $i$ at horizon $\tilde{h}$, and for the regime $p$, we have
\begin{eqnarray}
	CFEV_{p,ij}(\tilde{h})& = & \frac{\sum_{h=0}^{\tilde{h}}\left(e_i^\prime IR_p^h e_j\right)^2}
													{\sum_{z=1}^{n}\sum_{h=0}^{\tilde{h}}\left(e_i^\prime IR_p^h e_z\right)^2}
													%\nonumber\\
												%&   & \nonumber\\
												%& = & 
												=
												\frac{\sum_{h=0}^{\tilde{h}}\left(e_i^\prime C_{ph}\Sigma_{p,tr}Q_p e_j\right)^2}
													{\sum_{z=1}^{n}\sum_{h=0}^{\tilde{h}}\left(e_i^\prime C_{ph}\Sigma_{p,tr}Q_p e_z\right)^2}\nonumber\\
												&   & \nonumber\\
												& = & q_{p,j}^\prime \Upsilon_{p,i}^{\tilde{h}}(\phi_p)q_{p,j}, \label{eq:CFEVp} 
\end{eqnarray}
with $\phi_p$ denoting the set of reduced-form parameters in the regime $p$, and where
$\Upsilon_{p,i}^{\tilde{h}}(\phi_p)=\frac{\sum_{h=0}^{\tilde{h}}\left(e_i^\prime C_{ph}\Sigma_{p,tr}\right)^\prime \left(e_i^\prime C_{ph}\Sigma_{p,tr}\right)} 
		{\sum_{h=0}^{\tilde{h}}\left(e_i^\prime C_{ph}\Sigma_{p,tr}\right)\left(e_i^\prime C_{ph}\Sigma_{p,tr}\right)^\prime}$, as shown in \cite{U04}. \cite{V22} proposes bound restrictions on the forecast error variance decomposition that, in our framework, within each regime, read as
\begin{equation}
\label{eq:fevRestBasic}
	\underline{k}_{p,ij}^{\tilde{h}}\leq q_{p,j}^\prime\Upsilon_{p,i}^{\tilde{h}}(\phi_p)q_{p,j}\leq\overline{k}_{p,ij}^{\tilde{h}},
\end{equation}
where $\underline{k}_{p,ij}^{\tilde{h}}$ and $\overline{k}_{p,ij}^{\tilde{h}}$ do represent the lower and upper \textit{known} bounds. 

A natural extension to SVAR-WBs is to jointly consider the relative effect of the shock $j$ on the FEV of variable $i$ in two or more regimes. As an example, we can impose a restriction of the form
\begin{equation}
\label{eq:fevRestEx}
	q_{p_1,j}^\prime\Upsilon_{p_1,i}^{\tilde{h}}(\phi_{p_1})q_{p_1,j} - q_{p_2,j}^\prime\Upsilon_{p_2,i}^{\tilde{h}}(\phi_{p_2})q_{p_2,j} \geq 0,
\end{equation}
indicating that, at horizon $\tilde{h}$, the effect of shock $j$ on the FEV of variable $i$ is larger in regime $p_1$ than in regime $p_2$. A very general way of writing this kind of restrictions is
\begin{gather}
\label{eq:fevRestGen}
	\underline{k}_{p,ij}^{\tilde{h}}\leq e_j^\prime \left(\begin{array}{c}Q_1\\ \vdots\\Q_s\end{array}\right)^\prime
	\left(\begin{array}{ccc}
	\gamma_{1,1}\Upsilon_{1,i}^{\tilde{h}}& &\\ &\ddots&\\&&\gamma_{1,s}\Upsilon_{s,i}^{\tilde{h}}
	\end{array}\right)
	\left(\begin{array}{c}Q_1\\ \vdots\\Q_s\end{array}\right)e_j \nonumber\\
	\hspace{5cm}-e_r^\prime \left(\begin{array}{c}Q_1\\ \vdots\\Q_s\end{array}\right)^\prime
	\left(\begin{array}{ccc}
	\gamma_{2,1}\Upsilon_{1,i}^{\tilde{h}}& &\\ &\ddots&\\&&\gamma_{2,s}\Upsilon_{s,i}^{\tilde{h}}
	\end{array}\right)
	\left(\begin{array}{c}Q_1\\ \vdots\\Q_s\end{array}\right)e_r 
	\leq\overline{k}_{p,ij}^{\tilde{h}}\nonumber
\end{gather}
with $1\leq j,r\leq n$, and where the $\gamma$'s are known selection scalars.\footnote{We have omitted $(\phi_p)$ from $\Upsilon_{p,i}^{\tilde{h}}(\phi_p)$ for the sake of simplicity.}
Finally, let 
\begin{equation}
\label{eq:fevRest}
	\underline{K}_i\leq \mathbfcal{F}_i(\phi,Q) \leq \overline{K}_i, \text{ for}\hspace{0.3cm}1\leq i\leq n
\end{equation}
be the compact way to represent the set of all the FEV restrictions on the variable $i$ within or across the regimes. 

Such restrictions do generalize those originally presented in \cite{V22} in two directions: (a) we allow for restrictions involving different shocks, as in the spirit of \cite{CV22}, within the same regime and, (b) we allow for restrictions regarding FEVs in different regimes. This latter, is totally new in the literature. However, the price to pay when extending to multiple shocks and multiple regimes is that deriving analytical results for checking non-emptiness of the identified sets is extremely complicated, as discussed in \cite{AHD21}, for which analytical conditions are derived only for tri-variate SVARs for ranking restrictions. For FEV restrictions, \cite{CV22} provide conditions for existence and uniqueness of a solution in the generalization of the Max Share identification approach allowing for simultaneous identification of a multiplicity of shocks. Their approach, however, does not consider restrictions across the regimes, that is a peculiarity of our methodology. The solution that we propose in this paper is to check for non-emptiness only numerically, that, of course, is less precise, although potential errors are expected to reduce for a large number of checks.

From hereafter, with the wording \textit{inequality restrictions} we generally mean to sign, normalization, ranking and FEV restrictions.
 
\vspace{0.7cm}
\noindent
\textit{Restricted parameters and identified set}
\vspace{0.4cm}

\noindent
Based on all the kinds of restrictions developed here before, for any reduced-form parameter $\phi=\Btot\in \mathbb{P}^R$ we can now define the set of admissible parameters or, equivalently, the admissible restrictions, of the SVAR-WB as follows:
\begin{eqnarray}
		\Ar & \equiv & \Bigg\{\Atot\in \mathbb{P}^S\cap N,\; \Bigg|
		\mathbfcal{R}_j(\phi,Q)=0,\;\mathbfcal{L}_j(\phi,Q)\,\geq\, 0,\;\underline{K}_i\leq \mathbfcal{F}_i(\phi,Q) \leq \overline{K}_i,\nonumber\\ 
		& & \hspace{2cm}i,j\in\{1,\ldots,n\},\; p\in\{1,\ldots,s\}\Bigg\}.\label{eq:R}
\end{eqnarray}
Importantly, according to the previous Conditions \ref{cond:Gequal} to \ref{cond:StrRegular}, if the transformation is admissible and (strongly) regular, then the restrictions in Eq. (\ref{eq:R}) will be admissible and (strongly) regular, too. 

Equivalently, but from a different perspective, it could be interesting to look at the set of all the admissible matrices $(Q_1,\ldots,Q_s)$ that, conditional on the reduced-form parameters, satisfy the inequality restrictions:
\begin{eqnarray}
	\Qr & \equiv & \Bigg\{\left(\begin{array}{ccc}Q_1&&\\&\ddots&\\&&Q_s\end{array}\right),\;Q_p\in\O,p\in\{1,\ldots,s\}\Bigg|
		\mathbfcal{R}_j(\phi,Q)=0,\;\mathbfcal{L}_j(\phi,Q)\,\geq\, 0,\;\nonumber\\ 
	& & \underline{K}_i\leq \mathbfcal{F}_i(\phi,Q) \leq \overline{K}_i,\; \big(\Sigma_{p,tr}^{-1\prime}Q_p D\,,\,B_{p+}^\prime Q_p D\big)\in N_p,\; i,j\in\{1,\ldots,n\},\; p\in\{1,\ldots,s\}\Bigg\}.\nonumber\label{eq:Qrestr}
\end{eqnarray}
Moreover, once defined the set $\Qr$, it could be of interest to collect the admissible orthogonal matrices operating in each regime. In this respect, for $p\in\{1,\ldots,s\}$ we define
\begin{equation}
\label{eq:Qp}
	\Qrp\equiv \bigg\{\big(e_p \otimes I_n\big)^\prime Q \big(e_p \otimes I_n\big)\:\bigg|\:Q\in \Qr\bigg\}\nonumber.
\end{equation}
with $e_p$ being the \textit{p}-th column of the identity matrix $I_n$.

Finally, as one might be interested not simply on the structural parameters but on transformations of them, like impulse response functions, it is also important to define what we call the identified set. Specifically for the \textit{p}-th regime, 
we define the identified set
\begin{equation}
\label{eq:ISp}
%	\IS\equiv \bigcup\limits_{p=1}^{s}\Bigg\{\eta(\phi_p,Q_p)\:\Bigg|\:
%								\left(\begin{array}{ccc}Q_1&&\\&\ddots&\\&&Q_s\end{array}\right)\in\Qr\Bigg\}\nonumber.
	\ISp\equiv \bigg\{\eta(\phi,Q_p)\:\Bigg|\:Q_p\in\Qrp\bigg\}\nonumber.
\end{equation}
with $\eta(\phi,Q_p)$ being the transformation of the structural parameters one is interested in, in the \textit{p}-th regime. 

Furthermore, considering the SVAR-WB as a whole. we define the set 
\begin{equation}
\label{eq:IS}
	\IS\equiv \bigg\{\Big(\eta(\phi,Q_1),\ldots,\eta(\phi,Q_s)\Big)\:\Bigg|\:\eta(\phi,Q_p)\in\ISp,\:p\in\{1,\ldots,s\}\bigg\}.
	\nonumber
\end{equation}

%%%%%%%%%%%%%%%%%%%%%%%%%%%  Subsection: Global vs Local identification %%%%%%%%%%%%%%%%%%%%%%%%%%%%%%%%%%%%%%%%%%%

\subsection{Global vs local identification in SVAR-WBs}
\label{sec:GlvsLoc}

The work by RWZ represents a milestone in the identification of SVARs as it extends the notion of global identification to this class of models too, that were considered to be simply locally identifiable as presenting restrictions on the covariance matrix (see, for example, \citeauthor{Magnus_Neudecker_2007}, \citeyear{Magnus_Neudecker_2007}, Chapter 16). A key element in their contribution is to allow for zero-restrictions only, and to avoid any forms of constraint either across shocks or across equations. Moreover, they show that the SVAR is globally identified if and only if such zero-restrictions follow a sort of recursive scheme (according to their Theorem 7). This result is extremely powerful for different reasons. First, given the parameters of the reduced form, there exists a unique orthogonal matrix mapping to unique structural parameters coherent with the imposed restrictions; second, such a matrix always exists; third, there is a very simple algorithm for deriving this matrix. 

\cite{BK19}, instead, investigate all the cases in which the kind of restrictions imposed leads to local identification. Interestingly, they also show, through a very simple bi-variate example, that SVAR-WBs, as originally proposed by \cite{BacchiocchiFanelli15}, might face the feature of local identification.

Investigating the problem more in depth, this result is not surprising and, furthermore, can be generalized to all SVAR-WBs where we have gains in terms of identification. In fact, the gains can be obtained when there are restrictions across equations, or shocks, belonging to different regimes. This situation is not compatible with the kind of constraints considered in RWZ. Instead, they are coherent with the \citeauthor{BK19}'s (\citeyear{BK19}) framework, providing evidence of local identification.\footnote{This does not mean that SVAR-WBs cannot be globally identified; however, as we will see in details in the next section, this case can only happen when each regime is globally identified, meaning that there will be no relationships among the regimes and the analysis can be performed regime-by-regime independently.} For this reason, in the following sections, the identification analysis of SVAR-WBs is pursued, at best, by focusing on the local identification case.

\subsection{Conditions for identification in SVAR-WB}
\label{sec:Id}

In Section \ref{sec:ident_def} we have seen that lack of global identification arises when, at least in one regime, there is an orthogonal matrix $Q_p$ that makes the sets of admissible parameters $(A_{p0},A_{p+})$ and $(Q_pA_{p0},Q_pA_{p+})$ observationally equivalent. When focusing on local identification, as we are obliged to do in this paper, such orthogonal matrix takes the form of an infinitesimal rotation $Q_p=(I_n+H_p)$. Such infinitesimal rotation has the feature that $H_p$ is a skew-symmetric (or sometimes denoted by emi-symmetric or anti-symmetric or anti-metric) matrix, i.e. $H_p^\prime=-H_p$.\footnote{For more details, See \cite{Lucchetti06ET} or \cite{BL18}.} 

\begin{defin}[Admissible infinitesimal rotations]
	\label{def:AdmRot} The infinitesimal rotations $\Qp=(I_n+H_p)$, $p=\{1,\ldots,s\}$, are admissible if 
	\begin{equation}
		\label{eq:AdmRot}
		\Rj\left(\begin{array}{c}G\left(A_{10}^\prime Q_1,A_{1+}^\prime Q_1\right)\\ 
		\\ G\left(A_{s0}^\prime Q_s,A_{s+}^\prime Q_s\right)\end{array}\right)e_j=0,
		\hspace{0.3cm}\text{for}\hspace{0.3cm}1\leq j\leq n. 
	\end{equation}
\end{defin}

In other words, we consider an orthogonal matrix that locally rotates the structural parameters of the \textit{p}-th regime. If such an infinitesimal rotation is admissible according to Definition \ref{def:AdmRot}, i.e. the new parameters continue to satisfy 
the restrictions, then the SVAR-WB cannot be locally identified. Overall, if we consider the parameter point $\Atot\in R$, the restrictions should guarantee that, for each regime, the unique infinitesimal admissible rotation matrix be the identity matrix $I_n$. 
In what follows, we discuss on how to check about the existence of such infinitesimal admissible rotation matrices. We first introduce a necessary order condition and then move to more restrictive sufficient rank conditions. 

\newpage
%\vspace{0.5cm}
\noindent
\textit{Necessary order condition for identification of SVAR-WB}
\vspace{0.5cm}

\noindent
The Rothenberg order condition for a SVAR without any breaks says that the number of restrictions must be at least equal to $n(n-1)/2$. The other side of the same coin states that the number of free parameters to estimate cannot exceed $n(n+1)/2+mn$ (where $m=nl+1$ and $l$ is the number of lags).

The contributions by \cite{Rigobon03REStat} and \cite{LanneLutkepohl08JMCB} show that, if a) there is heteroskedasticity in the data generating two (or more) volatility regimes and b) the contemporaneous parameters of the structural form do not change across the regimes ($A_{10}=A_{20}=A_0$ in our notation), than all the parameters of the SVAR can be identified (up to an appropriate normalization rule) without the need to impose any restriction. It clearly emerges that in the case of two volatility regimes with the specific parametrization by \cite{Rigobon03REStat} and \cite{LanneLutkepohl08JMCB} the Rothenberg's order condition is no longer necessary.

However, it is also clear that under this parametrization, the SVARs in the two regimes produce the same impulse responses, once standardized for the shock to have the same magnitude. \cite{BacchiocchiFanelli15}, instead, provide a different parametrization of the 
model that allows for a change in the contemporaneous parameters of the structural form before and after the break. When all the structural parameters, thus, are allowed to change across the regimes, a new order condition is now necessary. 

The next theorem, already presented in \cite{BacchiocchiFanelli15} (Proposition TS1, Supplementary Material), provides a necessary order condition for the case of a SVAR-WB with $s$ volatility regimes.

\medskip

\begin{theo}
  \label{theo:OrdCond}
	Consider an SVAR-WB as in Eq. (\ref{eq:SVARWB}) with $s$ regimes and admissible parameters represented by $\Ar$. A necessary condition for the identification of the parameters is that $f\geq s\,\nt$, where $\nt=n(n-1)/2$ .
\end{theo}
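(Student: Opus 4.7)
The approach is a rank--nullity argument: linearise the identifying equality restrictions around the candidate structural point, count the degrees of freedom carried by the resulting infinitesimal rotations, and compare with the number $f$ of equality constraints.

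First I would exploit the characterisation of observational equivalence in (\ref{eq:AvsAtilde})--(\ref{eq:Sigma_A0}): two points of $\mathbb{P}^S$ map to the same reduced form under $h$ iff they differ by a block-diagonal orthogonal matrix $Q=\mathrm{diag}(Q_1,\ldots,Q_s)$ with each $\Qp\in\O$. Consequently each fibre $h^{-1}(\phi)$ is an orbit of the Lie group $\O^{\,s}$, whose dimension is $s\nt$. Local identification at $\Atot\in\Ar$ demands that the restrictions isolate this point inside the fibre, so they must kill all $s\nt$ directions tangent to the orbit.

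Next I would pass to the infinitesimal problem. By Definition \ref{def:AdmRot}, every admissible local rotation has the form $\Qp=I_n+H_p$ with $H_p^\prime=-H_p$, and each skew-symmetric $H_p$ has exactly $\nt$ free entries (the strict upper triangle), so the full vector of candidate infinitesimal rotations $(H_1,\ldots,H_s)$ lies in a real vector space of dimension $s\nt$. Substituting $\Qp=I_n+H_p$ into (\ref{eq:AdmRot}), using Condition \ref{cond:LeftMult} to factor the rotation through $G$, and keeping only first-order terms (the zero-order contributions vanish because $\Atot$ already satisfies the restrictions), the admissibility conditions reduce to a homogeneous linear system
\begin{equation*}
\Rj\begin{pmatrix} G(\Atf)\,H_1 \\ \vdots \\ G(\Ats)\,H_s\end{pmatrix}e_j \;=\; 0,\qquad 1\leq j\leq n,
\end{equation*}
which I would stack into a single $f\times s\nt$ coefficient matrix $\mathcal{M}(\phi)$ acting on the free parameters of $(H_1,\ldots,H_s)$. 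For local identification the only admissible solution must be $H_1=\cdots=H_s=0$, i.e.\ $\mathcal{M}(\phi)$ must have full column rank $s\nt$. Since $\mathrm{rank}(\mathcal{M})\leq \min(f,s\nt)$, this is impossible whenever $f<s\nt$, yielding $f\geq s\nt$. Note that the sign, ranking and FEV restrictions appearing in $\Ar$ cut out open subsets rather than lower-dimensional subvarieties of $\mathbb{P}^S$ and therefore contribute nothing to the codimension count; this is precisely why only $f$ (the count of equality restrictions) enters the inequality.

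The main technical obstacle is the bookkeeping in the linearisation. The selection matrices $\Rj$ can mix rows drawn from different regimes, so one has to verify that the Taylor expansion produces exactly the system above and that the skew-symmetry constraint $H_p^\prime=-H_p$ is properly exploited to reduce each block of unknowns from $n^2$ to $\nt$. Regularity of $\mathbf{G}$ (Condition \ref{cond:Regular}) ensures the expansion is well-defined. Once the system $\mathcal{M}(\phi)(H_1,\ldots,H_s)=0$ is written down correctly, the order condition is an immediate consequence of the trivial bound $\mathrm{rank}(\mathcal{M})\leq f$.
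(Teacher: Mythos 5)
Your proposal is correct and follows essentially the same route as the paper: the paper's own proof of Theorem \ref{theo:OrdCond} reduces admissibility of infinitesimal rotations $Q_p=I_n+H_p$ to a homogeneous linear system in the $s\nt$ free entries of the skew-symmetric matrices $H_1,\dots,H_s$, with $f$ equations, and observes that the trivial solution can be unique only if $f\geq s\nt$ (formally obtained in Appendix \ref{app:Proofs} by summing the per-shock bounds $f_j\geq s(n-j)$). Your version works directly with the full stacked $f\times s\nt$ system rather than shock by shock, and your remarks on the orbit dimension of $\mathcal{O}(n)^s$ and on inequality restrictions not affecting the codimension count are consistent with, though not needed for, the paper's argument.
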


\medskip
 
\begin{proof} 
	The formal proof will be discussed in the next section, as it directly comes out as a by-product of the proof of Theorem \ref{theo:SuffCond}. Here below we simply provide the intuition behind the result of 
	Theorem \ref{theo:OrdCond}. As largely discussed before, lack of identification is caused by the presence of orthogonal admissible matrices $Q_1,\ldots,Q_s$ that generate new admissible structural parameters. 
	In the case of local identification, such admissible matrices take the form of $Q_p = (I_n+H_p)$, $p=\{1,\ldots,s\}$, with $H_1,\,\ldots\,,H_p$ skew-symmetric matrices. However, being skew-symmetric, 
	each of them is characterized by $\nt=n(n-1)/2$ distinct elements. As we will see here below, a sufficient condition for identification is based on the solution of a homogeneous system of equations for the 
	elements in $H_1,\,\ldots\,,H_p$, where the number of equations is given by the number of restrictions. If the unique solution to such a system is represented by the null vector, then the SVAR-WB is 
	locally identified. However, a necessary condition for this to occur is that the number of equations to be at least equal to the number of unknowns, i.e. $f\geq s\nt$, as stated by Theorem \ref{theo:OrdCond}.
\end{proof}
	
%The previous theorem generalizes the Rothenberg's order condition for traditional SVARs to the case of a SVAR-WBs with $s$ regimes. Furthermore, when there are just two regimes and the restrictions are imposed on the responses on impact ($A_{01}^{-1}$ and $A_{02}^{-1}$), as expected, although the different specification of the two models, the previous result corresponds to the order condition provided by \cite{BacchiocchiFanelli15} (see their main identification results reported in Proposition 1, pages 768-769).

\newpage
%\vspace{0.5cm}
\noindent
\textit{Sufficient rank condition for identification of SVAR-WB}
\vspace{0.5cm}

\noindent
In this section we introduce a condition that is sufficient for identification and that considers not only quantitatively but also qualitatively the restrictions imposed on the parameters. Before going to the sufficient condition, however, we need to discuss about an alternative way to write the restrictions. Let, for simplicity, $G_p \equiv G(A_{p0}^\prime,A_{p+}^\prime)$, $p=1,\ldots,s$. The restrictions defined in Eq. (\ref{eq:G}), that we indicate as ``implicit form'', can be compacted as
\begin{eqnarray}
		\underset{f_{j}\times sg}{\left(\begin{array}{ccc}
		R_{11,j} & \cdots & R_{1s,j}\\
		\vdots  & \ddots & \vdots\\        
		R_{s1,j} & & R_{ss,j}
		\end{array}\right)}
		\underset{sg\times n}{\left(\begin{array}{c}
		G_1\\
		\vdots\\
		G_s
		\end{array}\right)}
		\underset{n \times 1}{e_j}=
		\underset{f_{j}\times 1}{\left(\begin{array}{c}
		0\\
		\vdots\\
		0
		\end{array}\right)} 
		&\Longleftrightarrow&
		\underset{f_{j}\times sg}{\left(\begin{array}{ccc}
		R_{1,j}^* & \cdots & R_{s,j}^*
		\end{array}\right)}
		\underset{sg\times n}{\left(\begin{array}{c}
		G_1\\
		\vdots\\
		G_s
		\end{array}\right)}
		\underset{n \times 1}{e_j}=
		\underset{f_{j}\times 1}{\left(\begin{array}{c}
		0\\
		\vdots\\
		0
		\end{array}\right)}\nonumber\\
		&\Longleftrightarrow& \hspace{1cm} \Rj \left(\begin{array}{c}
		G_1\\
		\vdots\\
		G_s
		\end{array}\right) e_j = 0 \label{eq:ImpForm}
\end{eqnarray}
where $R_{pp,j}$ are the restrictions in each of the regimes, while $R_{pq,j}$ are the across-regime restrictions, all referring to the \textit{j}-th structural shock. More compactly, we use the notation $R_{p,j}^*\equiv\big(R_{1p,j}^\prime,\ldots,R_{sp,j}^\prime\big)^\prime$ collecting all the restrictions imposed on the \textit{j}-th shock in the \textit{p}-th regime. An equivalent way for writing these restrictions, that we indicate as ``explicit form'', is
\begin{eqnarray}
		\underset{sg\times 1}{\left(\begin{array}{c}
		G_1\\
		\vdots\\
		G_s
		\end{array}\right)}\underset{n \times 1}{e_j}
		=\underset{sg \times \tau_j}{\left(\begin{array}{ccc}
		S_{11,j} & \cdots & S_{1s,j}\\
		\vdots  & \ddots & \vdots\\        
		S_{s1,j} & & S_{ss,j}
		\end{array}\right)}
		\underset{\tau_j\times 1}{\theta_j}
		&\Longleftrightarrow &
		\underset{sg\times 1}{\left(\begin{array}{c}
		G_1\\
		\vdots\\
		G_s
		\end{array}\right)}\underset{n \times 1}{e_j}
		=\underset{sg \times \tau_j}{\left(\begin{array}{c}
		S_{1,j}^*\\
		\vdots\\
		S_{s,j}^*
		\end{array}\right)}
		\underset{\tau_j\times 1}{\theta_j}\nonumber\\
		&\Longleftrightarrow & \hspace{1cm}\left(\begin{array}{c}
		G_1\\
		\vdots\\
		G_s
		\end{array}\right) e_j = \Sj\,\theta_j
		\label{eq:ExpForm}
\end{eqnarray}
where $\theta_j$ is the vector containing the free parameters associated to the \textit{j}-th shock in all the regimes, $S_{p,j}^*\equiv\big(S_{p1,j}\ldots,S_{ps,j}\big)$, and $\Sj\equiv \textbf{R}_{j,\perp}$, i.e. the orthogonal complement of $\Rj$. The number of free parameters contributing to the identification of the \textit{j}-th shock in all the regimes is equal to $\tau_j=sg-f_j$, and $\tau=\tau_1+\ldots+\tau_n$. 

For $j,k\in\{1,\ldots,n\}$ and $p\in\{1,\ldots,s\}$, let
\begin{equation}
\label{eq:Vjik}
	V_{j,p,k}\equiv\left(\,R_{p,j}^*\cdot S_{p,k}^*\,\right)
\end{equation}
and 
\begin{equation}
\label{eq:Vjtheta}
	\setlength{\dashlinegap}{1pt}
	\underset{f_j\times s(n-j)}{\Vtj} \equiv \left[\begin{array}{c:c:c:c}
	V_{j,1,j+1}\theta_{j+1}\;|\ldots|\;V_{j,1,n}\theta_n & V_{j,2,j+1}\theta_{j+1}\;|\ldots|\;V_{j,2,n}\theta_n 
	& \ldots & V_{j,s,j+1}\theta_{j+1}\;|\ldots|\;V_{j,s,n}\theta_n\end{array}\right].
\end{equation}

The next theorem provides a sufficient condition for the SVAR-WB to be locally identified at a certain point in the parameter space.

\medskip

\begin{theo}[Sufficient rank condition for identification]
  \label{theo:SuffCond}
	Consider an SVAR-WB as in Eq. (\ref{eq:SVARWB}) with $s$ regimes and admissible parameters represented by $\Ar$. 
	The SVAR-WB is locally identified at the parameter point $\Atot\in \Ar$ if, for $j\in\{1,\ldots,n-1\}$, 
	\begin{equation}
		\label{eq:RkSuff}
		\rk\:\bigg( \,\textbf{V}_j(\theta_0) \,\bigg) \:=\:s(n-j),
	\end{equation}
	where $\theta_0=\big(\theta_{0,1}^\prime,\theta_{0,2}^\prime,\cdots,\theta_{0,n}^\prime\big)^\prime$ is such that 
	$\bigg(\textbf{G}\Atot\bigg)e_j=\textbf{S}_j\theta_{0,j}$, for every $j\in\{1,\ldots,n\}$.	
\end{theo}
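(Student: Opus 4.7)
The approach is to reduce local identification to a linearized system and then solve it column-by-column by induction on $j$. By Definition \ref{def:localWB} and the discussion around Eqs. (\ref{eq:AvsAtilde})--(\ref{eq:Sigma_A0}), local identification at $\Atot$ amounts to ruling out non-trivial block-diagonal orthogonal $Q=\operatorname{diag}(Q_1,\dots,Q_s)$ close to $I_{sn}$ that preserves admissibility. Parametrizing each factor by a local chart of $O(n)$ at $I_n$, write $Q_p=I_n+H_p+O(\|H_p\|^2)$ with $H_p$ skew-symmetric (the tangent space of $O(n)$ at the identity). To first order, the admissibility condition in Definition \ref{def:AdmRot}, together with Condition \ref{cond:LeftMult} giving $G(A_{p0}^\prime Q_p,A_{p+}^\prime Q_p)=G_pQ_p$, becomes the linear system
\begin{equation*}
\Rj\begin{pmatrix}G_1H_1e_j\\ \vdots\\ G_sH_se_j\end{pmatrix}=0,\qquad j=1,\dots,n,
\end{equation*}
and by a standard implicit-function-theorem argument (using the regularity in Conditions \ref{cond:Regular}--\ref{cond:StrRegular}), showing that the unique skew-symmetric solution is $H_1=\dots=H_s=0$ will imply local identification at the given point.

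\textbf{Rewriting the linearized system via the explicit form.} The key step is to replace the columns of $G_p$ by their explicit-form representation $G_pe_k=S_{p,k}^*\theta_{0,k}$ from Eq. (\ref{eq:ExpForm}). Using skew-symmetry ($h_{p,jj}=0$ and $h_{p,kj}=-h_{p,jk}$) and expanding $G_pH_pe_j=\sum_{k}h_{p,kj}G_pe_k$, the linearized admissibility at shock $j$ becomes
\begin{equation*}
\sum_{p=1}^s\Bigl(-\sum_{k<j}h_{p,jk}\,R_{p,j}^*S_{p,k}^*\theta_{0,k}+\sum_{k>j}h_{p,kj}\,R_{p,j}^*S_{p,k}^*\theta_{0,k}\Bigr)=0,
\end{equation*}
which, by the definition $V_{j,p,k}=R_{p,j}^*S_{p,k}^*$ in Eq. (\ref{eq:Vjik}), reads
\begin{equation*}
\sum_{p=1}^s\Bigl(-\sum_{k<j}h_{p,jk}V_{j,p,k}\theta_{0,k}+\sum_{k>j}h_{p,kj}V_{j,p,k}\theta_{0,k}\Bigr)=0. \qquad(\ast)
\end{equation*}

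\textbf{Induction on $j$.} I would argue by induction from $j=1$ upward. For $j=1$ the first inner sum in $(\ast)$ is empty, so the condition becomes $\textbf{V}_1(\theta_0)\,\eta_1=0$, where $\eta_1\in\Re^{s(n-1)}$ collects the entries $\{h_{p,k1}:p=1,\dots,s,\;k=2,\dots,n\}$ in the same order as the columns of $\textbf{V}_1(\theta_0)$ in Eq. (\ref{eq:Vjtheta}). The rank condition $\rk \textbf{V}_1(\theta_0)=s(n-1)$ forces $\eta_1=0$; combined with $h_{p,11}=0$ and skew-symmetry, the first column \emph{and} first row of every $H_p$ vanish. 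For the inductive step at column $j$, the induction hypothesis yields $h_{p,jk}=0$ for all $k<j$, so the first inner sum in $(\ast)$ drops out and only the free entries $\eta_j=\{h_{p,kj}:p=1,\dots,s,\;k=j+1,\dots,n\}\in\Re^{s(n-j)}$ remain, giving $\textbf{V}_j(\theta_0)\,\eta_j=0$; the hypothesis $\rk\textbf{V}_j(\theta_0)=s(n-j)$ kills $\eta_j$. Repeating for $j=1,\dots,n-1$ exhausts all independent entries of each skew-symmetric $H_p$, so $H_p=0$ for every $p$, and local identification follows. This simultaneously yields the promised proof of Theorem \ref{theo:OrdCond}: the rank conditions require $f_j\geq s(n-j)$ for every $j$, and summing gives $f=\sum_j f_j\geq s\sum_{j=1}^{n-1}(n-j)=s\nt$.

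\textbf{Main obstacle.} The routine part is the bookkeeping that lines up the block structure of $\textbf{V}_j(\theta_0)$ in Eq. (\ref{eq:Vjtheta}) with the order in which the free entries of $(H_1,\dots,H_s)$ are stacked into $\eta_j$; this must be done so that the $s$ blocks corresponding to different regimes and the $n-j$ columns within each block match correctly. The genuinely delicate step is the passage from linearized admissibility back to local identification: one must justify that the first-order analysis on skew-symmetric $H_p$ truly captures all nearby orthogonal perturbations of each $A_{p0}$, i.e., that higher-order curvature of $O(n)$ cannot create admissible rotations invisible at first order. The standard route is to invoke a full-column-rank Jacobian argument (the implicit function theorem) on the admissibility map; here the block-diagonal constraint on $Q$ (as emphasized in the footnote following Eq. (\ref{eq:Sigma_A0})) keeps the tangent analysis restricted to the product of $s$ copies of the tangent space of $O(n)$, which is exactly what the conditions of the theorem verify.
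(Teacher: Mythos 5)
Your proposal is correct and follows essentially the same route as the paper's proof: linearize via admissible infinitesimal rotations $Q_p=I_n+H_p$ with $H_p$ skew-symmetric, substitute the explicit form $G_pe_k=S^*_{p,k}\theta_{0,k}$ to turn the restriction equations into the homogeneous systems $\textbf{V}_j(\theta_0)\eta_j=0$, and proceed recursively in $j$, using skew-symmetry to eliminate the $k<j$ terms once the earlier columns of the $H_p$ are shown to vanish. The paper performs the same algebra via the vec/Kronecker identity $G_pH_{p,j}=(H_{p,j}^\prime\otimes I_g)\ve G_p$ rather than your direct column expansion, and likewise obtains the order conditions of Theorem \ref{theo:OrdCond} and Corollary \ref{corol:OrderCond} as the same byproduct.
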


\medskip
 
\begin{proof} 
	See the appendix.
\end{proof}

\cite{BacchiocchiFanelli15}, in their Proposition TS1 (Supplementary Material), propose a necessary and sufficient condition for local identification of SVAR-WBs. There are two main differences with respect to our Theorem \ref{theo:SuffCond}. The former is on the fact that we deal with the identification shock-by-shock, while they treat the identification issue on the SVAR-WB as a whole. The advantage, in our strategy, is that we can focus on the identification of the shocks of interest, only, leaving the SVAR-WB, as a whole, not identified. This point is better discussed in the next section, focusing on partial identification in SVAR-WBs. The latter, instead, is about the kind of restrictions used for identifying the model. In the present paper we consider a much wider family of restrictions, including impulse responses at different horizons. \cite{BacchiocchiFanelli15}, instead, focus on restrictions on the response on impact, only. However, they allow for non-homogeneous restrictions, while we only deal with zero restrictions. The advantage of our way of imposing restrictions, as we will see, is that it naturally allows for extending to the set identification case.

\begin{corol}
  \label{corol:OrderCond}
	Consider an SVAR-WB as in Eq. (\ref{eq:SVARWB}) with $s$ volatility regimes, with admissible parameters represented by $\Ar$. 
	If the sufficient rank condition of Theorem \ref{theo:SuffCond} is satisfied, then $f_{j}\geq s(n-j)$, for $j=1,\ldots,n-1$.
\end{corol}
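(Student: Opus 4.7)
The corollary is essentially a direct order-condition consequence of the rank condition in Theorem \ref{theo:SuffCond}, so the plan is very short. I would argue as follows.

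First, recall from the definition in Eq.\ (\ref{eq:Vjtheta}) that $\tVtj$ is a matrix of dimension $f_{j}\times s(n-j)$, since for each $j$ it is built by stacking side-by-side, for every regime $p\in\{1,\ldots,s\}$ and every shock index $k\in\{j+1,\ldots,n\}$, the column vectors $V_{j,p,k}\theta_{k}$, each living in $\mathbb{R}^{f_j}$. The total number of such columns is $s\cdot (n-j)$, and the row dimension is $f_{j}$ by construction of the selection matrix $\Rj$ that appears through $R^*_{p,j}$ in $V_{j,p,k}$.

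Next, I would invoke the elementary linear-algebra fact that for any real matrix $M\in\mathbb{R}^{a\times b}$, $\rk(M)\leq \min(a,b)$. Applying this to $M=\tVtj$ evaluated at $\theta_{0}$, the hypothesis of Theorem \ref{theo:SuffCond}, namely
\begin{equation*}
\rk\bigl(\textbf{V}_{j}(\theta_{0})\bigr)=s(n-j),
\end{equation*}
forces $s(n-j)\leq\min\bigl(f_{j},s(n-j)\bigr)$, and in particular $s(n-j)\leq f_{j}$. Doing this for every $j\in\{1,\ldots,n-1\}$ yields the claim.

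There is no real obstacle here: the corollary is just the row-count interpretation of a full column-rank requirement on the matrix $\tVtj$. The only bookkeeping step worth spelling out is verifying the asserted dimensions of $\tVtj$ from Eq.\ (\ref{eq:Vjtheta})---i.e.\ that each block $V_{j,p,k}\theta_{k}$ has exactly $f_{j}$ rows (inherited from $R^*_{p,j}$ having $f_j$ rows) and one column, and that there are precisely $s(n-j)$ such blocks. Finally, summing the inequalities $f_{j}\geq s(n-j)$ over $j=1,\ldots,n-1$ recovers, with $f_n\geq 0$, the necessary order condition $f\geq s\tilde{n}$ of Theorem \ref{theo:OrdCond}, which is consistent with the intuition already outlined in its proof sketch.
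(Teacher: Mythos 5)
Your proof is correct and follows essentially the same route as the paper: the paper argues that the homogeneous system $\textbf{V}_j(\theta)\,h=0$ in the $s(n-j)$ unknown below-diagonal entries of the $j$-th columns of $H_1,\ldots,H_s$ can have only the trivial solution when the number of equations $f_j$ is at least the number of unknowns, which is exactly your observation that an $f_j\times s(n-j)$ matrix can have full column rank only if $f_j\geq s(n-j)$. The dimension bookkeeping you spell out for $\Vtj$ matches Eq.~(\ref{eq:Vjtheta}), so nothing is missing.
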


\medskip
 
\begin{proof} 
	The proof immediately follows from Theorem \ref{theo:SuffCond}. In the system of equations (\ref{eq:RadoSys}) (in Appendix \ref{app:Proofs}), the number of equations is at least as large as the number of the unknowns, i.e. the 
	below diagonal elements of the \textit{j}-th column of $H_1$, $H_2$, $\ldots$, $H_s$.
\end{proof}

The corollary provides an interesting result that can be seen as a necessary condition for identification of \textit{recursive} SVAR-WB. Put differently, if the condition of the corollary is met, then we can use Theorem \ref{theo:SuffCond} to effectively check whether the SVAR-WB is locally identified. On the contrary, if the condition of the corollary does not hold, it does not mean that the model is not identified, it simply states that it is not recursive, and other strategies have to be used for studying identification, as we will see in the following Theorem \ref{theo:NecSuff}. % (and Theorem \ref{theo:NecSuffStruct} in Appendix \ref{app:StructCond}).

\vspace{0.5cm}
\noindent
\textit{Partial identification in SVAR-WB}
\vspace{0.5cm}

\noindent
The result obtained in Theorem \ref{theo:SuffCond} easily allows to check whether just a subset of structural shocks are identified. Precisely, a formal definition of what we mean for partial identification is stated as follows. 
 
\begin{defin}[Partial identification]
\label{def:PartIdent}
	The \textit{j}-th structural shock is locally identified at the parameter point $\Atot\in \Ar$ if and only if there does not exist in an open neighborhood about $\Atot$ another observationally equivalent parameter point 
	$\left(\tilde{\mathcal{A}}_0,\tilde{\mathcal{A}}_+\right)$ such that $A_{p0}^\prime e_j\neq \tilde{A}_{p0}^\prime e_j$, $A_{p+}^\prime e_j\neq \tilde{A}_{p+}^\prime e_j$, $\forall \; p\in \AllReg$,
	where $e_j$ is the \textit{j}-th column of the identity matrix. 
\end{defin}

The following theorem provides a sufficient rank condition for a subset of shocks to be locally identified at a certain point in the parameter space $\Ar$.

\medskip
 
\begin{theo}[Partial identification]
  \label{theo:RankPartial}
	Consider an SVAR-WB as in Eq. (\ref{eq:SVARWB}) with $s$ regimes and admissible parameters represented by $\Ar$. The j-th structural shock is locally identified at the parameter point $\Atot\in \Ar$ if, for $i\in\{1,\ldots,j\}$, 
	\begin{equation}
		\label{eq:RkSuffPartial}
		\rk\:\bigg( \,\textbf{V}_i(\theta_0) \,\bigg) \:=\:s(n-i),
	\end{equation}
	where $\theta_0=(\theta_{0,1}^\prime,\theta_{0,2}^\prime,\cdots,\theta_{0,n}^\prime)^\prime$ is such that $\bigg(\textbf{G}\Atot\bigg)e_j=\textbf{S}_j\theta_{0,j}$, for every $j\in\{1,\ldots,n\}$.	
\end{theo}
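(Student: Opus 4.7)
The plan is to adapt the argument used for Theorem \ref{theo:SuffCond} but to terminate the sequential column-by-column elimination at step $j$ instead of going all the way to $n-1$. Recall that local identification at $\Atot$ amounts to showing that the only admissible infinitesimal rotations $Q_p = I_n + H_p$, with each $H_p$ skew-symmetric, are those with $H_p = 0$. For partial identification of the $j$-th shock (Definition \ref{def:PartIdent}), it is enough to establish that the $j$-th column of every $H_p$ vanishes, since this is exactly what fixes the $j$-th column of $A_{p0}$ (and, through the mapping, of $A_{p+}$) under admissible rotations.

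First, I would translate the admissibility condition (\ref{eq:AdmRot}) into a linear system in the entries of the $H_p$'s. Substituting $Q_p = I_n + H_p$ and using Condition \ref{cond:LeftMult}, the zeroth-order terms cancel because the base point is admissible, leaving
\[
\sum_{p=1}^{s} R_{p,i}^{*} \, G_p H_p e_i \;=\; 0,\qquad i \in \{1,\ldots,n\}.
\]
Expanding $G_p H_p e_i = \sum_{k \neq i}(H_p)_{ki}\, G_p e_k$ and using the explicit form $G_p e_k = S_{p,k}^{*} \theta_{0,k}$ from (\ref{eq:ExpForm}) turns each of these into a homogeneous linear system in the off-diagonal entries of the $H_p$'s, with block coefficients $R_{p,i}^{*} S_{p,k}^{*} = V_{i,p,k}$ as defined in (\ref{eq:Vjik}).

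Next, I would proceed inductively on $i = 1, 2, \ldots, j$. Skew-symmetry gives $(H_p)_{ki} = -(H_p)_{ik}$, so once the first $i-1$ columns of every $H_p$ have been shown to be zero, the corresponding rows are also zero, and the $i$-th equation decouples from these earlier columns: only the $s(n-i)$ unknowns $(H_p)_{ki}$ with $k > i$ and $p \in \{1,\ldots,s\}$ remain. Arranging them into a vector $x$, the equation reads $\textbf{V}_i(\theta_0)\, x = 0$ with coefficient matrix as in (\ref{eq:Vjtheta}) with $j$ replaced by $i$. The hypothesis $\rk(\textbf{V}_i(\theta_0)) = s(n-i)$ then forces $x = 0$, eliminating the $i$-th column of every $H_p$. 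Iterating from $i=1$ up to $i=j$ in particular delivers the $j$-th column, and Definition \ref{def:PartIdent} concludes local identification of shock $j$.

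The main obstacle is to make the decoupling step rigorous, i.e.\ to verify that after the first $i-1$ columns of every $H_p$ have been set to zero, the surviving equations really do assemble into the matrix $\textbf{V}_i(\theta_0)$ prescribed by (\ref{eq:Vjtheta}). This relies on the ordering convention (\ref{eq:ordering}) together with a careful accounting of which skew-symmetric entries appear at each stage. Once this is in place, Theorem \ref{theo:RankPartial} is exactly the first $j$ steps of the proof of Theorem \ref{theo:SuffCond}, and in particular inherits the necessity discussion leading to Corollary \ref{corol:OrderCond} (which then yields $f_i \geq s(n-i)$ for $i=1,\ldots,j$ as a necessary order condition for partial identification of shock $j$).
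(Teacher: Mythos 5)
Your proposal is correct and follows essentially the same route as the paper: the published proof of Theorem \ref{theo:RankPartial} simply invokes the sequential, shock-by-shock elimination carried out in the proof of Theorem \ref{theo:SuffCond} (systems (\ref{eq:RadoSys})--(\ref{eq:RadoSysComp2})), truncated after the first $j$ columns, with the skew-symmetry of the $H_p$'s providing exactly the decoupling you describe. Your accounting of which entries survive at stage $i$ and your remark on the induced order condition $f_i\geq s(n-i)$ match the paper's argument and Corollary \ref{corol:OrderCond}.
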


\medskip

\begin{proof} 
	The proof follows directly from the proof of Theorem \ref{theo:SuffCond}.
\end{proof}

\vspace{0.5cm}
\noindent
\textit{Local identification almost everywhere}
\vspace{0.5cm}

\noindent
The previous sufficient conditions presented in Theorems \ref{theo:SuffCond} and \ref{theo:RankPartial} focus on a specific point in the parametric space $\Ar$. However, it is possible to prove that, if the conditions hold in one parameter point, then they hold almost everywhere in the parameter space. The following theorem formalizes this result for the SVAR-WBs. However, we need first to define the set, 
\begin{equation}
	\label{eq:k}
	K = \big\{\Atot\in \Ar \hspace{0.3cm}\left| \hspace{0.3cm} \text{the sufficient rank condition holds}\right. \big\}.
\end{equation}
The set $K$ collects all the parameter points satisfying the admissible restrictions and for which the sufficient rank condition in Theorem \ref{theo:SuffCond} holds. This set is clearly open, implying that if the sufficient condition holds for a certain point in 
the parameter space $\Ar$, it will also holds in the neighborhood of such point. Furthermore, as reported in the next theorem, if the rank condition is satisfied for a point in the parameter space $\Ar$, the model will be locally identified in almost all points in 
the parameter space.

\medskip
 
\begin{theo}
  \label{theo:IdentEverywhere}
	Consider an SVAR-WB with admissible and regular restrictions represented by $\Ar$. Either the set $K$ is empty, or the complement of $K$ in $\Ar$ is of measure zero in $\Ar$.
\end{theo}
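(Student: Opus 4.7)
The plan is to recognise that the set $K$ is defined by the non-vanishing of finitely many real-analytic functions on the open set $\Ar$, so the dichotomy in the statement reduces to the classical fact that the zero locus of a non-trivial real-analytic function has Lebesgue measure zero. I would proceed as follows.

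First, I would rewrite the rank condition of Theorem \ref{theo:SuffCond} as the non-vanishing, for each $j\in\{1,\ldots,n-1\}$, of at least one $s(n-j)\times s(n-j)$ minor of $\textbf{V}_j(\theta(\Atot))$. Enumerating these minors as $M_{j,\alpha}(\Atot)$, I would observe that each $M_{j,\alpha}$ is a real-analytic function of $\Atot$ on the open set $\Ar$: this follows from Condition \ref{cond:Regular} combined with the fact that in the SVAR-WB setup, the admissible transformation $\textbf{G}$, the linear recovery $\theta_j=(\textbf{S}_j^\prime\textbf{S}_j)^{-1}\textbf{S}_j^\prime\textbf{G}\Atot e_j$, and the block structure of $\textbf{V}_j$ are all rational (hence analytic) operations on the entries of $\Atot$.

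Next, I would decompose $K=\bigcap_{j=1}^{n-1} K_j$, where $K_j=\{\Atot\in\Ar:\max_\alpha |M_{j,\alpha}(\Atot)|>0\}$. Lower semi-continuity of matrix rank makes each $K_j$ open, hence $K$ itself is open (which in passing confirms the claim made in the text immediately preceding the theorem that ``this set is clearly open''). The complement $\Ar\setminus K$ equals $\bigcup_j(\Ar\setminus K_j)$, and each $\Ar\setminus K_j$ is the common zero set of the analytic functions $\{M_{j,\alpha}\}_\alpha$.

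Finally, suppose $K$ is non-empty and pick $\Atot^{\ast}\in K$. For every $j$, at least one index $\alpha_j$ satisfies $M_{j,\alpha_j}(\Atot^{\ast})\neq 0$, so $M_{j,\alpha_j}$ is a non-trivial real-analytic function on $\Ar$. Invoking the classical fact that the zero set of a non-trivial real-analytic function on a connected open set has Lebesgue measure zero, applied component by component if $\Ar$ is disconnected, I conclude that $\Ar\setminus K_j$ has measure zero in $\Ar$, and hence so does $\Ar\setminus K$ as a finite union of measure-zero sets. The main obstacle is the analyticity justification: if one is only willing to invoke the $C^1$ smoothness guaranteed by Condition \ref{cond:Regular}, the argument must be replaced by a Sard-type argument (or by appealing to the constant-rank theorem applied to the minors). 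However, the typical SVAR-WB transformations $G(\cdot)$ used in practice — contemporaneous coefficients, impulse responses and long-run cumulative responses — are all rational in $\Atot$, so the analytic route gives the cleanest conclusion.
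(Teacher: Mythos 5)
Your argument hinges on the minors $M_{j,\alpha}$ being real-analytic functions of $\Atot$, and this is exactly where it breaks down under the hypotheses of the theorem. Condition \ref{cond:Regular} only guarantees that $\textbf{G}$ is continuously differentiable with a full-rank derivative; it does not make $\textbf{G}$ rational or analytic. You acknowledge this, but your proposed fallback for the merely-$C^1$ case does not work: Sard's theorem controls the measure of the set of critical \emph{values}, not of the zero set, and the zero set of a non-constant $C^1$ (even $C^\infty$) function can have positive Lebesgue measure. So restricting to ``typical'' rational transformations proves a weaker statement than the theorem claims. A second, more technical gap is the appeal to the identity theorem ``component by component'': you only establish non-triviality of $M_{j,\alpha_j}$ on the connected component of $\Ar$ containing $\Atot^{\ast}$. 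Since $\Ar$ is in general disconnected (invertibility of the $A_{p0}$'s, the normalization set $N$ and the inequality restrictions all disconnect the parameter space), every minor could vanish identically on some other component, which would then lie entirely in $\Ar\setminus K$ and have positive measure; your argument does not rule this out.

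The paper's proof avoids both problems by never treating the rank condition as a function of $\Atot$ directly. It notes that $\textbf{V}_j$ depends on $\Atot$ only through the free-parameter vector $\theta$, so $\det\big(\textbf{V}_j(\theta)^\prime\textbf{V}_j(\theta)\big)$ is a \emph{fixed polynomial on the connected space} $\Re^{\tau}$; the polynomial zero-set fact then gives the global dichotomy (identically zero, hence $K=\emptyset$, versus a null zero set) with no connectedness caveat. The resulting null set is then transported back to $\Ar$ in two steps — through the full-rank linear map $S^{*}$ and through $\textbf{G}$ — using Lemma \ref{lemmaRWZ2}, which states that preimages of null sets under $C^1$ maps with everywhere full-rank derivative are null. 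That lemma is precisely what lets the argument go through with only the $C^1$ regularity of Condition \ref{cond:Regular}. If you want to salvage your approach, route the non-vanishing of the minors through $\theta$-space (where they are genuine polynomials) and then pull back, rather than asserting analyticity in $\Atot$.
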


\medskip
 
\begin{proof} 
	See the appendix.
\end{proof}

The theorem generalizes to SVAR-WB the results obtained by RWZ for the global identification of SVARs, by \cite{BL18} for the local identification in SVARs, and by \cite{Johansen95} for simultaneous equation models and cointegrating systems. This result is extremely powerful in that, as originally discussed by \cite{Giannini92} for checking the local identification of SVAR models, the rank condition can be checked at any random value of the parameter space. If it is verified in that single point, it will be verified 
almost everywhere in $\Ar$. 

Importantly, Theorem \ref{theo:IdentEverywhere} adds important ingredients to practically check local identification in SVAR-WBs. The sufficient condition, in fact, can be easily checked directly on $\Vtj$, for $j\in\{1,\ldots,n\}$. The next algorithm provides the details.

\begin{algo}
\label{algo:IdRecursive}
	Consider an SVAR-WB with admissible and regular restrictions represented by $\Ar$. Let the ordering of the equations be as indicated in Eq. (\ref{eq:ordering}).
	\begin{enumerate}
		\item For $j\in\{1,\ldots,n\}$, let $\theta_j\sim N(0,I_{\tau_j})$ be a draw of a $\tau_j$-variate standard normal random variable.
		\item Set $j=1$.
		\item If, for $i\in\{1,\ldots,j\}$, $\rk\big(\textbf{V}_i(\theta)\big)=s(n-i)$, with $\textbf{V}_i(\theta)$ defined as in Eq. (\ref{eq:Vjtheta}), then all the shocks for $i\in\{1,\ldots,j\}$ 
					are locally identified. If $j=n-1$, then the SVAR-WB is locally identified and STOP; otherwise set $j=j+1$ and repeat Step 3. 
					If instead $\rk\big(\textbf{V}_i(\theta)\big)<s(n-i)$ for some $i\in\{1,\ldots,j\}$, go back to Step 1.
		\item Repeat Steps 1 to 3 for a maximum of $N$ times. If for all times $\rk\big(\textbf{V}_i(\theta)\big)<s(n-i)$ for some $i\in\{1,\ldots,j\}$, we conclude that the imposed	
					identifying restrictions do not achieve local identification. STOP.
	\end{enumerate}
\end{algo}

In the previous algorithm, if the SVAR-WB is identified, $N=1$ is sufficient. If, instead, the model is not identified, the rank condition should be checked for a large $N$. Even in this latter case, it continues to be very fast. The implementation of the algorithm is extremely simple and allows to check both the local identification of a subset of structural shocks, or the local identification of the SVAR-WB as a whole.

\vspace{0.5cm}
\noindent
\textit{Necessary and sufficient rank condition for identification of SVAR-WB}
\vspace{0.5cm}

The sufficient rank condition reported in Theorem \ref{theo:SuffCond} attacks the identification issue by investigating the model shock-by-shock, although jointly for all the regimes. This is allowed, as stated in Corollary \ref{corol:OrderCond}, by the fact that 
the number of restrictions is enough to potentially guarantee the matrix $\Vtj$ to have full column rank, for $j\in\{1,\ldots,n-1\}$. The sufficient condition, thus, cannot be applied to SVAR-WBs in which the imposed restrictions do not follow the particular ``recursive'' scheme described in Corollary \ref{corol:OrderCond}, although the overall number of restrictions satisfies the order condition in Theorem \ref{theo:OrdCond}.

In this section we provide a new necessary and sufficient condition that can be used in all identification schemes. It reads as the generalization of the rank condition for identification in SVARs originally proposed by \cite{AGbook} to the case of SVAR-WBs, and it is based on calculating the rank of a particular matrix that depends on the restrictions and the structural parameters of the model. Before presenting the condition, however, we need to define some more elements. Starting from the matrix $V_{j,i,k}$ introduced in Eq. (\ref{eq:Vjik}), we define the matrix
\begin{equation}
\label{eq:tVjtheta}
	\setlength{\dashlinegap}{1pt}
	\underset{f_j\times sn}{\tVtj} = \left[\begin{array}{c:c:c:c}
	V_{j,1,1}\theta_{1}\;|\ldots|\;V_{j,1,n}\theta_n & V_{j,2,1}\theta_{1}\;|\ldots|\;V_{j,2,n}\theta_n & \ldots & V_{j,s,1}\theta_{1}\;
	|\ldots|\;V_{j,s,n}\theta_n\end{array}\right]
\end{equation}
that, differently from $\Vtj$, includes for each regime $p$, all the columns $(V_{j,p,1}\theta_{1}\;|\ldots|\;V_{j,p,n}\theta_n)$, 
and not just those from $j+1$ to $n$, i.e. $\big(V_{j,p,j+1}\theta_{j+1}\;|\ldots|\;V_{j,p,n}\theta_n\big)$, as instead in Eq. (\ref{eq:Vjtheta}).
Moreover, we collect all such $\tVtj$, $j\in \{1,\ldots,n\}$, in the block diagonal matrix 
\begin{equation}
\label{eq:tVtheta}
	\underset{f\times sn^2}{\tVt}=\left(\begin{array}{cccc}
	\tilde{\textbf{V}}_1(\theta) & & &\\
	& \tilde{\textbf{V}}_2(\theta) & &\\
	& & \ddots & \\
	& & & \tilde{\textbf{V}}_n(\theta)
	\end{array}\right),
\end{equation}
where $\theta=(\theta_1^\prime,\theta_2^\prime,\cdots,\theta_n^\prime)^\prime$ and each non-zero element contains information (restrictions and free structural parameters) for the identification of the \textit{j}-th shock in all the regimes. Finally, we introduce the selection matrix, completely known and of dimension $sn^2\times s\nt$, with $\nt=n(n-1)/2$, 
\begin{equation}
\label{eq:Ttt}
	\underset{sn^2\times s\nt}{\Ttt} = \left(\underset{sn\times sn}{\tilde{T}_{n,s}}\otimes \underset{n\times n}{I_n}\right)
																								 \left(\underset{s\times s}{I_s}\otimes \underset{n^2\times \nt}{\tilde{D}_{n}}\right)
\end{equation}
where the $n^2 \times \nt$ full-column rank matrix $\tilde{D}_n$, defined in \cite{magnus88}, is such that for any $\nt$-dimensional vector $h$, it holds $\tilde{D}_n\,h = \text{vec }(H)$, with $H$ a $n\times n$ skew-symmetric matrix ($H = -H^\prime$), 
while the $sn\times sn$ matrix $\tilde{T}_{n,s}$ is defined as
\begin{equation}
	\label{eq:Tt}
	\tilde{T}_{n,s} = \left(\begin{array}{c}I_s\otimes e_1^\prime\\ \vdots\\ I_s\otimes e_n^\prime\end{array}\right)
\end{equation}
with $e_j$ being the \textit{j}-th column of $I_n$. Intuitively, if we define by $h_p$ the $\nt\times 1$ vector containing the distinct elements in the skew-symmetric matrix $H_p$, $p\in\{1,\ldots,s\}$, the matrix $\Ttt$ in Eq. (\ref{eq:Ttt}) allows to transform 
$\ve \left(\begin{array}{c}H_1\\\vdots\\H_s\end{array}\right)$, of dimension $sn^2\times 1$, into $\left(\begin{array}{c}h_1\\\vdots\\h_s\end{array}\right)$, of dimension $s\nt\times 1$. The necessary and sufficient condition for the identification of the SVAR-WB is defined in the next theorem and is based on the following matrix
\begin{equation}
\label{eq:VttTtt}
	\underset{f\times s\nt}{\Vtt} = \tVtz \: \Ttt,
\end{equation}	
that must have full column rank for local identification.

\medskip

\begin{theo}[Necessary and sufficient rank condition for identification]
  \label{theo:NecSuff}
	Consider an SVAR-WB as in Eq. (\ref{eq:SVARWB}) with $s$ volatility regimes, with admissible parameters represented by $\Ar$. The SVAR-WB is locally identified at the parameter point $\Atot\in \Ar$, if and only if 
	\begin{equation}
		\label{eq:RkNecSuff}
		\rk\:\bigg( \,\Vttz \,\bigg) \:=\:s\nt,
	\end{equation}
	where $\nt=n(n-1)/2$ and $\theta_0=(\theta_{0,1}^\prime,\theta_{0,2}^\prime,\cdots,\theta_{0,n}^\prime)^\prime$ is such that 
	$\bigg(\textbf{G}\Atot\bigg)e_j=\textbf{S}_j\theta_{0,j}$, for every $j\in\{1,\ldots,n\}$.	
\end{theo}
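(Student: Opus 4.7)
The plan is to reduce local identification at $\Atot\in\Ar$ to the statement that a homogeneous linear system in the free entries of the skew-symmetric matrices $H_1,\ldots,H_s$ admits only the trivial solution, and then to recognize that system as $\Vttz\,h=0$.

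First I would set up the infinitesimal-rotation argument already employed in Theorem \ref{theo:SuffCond}. Local identification at $\Atot$ is equivalent to the implication: \textit{if $Q_p=I_n+H_p$ with $H_p^\prime=-H_p$ is admissible in the sense of Definition \ref{def:AdmRot}, then $H_p=0$ for every $p$.} Applying Condition \ref{cond:LeftMult} one gets $G(A_{p0}^\prime Q_p,A_{p+}^\prime Q_p)=G_pQ_p=G_p+G_pH_p$; since $\Atot\in\Ar$, the baseline restriction
\[
\Rj\begin{pmatrix}G_1\\ \vdots \\ G_s\end{pmatrix}e_j=0
\]
already holds, and the admissibility condition (\ref{eq:AdmRot}) linearizes to $\sum_{p=1}^{s}R_{p,j}^{*}\,G_p\,H_p\,e_j=0$ for every $j=1,\ldots,n$.

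Next I would rewrite this system in terms of the matrices appearing in the statement. Expanding $G_pH_pe_j=\sum_{k=1}^{n}G_pe_k\,H_{p,kj}$, substituting the explicit-form parameterization $G_pe_k=S_{p,k}^{*}\theta_{0,k}$ from (\ref{eq:ExpForm}), and using $V_{j,p,k}=R_{p,j}^{*}S_{p,k}^{*}$ from (\ref{eq:Vjik}), the $j$-th block of equations becomes
\[
\sum_{p=1}^{s}\sum_{k=1}^{n}\bigl(V_{j,p,k}\theta_{0,k}\bigr)H_{p,kj}=0.
\]
By the very definition of $\tVtj$ in (\ref{eq:tVjtheta}), the left-hand side equals $\tVtj(\theta_0)\,u_j$, where $u_j\in\mathbb{R}^{sn}$ collects the entries $\{H_{p,kj}\}_{p,k}$ in the column order of $\tVtj$. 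Concatenating over $j$ yields the block-diagonal system $\tVtz\,\tilde u=0$, with $\tilde u\in\mathbb{R}^{sn^{2}}$ a rearrangement of the entries of $H_1,\ldots,H_s$.

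Finally I would enforce skew-symmetry. Each $H_p$ has only $\tilde n=n(n-1)/2$ free entries $h_p$, and the selection matrix $\Ttt$ in (\ref{eq:Ttt}) is constructed precisely so that $\tilde u=\Ttt\,h$ with $h=(h_1^\prime,\ldots,h_s^\prime)^\prime\in\mathbb{R}^{s\tilde n}$: the factor $I_s\otimes\tilde D_n$ sends each $h_p$ to $\ve(H_p)$ via the duplication-type matrix of \citeauthor{magnus88}, while $\tilde T_{n,s}\otimes I_n$ permutes $(\ve(H_1)^\prime,\ldots,\ve(H_s)^\prime)^\prime$ into the column ordering assumed by $\tVtz$. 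Substituting yields $\tVtz\,\Ttt\,h=\Vttz\,h=0$, so local identification at $\Atot$ is equivalent to this homogeneous system having only the trivial solution, i.e., to $\rk(\Vttz)=s\tilde n$. The most delicate step is the index bookkeeping in this last paragraph: checking that $(\tilde T_{n,s}\otimes I_n)(I_s\otimes\tilde D_n)$ aligns the $sn^{2}$ columns of $\tVtz$ with the $s\tilde n$ free entries exactly as required by $H_{p,kj}=-H_{p,jk}$ and $H_{p,jj}=0$; once this alignment is verified, the rank equivalence, and hence the theorem, is immediate.
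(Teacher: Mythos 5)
Your proposal is correct and follows essentially the same route as the paper's own proof: linearize the admissibility condition via infinitesimal rotations $Q_p=I_n+H_p$, express the resulting homogeneous system through the $V_{j,p,k}\theta_{0,k}$ columns to obtain $\tVtz$ acting on the stacked columns of $H_1,\ldots,H_s$, and then use $\Ttt=(\tilde T_{n,s}\otimes I_n)(I_s\otimes\tilde D_n)$ to reduce to the $s\tilde n$ free skew-symmetric entries, so that triviality of the solution is equivalent to $\rk(\Vttz)=s\tilde n$. The only difference is presentational: you re-derive the linearization explicitly, whereas the paper imports it from the proof of Theorem \ref{theo:SuffCond}.
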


\medskip
 
\begin{proof} 
	See the appendix.
\end{proof}

%The necessary and sufficient rank condition, as can be seen in Eq. (\ref{eq:RkNecSuff}), depends on the true structural parameters $\theta_0$, that are unknown before the estimation process.
Theorem \ref{theo:NecSuff} can be compared to Proposition TS1 in \cite{BacchiocchiFanelli15}. The main difference is on the kind of restrictions considered. We allow for zero restrictions on the structural parameters and on functions of them, like impulse responses at different horizons. In their proposition, instead, only (zero and non-zero) restrictions on the responses on impact are considered.

The next theorem extends Theorem \ref{theo:IdentEverywhere} and adds practicality to the rank condition in Eq. (\ref{eq:RkNecSuff}). We first define the set
\begin{equation}
	\label{eq:kt}
	\tilde{K} = \big\{\Atot\in \Ar \hspace{0.3cm}\left| \hspace{0.3cm} 
	\text{the necessary and sufficient rank condition holds}\right. \big\}.
\end{equation}
 
\begin{theo}
  \label{theo:IdentEverywhereRank}
	Consider a SVAR-WB with admissible and regular restrictions represented by $\Ar$. Either the set $\tilde{K}$ is empty, 
	or the complement of $\tilde{K}$ in $\Ar$ is of measure zero in $\Ar$.
\end{theo}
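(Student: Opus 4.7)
The plan is to mirror the strategy used for Theorem \ref{theo:IdentEverywhere}: show that the rank condition in Eq. (\ref{eq:RkNecSuff}) is characterized by the non-vanishing of finitely many real-analytic functions on $\Ar$, so that the set $\tilde{K}^c \equiv \Ar \setminus \tilde{K}$ is an analytic subvariety. By the standard fact that the zero set of a non-trivial real-analytic function on a connected open set has Lebesgue measure zero, this subvariety is either all of $\Ar$ (yielding $\tilde{K}=\emptyset$) or of measure zero in $\Ar$.

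First I would verify that every entry of $\Vttz$ depends real-analytically on $\Atot \in \Ar$. The vector $\theta_0$ is obtained coordinate-by-coordinate from the explicit form $\bigl(\textbf{G}\Atot\bigr)e_j = \textbf{S}_j\,\theta_{0,j}$; since $\textbf{S}_j$ is a constant selector matrix (of full column rank by construction) and the admissible transformation $\textbf{G}$ is real-analytic in $\Atot$ for all the restriction types considered in Section \ref{sec:ident_restr} (which are polynomial in $A_{p0}$, $A_{p+}$, and in coefficients of their inverses/VMA expansions), each $\theta_{0,j}$ is a real-analytic function of $\Atot$. Consequently, the blocks $V_{j,p,k}$ defined in Eq. (\ref{eq:Vjik}) and the matrix $\tVtz$ in Eq. (\ref{eq:tVjtheta}) are real-analytic, and since the selector $\Ttt$ in Eq. (\ref{eq:Ttt}) is constant, so is $\Vttz = \tVtz\,\Ttt$.

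Next, I would characterize $\tilde{K}^c$ determinantally. The condition $\rk\bigl(\Vttz\bigr) = s\tilde{n}$ is equivalent to the non-vanishing of at least one of the $s\tilde{n} \times s\tilde{n}$ minors of $\Vttz$. Denote these minors by $M_1(\Atot),\ldots,M_N(\Atot)$; each $M_k$ is real-analytic on $\Ar$, and $\Ar \setminus \tilde{K} = \bigcap_{k=1}^{N}\{\Atot \in \Ar : M_k(\Atot)=0\}$. By Condition \ref{cond:Regular} the ambient space $\mathbb{P}^S$ is open, and the equality restrictions defining $\Ar$ cut out a connected real-analytic submanifold (the regularity of $\textbf{G}^\prime$ guarantees that $\Ar$ is a smooth manifold of the expected dimension). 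Applying the standard dichotomy for real-analytic functions on a connected open/manifold set to each $M_k$, either every $M_k$ vanishes identically on $\Ar$, in which case $\tilde{K}$ is empty, or at least one $M_k$ is not identically zero, in which case $\{M_k=0\}$ has zero Lebesgue measure on $\Ar$ and hence so does the smaller set $\tilde{K}^c$.

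The main obstacle is the analyticity and connectedness check. Once $\textbf{G}$ is real-analytic (automatic for the restriction classes in this paper), the entries of $\Vttz$ inherit this property through the explicit-form linear solve, and the argument reduces to invoking the measure-zero property of analytic zero sets. This is the same mechanism behind Theorem \ref{theo:IdentEverywhere}, applied here to the larger test matrix $\Vttz$ rather than to the block-triangular family $\{\Vtj\}_{j=1}^{n-1}$, so no new technical machinery is required beyond the analytic dependence of $\theta_0$ on $\Atot$.
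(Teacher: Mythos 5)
Your overall mechanism --- detecting full column rank of $\Vttz$ through the non-vanishing of analytic functions whose zero sets are Lebesgue-null --- is the same one the paper uses, and most of your intermediate steps (analyticity of $\textbf{G}$ and of $\theta_0$ via the explicit-form solve, the determinantal characterization of the rank condition) are sound. The paper's own proof differs only cosmetically on this front: it works with the single Gram determinant $\det\big(\Vtt^\prime\Vtt\big)$ rather than your family of maximal minors, and then declares the rest "completely equivalent" to the proof of Theorem \ref{theo:IdentEverywhere}.

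The genuine gap is your claim that "the regularity of $\textbf{G}^\prime$ guarantees that $\Ar$ is a \ldots connected real-analytic submanifold." Condition \ref{cond:Regular} gives a manifold of the expected dimension, but says nothing about the number of connected components; level sets of submersions are routinely disconnected, and here even the ambient set $\mathbb{P}^S$ is built from invertible matrices $A_{p0}$ and so already splits into several components. Without connectedness the dichotomy you invoke fails: a minor $M_k$ can be "not identically zero on $\Ar$" while vanishing on an entire component, so $\bigcap_k\{M_k=0\}$ can have positive measure even though $\tilde{K}\neq\emptyset$ --- precisely the situation the theorem rules out. The paper's argument avoids this by exploiting that $\Vttz$ depends on $\Atot$ only through $\theta_0\in\Re^{\tau}$: the Gram determinant is a polynomial on the connected Euclidean space $\Re^{\tau}$, where the "identically zero or null zero set" dichotomy holds unconditionally, and the resulting null set is then pulled back to $\Ar$ through the explicit-form map $S^*$ and the transformation $\textbf{G}$ using Lemmas \ref{lemmaRWZ1} and \ref{lemmaRWZ2} together with Condition \ref{cond:Regular} (this transfer, not connectedness, is what the regularity hypothesis is actually for). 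Your proof is repaired by making that factorization explicit and applying the analytic dichotomy in $\theta$-space rather than directly on $\Ar$.
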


\medskip
 
\begin{proof} 
	See the appendix.
\end{proof}

The previous theorem allows to check the rank condition for random values of $\theta$. If it is met for at least one $\theta$, then, according to Theorem \ref{theo:IdentEverywhereRank}, the SVAR-WB is identified for almost all structural
parameters in the parametric space. When the condition is not met, instead, one should repeat the check for different random $\theta$s in order to be more convinced that, effectively, the model to be not identified everywhere in the parametric space. This result is completely new in the literature on SVAR-WBs, and is surely an improvement relative to Proposition TS1 in \cite{BacchiocchiFanelli15}.

All the steps for the practical implementation of Theorem \ref{theo:NecSuff} are reported in the next algorithm, that mimics what already presented in the previous Algorithm \ref{algo:IdRecursive}. Given the non-recursive structure of the restrictions, however, the identification is checked for the SVAR-WBs as a whole.

\begin{algo}
\label{algo:IdGeneral}
	Consider an SVAR-WB with admissible and regular restrictions represented by $\Ar$. 
	\begin{enumerate}
		\item Define the matrix $\Ttt$ as described in Eq. (\ref{eq:Ttt}).
		\item For $j\in\{1,\ldots,n\}$, let $\theta_j\sim N(0,I_{\tau_j})$ be a draw of a $\tau_j$-variate standard normal random variable, and
					based on such $\theta_j$ build the matrix $\tVtj$, as described in Eq. (\ref{eq:tVjtheta}). 
		\item Collect all the matrices $\tVtj$, $j\in\{1,\ldots,n\}$, in the block-diagonal matrix $\tVt$ as in Eq. (\ref{eq:tVtheta}), and 
					calculate the matrix $\Vtt = \tVtz \: \Ttt$ as in Eq. (\ref{eq:VttTtt}). 
		\item If $\rk\big(\Vtt\big)=s\nt$, then the SVAR-WB is locally identified and STOP; 
					If instead $\rk\big(\Vtt\big)<s\nt$, go back to Step 2.
		\item Repeat Steps 2 to 4 for a maximum of $N$ times. If for all times $\rk\big(\Vtt\big)<s\nt$, we conclude that the imposed	
					identifying restrictions do not achieve local identification. STOP.
	\end{enumerate}
\end{algo}

The same considerations as for Algorithm \ref{algo:IdRecursive} do apply. 

\newpage
%\vspace{0.5cm}
\noindent
\textit{Just-identified SVAR-WBs}
\vspace{0.5cm}

\noindent
The idea of \textit{just identification} we consider in this section is substantially a generalization of the notion of exact identification proposed by RWZ for SVAR models. In that case, the exact identification is strictly connected to the existence of a unique orthogonal matrix mapping the reduced-form parameters to the restricted structural ones. An equivalent definition is in general precluded for SVAR-WBs in that, as discussed in Section \ref{sec:GlvsLoc}, an SVAR-WB has gains in terms of identification only when restrictions across regimes are imposed. However, this feature generates local identification that, according to \cite{BK19}, leads to a bounded number of admissible orthogonal matrices to potentially exist, and not just one, as in the global identification case. For SVAR-WBs, just identification mainly refers to the number of restrictions we impose in each equation and for each regime. The definition is as follows.

\begin{defin}[Just identification]
	\label{def:JustIdent}
	An SVAR-WB as in Eq. (\ref{eq:SVARWB}) with $s$ volatility regimes is just identified if it is locally identified and characterized, 
	in each regime, by exactly $n(n-1)/2$ admissible restrictions represented by $\Ar$. 
\end{defin}

The next definition, instead, considers a more stringent condition for just identification, that, given its particular recursive pattern of restrictions, allows investigating the parameters shock by shock, as already discussed in Theorem \ref{theo:SuffCond}.

\begin{defin}[Recursive just identification]  
\label{def:RecJustIdent}
	A just identified SVAR-WB is also said recursively just identified if there exists a particular ordering of the shocks such that 
	the number of restrictions is $f_{p,j}=n-j$, with $j\in\{1,\ldots,n\}$ and $p\in\{1,\ldots,s\}$, and thus $f_j=s(n-j)$.
\end{defin}

\medskip

A just identified SVAR-WB is characterized by $f=s\,n(n-1)/2$ restrictions, i.e. the minimum required for identification, as stated in Theorem \ref{theo:OrdCond}. Recursive just identification, instead, requires the restrictions to be organized such that each column of $Q_{p,j}$, in each of the $s$ regimes, is characterized by the same number of restrictions $f_{p,j}=n-j$. Clearly, given that we allow for restrictions across regimes, they must be handled with care and recorded in the proper way to satisfy the 
condition of Definition \ref{def:RecJustIdent}. 
%This last notion will be used in the next sections to simplify the estimation procedure.

\vspace{0.5cm}
\noindent
\textit{Examples}
\vspace{0.5cm}

\noindent
The implementation of all the techniques discussed in this section are presented through a set of examples in Appendix \ref{app:examples}. Specifically, in the appendix we show how to impose the restrictions discussed in Section \ref{sec:ident_restr} and how to implement the identification check, depending on the kind of constraints characterizing the SVAR-WB.

%%%%%%%%%%%%%%%%%%%%%%%%%%%%%%%%%%%%%%%%%%%%%%%%%%%%%%%%%%%%%%%%%%%%%%%%%%%%%%%
%														 SECTION ESTIMATION																%
%%%%%%%%%%%%%%%%%%%%%%%%%%%%%%%%%%%%%%%%%%%%%%%%%%%%%%%%%%%%%%%%%%%%%%%%%%%%%%%

\section{Estimating the structural parameters of locally-identified SVAR-WBs}
\label{sec:Estim}

In Section \ref{sec:identification} we have learned that an SVAR-WB is in general at most locally identified, implying that there will be a set of observationally equivalent isolated points on the parametric space, all supporting the equality and sign restrictions. As for locally-identified SVARs, \cite{BacchiocchiFanelli15} propose to estimate the unknown coefficients of SVAR-WBs through maximum likelihood (ML) conditional on some reduced-form parameters. Specifically, the estimation procedure consists in maximizing the likelihood and stopping once the maximum is reached, irrespectively of other observationally equivalent admissible maxima. \cite{BK19} show how such procedure can conduct to misleading results in the case of more than one admissible solution for locally-identified SVARs. The same conclusions naturally apply to the case of SVAR-WBs that, by their nature, are at most locally identified.\footnote{As said before, globally-identified SVARs only occurs when all the parameters are globally identified in each of the regime, precluding however restrictions across-regimes. This case, however, is not interesting for the purposes of the present paper, as the model can be estimated within each regime.}

In this section we propose an algorithm that can be used for estimating all the structural parameters of the model and obtain the identified set $\IS$ as defined in Section \ref{sec:ident_restr}, that represents the set of all admissible objects of interest for the empirical research. The algorithm, rather than focusing directly on the structural parameters, is based on searching for all admissible orthogonal matrices forming the set $\Qr$, and generalizes the algorithm proposed in \cite{BK19} to estimate locally-identified SVARs.  

\subsection{A general algorithm for the estimation of just-identified SVAR-WBs}
\label{sec:Algorithm}

The algorithm presented in this section is very general and allows to estimate all the admissible orthogonal matrices collected in $\Qr$. The strategy used by the algorithm is to solve a non-linear (quadratic) system of equations represented by the equality restrictions, as well as the orthogonality and length restrictions characterizing the columns in the $Q_p$ matrices, $p\in\{1,\ldots,s\}$. Among all the solutions of the non-linear system of equations, some of them are potentially to be discarded according to the normalization and, maybe, inequality restrictions. All these steps are formalized in the next algorithm. 

\begin{algo}
	\label{algo:EstimGen}
	Consider a just identified SVAR-WB with the equality restrictions as in Eq. (\ref{eq:GQ}) and inequality restrictions as in 
	Eq.s (\ref{eq:SignRestr_jp}) and (\ref{eq:fevRest}). Let $\phi=\Btot=\big[B_{1+}\,,\,\Sigma_1\,,\,\cdots\,,\,B_{s+}\,,\,\Sigma_{s}\big]$
	be the reduced-form parameter point.
	\begin{enumerate}
	\item Using the equality restrictions form the system of equations
					\begin{equation}
					\label{eq:SysQ}
						\left\{\begin{array}{rcl}
						\Rj\left(\begin{array}{c}
						G(\Sigma_{1,tr}^{-1\prime}\,,\,B_1^\prime)\:Q_1\\ \vdots\\ G\left(\Sigma_{s,tr}^{-1\prime}\,,\,B_s^\prime\right)Q_s
						\end{array}\right)e_j & = & 0,\hspace{1cm}j\in\{1,\ldots,n\}\\
						Q_1^\prime\,Q_1 & = & I_n\\
						& \vdots &\\
						Q_s^\prime\,Q_s & = & I_n\\
						\end{array}\right.\nonumber
					\end{equation}
	\item solve the previous system for $Q_1,\ldots,Q_s$; 
	\item if the set of all the solutions is non-empty, then retain only those satisfying the inequality restrictions 
				and obtain $\Qr$; Set $M(\phi)$ as the number of admissible solutions, then STOP;
	\item if, instead, the set of all the solutions is empty, then there are no admissible structural parameters associated to $\phi$. 
	\end{enumerate}
\end{algo}

The previous algorithm returns with the non-empty set $\Qr$, containing all the admissible orthogonal matrices related to each of the regimes given the restrictions and the reduced-form parameters $\phi$. The crucial point in the algorithm relates to the way the system in Step 1 is solved. Given the non-linearity provided by the length and orthogonality conditions, the simplest way to find the solutions is to solve it numerically. Matlab, for example, provides the commands \texttt{vpasolve} and \texttt{solve}, that search for all the solutions of a non-linear system of equations as the one in the previous algorithm. Moreover, as guaranteed by the Matlab developers, and to the best of our experience, in the case of systems of polynomial equations, the \texttt{vpasolve} and \texttt{solve} commands return with the complete set of solutions (both real and complex, where these lasts are immediately discarded).

%%%%%%%%%%%%%%%%%%%%%%%%%%%%%%%%%%%%%%%%%%%%%%%%%%%%%%%%%%%%%%%%%%%%%%%%%%%%%%%
%														 SECTION INFERENCE 																%
%%%%%%%%%%%%%%%%%%%%%%%%%%%%%%%%%%%%%%%%%%%%%%%%%%%%%%%%%%%%%%%%%%%%%%%%%%%%%%%

\section{Drawing inference in locally- or set-identified SVAR-WBs}
\label{sec:Inference}

The standard approach for estimating the parameters and doing inference in locally-identified SVAR-WBs is the Gaussian-based maximum likelihood estimator, which also suggests a standard classical inference on the estimated parameters and impulse responses. 
Examples are \cite{ABCF17}, \cite{BCF17}, \cite{Bacchiocchi17OBES} and \cite{BacchiocchiFanelli15} for exogenously-determined regimes, and \cite{PodVel18} for regimes determined endogenously through a Markov-Switching process. However, \cite{BK19} criticize this approach and suggest three alternative ones based on the joint analysis of all the admissible solutions collected in the identified set that, in the case of local identification, is represented by isolated points coherent with the restrictions and compatible with the reduced-form parameters. Specifically, the approaches are: a Bayesian approach, a frequentist-valid approach, and a robust Bayes approach. In the following sections we extend their proposals to locally-identified SVAR-WBs, maintaining the assumption of exogenously determined break dates. Moreover we also propose a classical and a robust Bayes approach for drawing inference on set-identified SVAR-WBs that, to the best of our knowledge, have never been treated in the literature of SVAR-WBs.

%%%%%%%%%%%%%%%%%%%%%%%%%%%%%%%%%%%%%%%%%%%%%%%%%%%%%%%%%%%%%%%%%%
\subsection{Drawing inference in locally-identified SVAR-WBs}
\label{sec:InferenceLocId}

\subsubsection{Bayesian Inference}
\label{sec:BI}

In the standard Bayesian inference, in the presence of local identification, the posterior of the structural parameters and impulse responses can have multiple modes. This aspect can generate computational challenges as the commonly used Markov Chain Monte Carlo (MCMC) methods can fail to well explore the multi-modal posterior. In this respect, \cite{BK19} propose to combine the constructive algorithm for computing $\IS$, our Algorithm \ref{algo:EstimGen} (or Algorithms \ref{algo:EstimRec} and \ref{algo:EstimSeqRec} in Appendix \ref{sec:AlgorithmSeqRec}), with the sampling algorithm for the reduced-form parameters. They show that this approach can get around issues encountered in MCMC applied to the posterior for the structural parameters. 
  
We are interested in approximating the posterior for a scalar impulse response $\eta_p=\eta(\phi,Q_p)$, for the generic \textit{p}-th regime. Conditional on the values of reduced-form parameters yielding nonempty $\Qrp$, let $\ISp$ consist of $M_p(\phi)\geq 1$ number of distinct points, 
\begin{equation}
	IS_{\eta,p}(\phi) = \Big\{\eta\big(\phi,Q_p^{(1)}\big), \eta\big(\phi,Q_p^{(2)}\big), \dots, \eta\big(\phi,Q_p^{(M_p(\phi))}\big) \Big\},
	\label{IS_Mphi}
\end{equation}
where, for simplicity, $\eta\big(\phi,Q_p^{(1)}\big)<\eta\big(\phi,Q_p^{(2)}\big)<\dots<\eta\big(\phi,Q_p^{(M_p(\phi))}\big)$. 

The intuition is to follow the ``agnostic'' Bayesian approach by \citet{Uhlig2005}, where the posterior for $\eta_p$ is induced by the posterior for $\phi$, $\pi_{\phi|Y}$, supported on $\tilde{\Phi} \equiv \big\{\phi: \Qrp \neq \emptyset \big\}$, and the uniform 
prior for $Q_p$ given $\phi \in \tilde{\Phi}$ supported only on the admissible set of rotation matrices $\Qrp$. Specifically, we assign uniform weights over the admissible rotation matrices, implying equal weights assigned over the points in $\ISp$. The posterior for $\eta_p$, thus, can be expressed as:
\begin{equation}
	\pi_{\eta_p | Y}(\eta_p \in A) \propto E_{\phi|Y} \left[ \sum_{m=1}^{M_p(\phi)} \mathbbm{1} \Big\{ \eta(\phi,Q_p^{(m)}) \in A \Big\} \right],
	\label{eta posterior}
\end{equation}
for $A \subset \mathbb{R}$. Since the likelihood of the reduced-form is uni-modal and concentrated around the maximum likelihood estimate, the MCMC sampling algorithms will well perform to get the random draws from $\pi_{\phi|Y}$. Hence, the posterior sampler for $\phi$ combined with the algorithm for computing $\big\{ \eta(\phi,Q_p^{(m)}): m= 1, \dots, M_p(\phi) \big\}$ allow us to approximate the posterior in Eq. (\ref{eta posterior}) reliably by its Monte Carlo empirical analogue.

\subsubsection{Frequentist-valid inference}
\label{sec:FVI}

The Bayesian procedure discussed in the previous section specifies an allocation of the prior belief over observationally equivalent impulse responses that, given $\phi$, is not updated by the data and, as a consequence, the posterior remains sensitive to how it is specified. For locally-identified models \cite{BK19} propose an asymptotically valid frequentist inference procedure for the impulse response identified set that can draw inferential statements robust to the choice of prior weights. 

The idea is to project the asymptotically valid frequentist confidence set for the reduced-form parameters $\phi$ through the identified set mapping $IS_{\eta}(\phi)$.\footnote{See, among others, the 2011 working paper version of \cite{MS12}, \cite{NT14}, \cite{KT13}, for similar approaches where the identified set is a connected interval with positive width.} The projection approach, in general, yields conservative but asymptotically valid confidence sets even when the identified set consists of discrete points. \cite{BK19} also provide a strategy to solve the computational challenge given by the need to compute the projection via the identified set of discrete points based on the finite number of draws or grid points of $\phi$ from their confidence set. 

For the generic \textit{p}-th regime, let $CS_{\phi_p,\alpha}$ be an asymptotically valid confidence set for $\phi_p$ with coverage probability $\alpha \in (0,1)$. If the maximum likelihood estimator $\hat{\phi_p}$ is $\sqrt{T}$-asymptotically normal, the acceptance region of the likelihood ratio test with a critical value set to the $\alpha$-th quantile of the $\chi^2\big(dim(B_{p+})+n(n-1)/2\big)$-distribution provides the likelihood contour set as $CS_{\phi_p,\alpha}^p$.\footnote{As an example, the MCMC confidence set for $\phi_p$ in \cite{CCT18} yields the likelihood contour set with asymptotically valid coverage.} Moreover, if the posterior for $\sqrt{T}(\phi_p - \hat{\phi}_p)$ asymptotically coincides with the sampling distribution of the maximum likelihood estimator, $CS_{\phi_p,\alpha}^p$ can be obtained through the Bayesian highest density posterior region with credibility $\alpha$. 

Now, let $\phi_{(k)}=\big(\phi_{(k),1},\ldots,\phi_{(k),s}\big)\in \mathbb{P}^R$ be a generic draw of $\phi$, and, as in the previous subsection, let $IS_{\eta,p}(\phi_{(k)}) = \Big\{\eta\big(\phi_{(k)},Q_p^{(1)}\big), \eta\big(\phi_{(k)},Q_p^{(2)}\big), \dots, 
\eta\big(\phi_{(k)},Q_p^{\big(M_p(\phi_{(k)})\big)}\big) \Big\}$, with the points ordered for convenience in increasing order and where $M(\phi_{(k)})$ is the number of distinct points in the identified set corresponding to $\phi=\phi_{(k)}$. Finally, let $\bar{M}=\max_k M(\phi_{(k)})$ be the largest cardinality of $IS_{\eta,p}(\phi_{(k)})$ among $k\in\{1, \dots, K\}$. $\bar{M}$ indicates the largest number of disconnected intervals that the projected confidence set, denoted by $CS^p_{\eta,\alpha}$, can have. \cite{BK19} propose to form clusters and check, for each draw $\phi_{(k)}$, $k\in\{1,\dots, K\}$, if any point of $IS_{\eta,p}(\phi_{(k)})$ can be associated with the $\tilde{m}$-th cluster, with $\tilde{m}\in\{1,\ldots,\bar{M}\}$. The main technical difficulty is represented by the approximation of the projected confidence set through a finite number of Monte Carlo draws or grid points from $CS_{\phi_p,\alpha}^p$. In this respect, \cite{BK19} propose two different strategies that can be extended to the present framework in order to obtain, for each regime, frequentist-valid confidence sets for the structural parameters, or more commonly, for the impulse responses. Specifically, given a Monte Carlo draw $\phi=\phi_{(k)}$, the two approaches allow to assign each distinct element in $IS_{\eta,p}(\phi_{(k)})$ to a specific cluster. Once obtained the cluster assignment for every $k\in\{1,\dots, K\}$, we construct, for each cluster $\tilde{m} \in \{ 1, \dots, \bar{M}\}$, an interval $C_{\tilde{m}}^p$ delimiting all the elements within the cluster. We then form an approximate of the projection confidence set by taking the union of all $C_{\tilde{m}}^p$: 
\begin{equation}
	\widehat{CS}_{\eta,\alpha}^p \equiv \bigcup_{\tilde{m}=1}^{\bar{M}} C_{\tilde{m}}^p. \notag 
\end{equation}
The two approaches in \cite{BK19} differ in the way they label the clusters. The \textit{switching-label projection confidence sets} approach allows the labels indexing observationally equivalent impulse responses to vary across horizons, while the 
\textit{fixed-label projection confidence sets} approach maintains unique labels. Both approaches allow to capture multi-modality of the posterior distribution at each horizon.

It is worth nothing that $\widehat{CS}_{\eta,\alpha}^p$ includes all the $IS_{\eta,p}(\phi_{(k)})$'s, $k=1, \dots, K$, and at the same time, it can yield a collection of disconnected intervals. Furthermore, under rather general conditions and denoting with $\phi_0$ the vector of true parameters of the reduced form, it can be shown that $\widehat{CS}_{\eta,\alpha}^p$ converges to $IS_{\eta,p}(\phi_0)$ in the Hausdorff metric, and thus $\widehat{CS}^p_{\eta,\alpha}$ can consistently uncover the true identified set consisting of potentially multiple points.  

Finally, if $\big\{ \phi_{(k)} : k\in\{1,\dots, K\} \big\}$ are draws from the credible region with credibility of the posterior distribution for $\phi$, then $\widehat{CS}_{\eta,\alpha}^p$ can be seen as an approximation of the set $C_{\eta,\alpha}^p$ satisfying
\begin{equation}
 \pi_{\phi_p|Y}\big(IS_{\eta,p}(\phi) \subset C_{\eta,\alpha}^p\big) \geq \alpha. \notag
\end{equation}

%%%%%%%%%%%%%%%%%%%%%%%%%%%%%%%%%%%%%%%%%%%%%%%%%%%%%%%%%%%%%%%%%%%%%%%%%%%%%%%%%
\subsection{Drawing inference in set-identified SVAR-WBs}
\label{sec:SetInference}

In this section we extend the analysis to set-identified SVAR-WBs. To the best of our knowledge, this is completely new in the literature. First of all, we define the kind of set-identified SVAR-WBs allowed, and then move to two possible strategies for doing inference on the parameters or on the identified sets of such models. Specifically, in Section \ref{sec:SetId} we will present a pure Bayesian approach, while in Section \ref{sec:RobBayesSVARWB} we present a robust Bayesian approach.

\subsubsection{Set-identified SVAR-WBs}
\label{sec:SetId}

The kind of set identification we allow in the present paper is an extension of the proposals by \citet{GK18} and \citet{ARW18}. In their setup the starting point is the well known condition in SVARs for global identification, where the number of restrictions $f_j$ must follow the recursive pattern $f_j=n-j$, with $j\in\{1,\ldots,n\}$. As a consequence, set identification arises when, at least for one $j$, $f_j<n-j$. Moreover, \citet{GK18} provide conditions for the restrictions to generate a convex identified set. 

Our notion of set identification for SVAR-WB considers departures from Definition \ref{def:RecJustIdent}, and specifically, reads as follows:

\begin{defin}[Set identification]  
\label{def:SetIdent}
	An SVAR-WB as in Eq. (\ref{eq:SVARWB}) with zero restrictions as in Eq. (\ref{eq:GQ}) and inequality restrictions as in 
	Eq.s (\ref{eq:SignRestr_jp}) and (\ref{eq:fevRest}) is set identified if at least for one $j\in\{1,\ldots,n\}$ or one $p\in\{1,\ldots,s\}$, $f_{p,j}<n-j$,
	and thus $f_j < s(n-j)$.
\end{defin}

In the case of no break, it clearly reduces to departures from the previous \citeauthor{RWZ10RES}'s condition. Moreover, given the stability restrictions that could make an SVAR-WB attractive for empirical researchers, the resulting local identification 
does not justify the search for conditions for convex identified sets, as they will be generally made of disjoint sets. As a consequence, the suggestion by \citet{GK18} of summarizing the identified set through the ``posterior mean bounds'' is extremely
conservative as they act as a hull of the identified set, where, however, some regions are not admissible within. 
 
\subsubsection{Bayesian inference in set-identified SVAR-WBs}
\label{sec:BayesSVARWB}

Here below we provide an algorithm to numerically implement our Bayesian approach. The first step consists in estimating a Bayesian VAR-WB. As the parameters of the reduced form are completely unrestricted across regimes, they can be treated as standard Bayesian VARs within each regime. Thus, once specified the prior distribution of the reduced-form parameters $\pi_\phi$, it becomes standard to obtain the posterior $\pi_{\phi|Y}$. From this posterior, we will obtain the draws of $\phi$. The second step, instead, consists in rotating the reduced-form parameters through the admissible orthogonal matrices, randomly generated from a uniform distribution. 

\begin{algo}
\label{algo:BayesSetId}
Consider a set identified SVAR-WB as in Definition \ref{def:SetIdent}. 
Let $\phi=\Btot=\big[B_{1+}\,,\,\Sigma_1\,,\,\cdots\,,\,B_{s+}\,,\,\Sigma_{s}\big]$ be a generic reduced-form parameter point.
\begin{enumerate}
	\item Specify $\pi_{\phi_p}$, $p\in\{1,\ldots,s\}$. Estimate a Bayesian reduced-form VAR to obtain the posterior $\pi_{\phi_p|Y}$, 
		$p\in\{1,\ldots,s\}$, and thus $\pi_{\phi|Y}$.
	\item Draw $\phi$ from $\pi_{\phi|Y}$.
	\item Set $j=1$. 
  \item Based on the restrictions as in Eq. (\ref{eq:ImpForm}), define the $\big([f_j+s(j-1)]\times ns\big)$ matrix $\Gamma_j$ as 
		\begin{equation}
    \label{eq:Gammaj}
				\Gamma_j=\left(\begin{array}{ccc}
				R_{1,j}^{*}G_1&\ldots&R_{s,j}^{*}G_s\\&\tilde{Q}_1^\prime&\\&\vdots&\\&\tilde{Q}_{j-1}^\prime&\end{array}\right).
		\end{equation}
		If $j=1$, then $\Gamma_j = \big(\begin{array}{ccc}R_{1,j}^{*}G_1&\ldots&R_{s,j}^{*}G_s\end{array}\big)$.
	\item Let $\Gamma_{j\perp}$ be the space of all $(ns\times 1)$ vectors orthogonal to $\Gamma_j$, with 
		$\mathrm{dim}\,(\Gamma_{j\perp})=ns-[f_j+s(j-1)]=\gamma_j>s$. For each $p\in \{1,\ldots,s\}$ find the vector $\tilde{q}_{j,p}\in \Gamma_{j,\perp}$ of the form
		\footnote{Such a vector can be obtained in the following way. Let $B_j$ a basis for the space $\Gamma_{j\perp}$. The generic vector belonging to $\Gamma_{j\perp}$ is of the form
		\begin{equation}
		\label{eq:genVector}
			v_j = B_j\,\lambda_j = 
			\left(\begin{array}{c}\underset{n\times \gamma_j}{B_j^1}\\ \vdots\\ \underset{n\times \gamma_j}{B_j^s}\end{array}\right)\lambda_j.\nonumber
		\end{equation}
		Let $\bar{B}_j^p$ be the $n(s-1)\times\gamma_j$ matrix obtained by removing $B_j^p$ from $B_j$. Then, find $\hat{\lambda}_j$ as a vector orthogonal to the rows of $\bar{B}_j^p$ 
		(for instance, using the Matlab command \texttt{orth}). The vector $\tilde{q}_{j,p}\in \Gamma_{j,\perp}$ will be given by $\tilde{q}_{j,p}\in \Gamma_{j,\perp}=B_j\hat{\lambda}_j$,
		normalized to have unit length.}
		\begin{equation}
		\label{eq:qjpt}
			\tilde{q}_{j,p}=\left(\underset{1\times n}{0},\ldots,\underset{1\times n}{q_{j,p}^\prime},\ldots,\underset{1\times n}{0}\right)^\prime.\nonumber
		\end{equation}
		Update the admissible partial orthogonal matrices
		\begin{equation}
			\label{eq:PartOrtMat}
					Q_{j,p}=\big[Q_{j-1,p}\,|\,q_{j,p}\big]
		\end{equation}
		with $p\in\{\,1\ldots,s\}$. 
		\begin{itemize} 
			\item[a.] If $j=n$, calculate the structural parameters and check for the sign restrictions. If the inequality restrictions are met, add 
					$\big(Q_{j,1},\ldots,Q_{j,s}\big)$ to the set of admissible orthogonal matrices $\Qt$ and add the reduced-form parameters $\phi$
					to the set of $\tilde{\Phi}$. If the inequality restrictions are not met, go back to (Step 2).
			\item[b.] If $j<n$, based on the obtained $Q_{j,p}$ in Eq. (\ref{eq:PartOrtMat}), define the matrices
				\[
					\tilde{Q}_1=\left(\begin{array}{cccc}
					Q_{1,1}&0&\cdots&0\\
					0&Q_{2,1}&\cdots&0\\
					\vdots&\vdots&\ddots&\vdots\\
					0&0&\cdots&Q_{s,1}
					\end{array}\right)
					\hspace{0.3cm}\cdots\hspace{0.3cm}
					\tilde{Q}_{j}=\left(\begin{array}{cccc}
					Q_{1,j}&0&\cdots&0\\
					0&Q_{2,j}&\cdots&0\\
					\vdots&\vdots&\ddots&\vdots\\
					0&0&\cdots&Q_{s,j}
					\end{array}\right),
				\]
				update $j=j+1$ and go back to (Step 4).
		\end{itemize}
	\item Iterate (Step 2)-(Step 5) $N$ times.
	\item If $\Qt=\{\hspace{0.3cm}\}$, then no admissible orthogonal matrix found. Otherwise, $\Qt$ collects all the admissible orthogonal
			matrices for all regimes, and $\tilde{\Phi}$ collects the associated draws for the reduced-form parameters. 
\end{enumerate}
\end{algo}

\begin{remark}
\label{rem:BayesAlg}
	Some remarks are in order. The algorithm, in synthesis, generalizes to SVAR-WBs the original proposal in \cite{Uhlig2005} for sign-restricted 
	SVARs. In this set-up, we extend both in the direction of zero and stability restrictions. In the case of no breaks and no zero restriction
	it reduces to the Uhlig's pure-sign-restriction approach. As already discussed in the previous sections, we exploit the orthogonal 
	reduced-form parametrization, in the \citet{ARW18}'s terminology, to obtain draws from the posterior distribution of the structural parameters.
	However, given the different characteristic of the model with respect to traditional SVARs, as well as the presence of stability restrictions,
	we do not have any analytical results allowing to generate draws from a normal-generalized-normal (NGN) distribution over the structural 
	parametrization conditional to zero and sign restrictions, as in \citet{ARW18}. Rather, Algorithm \ref{algo:BayesSetId} corresponds to a variation
	in their Algorithm 2, allowing for SVAR-WBs characterized by stability restrictions. Providing new analytical results for drawing directly from 
	a NGN distribution is beyond the scope of the present paper, but belongs to our future research agenda.
\end{remark}

The implementation of the algorithm is straightforward and relatively computationally fast. If the object of interest is represented by specific impulse responses, a natural way to present the results could be the plot, for each horizon, of the highest posterior density regions for different credibility levels. This way of reporting the results allows to present potential multimodality in the posterior distribution of the object of interest. This strategy is implemented in the empirical application presented in Section \ref{sec:EmpApp}.

\subsubsection{Robust Bayesian inference in set-identified SVAR-WBs}
\label{sec:RobBayesSVARWB}

In standard set-identified SVARs, Bayesian techniques received criticisms when used for drawing inference on the identified sets for impulse responses. \cite{BH15}, among others, prove the effect of the choice of prior does not disappear asymptotically as instead 
it does in the point-identified case. This drawback clearly remains in the set-identified SVAR-WBs discussed in the present paper.

As an alternative to the standard Bayesian inference, for our set-identified SVAR-WB we consider the robust Bayesian approach of \citet{GK18}, that, being based on multiple priors, does eliminate the drawback caused by the informativeness of the unique selected prior. 
%Originally proposed for a class of set-identified models with identified set $IS_{\eta}(\phi)$ of positive Lebesgue measure in $\mathbb{R}$, it could also be thought for identified sets made of isolated points, with zero Lebesgue measure. In this respect, viewing the set of locally-identified parameter values of the SVAR-WB (or, equivalently, the admissible impulse responses of interest) as the $\IS$, our extension of the robust Bayes inference represents a valid alternative to the two approaches presented in Section \ref{sec:BI} and Section \ref{sec:FVI} for locally-identified SVAR-WBs. See also the recent discussion in \cite{BK19}. 

The first step is the same as the pure Bayesian inference discussed in Section \ref{sec:BayesSVARWB}, and consists in obtaining the posterior $\pi_{\phi|Y}$. The robust Bayesian inference does not specify a conditional prior for $Q$ given $\phi=(B_{1+},\Sigma_1,\ldots,B_{s+},\Sigma_s)$, say $\pi_{Q|\phi}$, but allows for arbitrary probability distributions for it, in order to make the posterior inference free from the choice of $\pi_{Q|\phi}$. Let $\Pi_{Q|\phi}$ denote a collection of conditional priors $\pi_{Q|\phi}$. The class of conditional priors that impose equality and/or inequality restrictions on the SVAR-WB is defined as
\[
	\Pi_{Q|\phi}=\big\{\pi_{Q|\phi}\:\big|\:\pi_{Q|\phi}\big(\Qr\big)=1,\:\pi\text{-almost surely}\big\}.
\]
From the class of prior distributions $\Pi_{Q|\phi}$ it is possible to obtain the class of conditional priors $\pi_{Q_p|\phi_p}$ for the admissible $Q_p$ matrices in each single regime, denoted by $\Pi_{Q_p|\phi}$, $p\in\{1,\ldots,s\}$. For each regime, thus, the posterior for $\phi_p$, combined with the prior class $\Pi_{Q_p|\phi}$, generates the class of joint posteriors for $(\phi_p,Q_p)$
\[
	\Pi_{\phi_p Q_p|Y}=\big\{\pi_{\phi_p Q_p|Y}=\pi_{\phi_p |Y}\pi_{Q_p|\phi_p}\:\big|\:\pi_{Q_p|\phi_p}\in \Pi_{Q_p|\phi_p}\big\}.
\]
Substantially, the main difference with respect to the original approach by \citet{GK18}, thus, is in the way we treat the arbitrary distributions for $Q_p$, that are related across the regimes and must be considered jointly, instead of specifically in each regime as a single SVAR. 

Finally, the class of posteriors $\Pi_{\phi_p Q_p|Y}$ induces the class of posteriors for impulse responses or any transformations of the structural parameters in each regime, $\eta_p=\eta(\phi_p,Q_p)$. Specifically, focusing on the impulse response of interest $\eta_p$, let $A \subset \mathbb{R}$, then
\[
	\Pi_{\eta_p|Y}\equiv\bigg\{\pi_{\eta_p|Y}(A)=\int_{}^{}\pi_{\phi_p Q_p|Y}\big(\eta_p\in A\big)\ud\pi_{\phi_p|Y}\:\bigg|\:
	\pi_{\phi_p Q_p|Y}\in \Pi_{\phi_p Q_p|Y}\bigg\}.
\]
By applying Theorem 1 in \citet{GK18}, the class of posterior distributions $\Pi_{\eta_p|Y}$ can be summarized by defining the posterior lower probability $\pi_{\eta_p|Y \ast} (A)$ and the posterior upper probability $\pi_{\eta_p|Y}^{\ast} (A)$ for any event 
$\{\eta_p \in A \}$. Specifically, the range of posterior probabilities for $\{\eta_p \in A \}$ is given by the convex interval 
\begin{equation}
	\pi_{\eta_p|Y}(A) \in \left[ \pi_{\eta_p|Y \ast} (A), \pi_{\eta_p|Y}^{\ast} (A) \right] \equiv \Big[ \pi_{\phi_p|Y} \big(\ISp \subset A\big), 
	\pi_{\phi_p|Y} \big(\ISp \cap A \neq \emptyset\big) \Big] \label{posterior prob range}.
\end{equation}
%The two extremes of the interval have a very interesting interpretation when considering set identification from the Bayesian perspective. In fact, fixing an hypothesis of interest, say $\{\eta_p \in A \}$, the posterior lower probability states that, independently of the prior distribution considered, the posterior credibility for $\{\eta_p \in A \}$ is at least equal to $\pi_{\eta_p|Y \ast} (A)$. These definitions will play a relevant role in summarizing the results of the robust Bayesian approach and in conducting global sensitivity analysis.

If we want to summarize the information in the posterior class without specifying $A$, \citet{GK18} propose to report the set of posterior means of the identified set. 
%For a scalar impulse response in traditional SVARs, it is straightforward to compute the range of posterior means. 
%Let $\ell(\phi) =\min\{ \eta \in IS_{\eta}(\phi) \}$ and $u(\phi) = \max\{ \eta \in IS_{\eta}(\phi) \}$. Theorem 2 in \citet{GK18} shows that the range of posterior means is given by the Aumann expectation of the convex hull of the identified set $\big[E_{\phi|Y}\big(\ell(\phi)\big), E_{\phi|Y}\big(u(\phi)\big)\big]$, whose empirical counterpart can be obtained by random sample of $\phi$ drawn from $\pi_{\phi|Y}$. 
In the case of SVAR-WBs, the range of posterior means has to be calculated for each regime, $\big[E_{\phi|Y}\big(\ell_p(\phi)\big), E_{\phi|Y}\big(u_p(\phi)\big)\big]$, $p\in\{1,\ldots,s\}$, taking into account, however, that the stability constraints make the regimes interconnected.\footnote{The stability and inequality restrictions are taken into account in the calculation of the admissible $Q_p$ matrices.}

An alternative strategy to summarize the results is represented by the robust credible regions, that can be read as the robust Bayesian counterpart of the highest posterior density region generally reported in standard Bayesian inference. For our SVAR-WB, 
fixing $\alpha\in (0,\,1)$, in the generic $p$-th regime, we focus on the set $C_{\alpha,p}$ such that the posterior lower probability associated to this set is at least equal to $\alpha$, i.e.
\[
	\pi_{\eta_p|Y \ast}(C_{\alpha,p})=\pi_{\phi_p|Y}\big(\ISp\subset C_{\alpha,p}\big)\geq \alpha
\]
whatever the choice of posterior within the class $\Pi_{\eta_p|Y}$. 
%In general, obtaining $C_{\alpha,p}$ can be problematic when the object of interest is a vector. However, \citet{GK18} show that if $\eta$ is a scalar and the class of admitted sets (reducing to intervals in the specific case) consists of closed connected intervals, $C_{\alpha,p}$ can then be computed by solving a simple optimization problem.

Given the characteristics of SVAR-WBs constrained by stability restrictions, deriving results about the convexity of the identified sets is much more complicate than in SVARs featuring only zero or sign restrictions (see \citeauthor{GK18}, \citeyear{GK18}). As a consequence, it is reasonable to expect the identified set to be potentially not convex. However, \citet{GK18} provide two important results for the set of posterior means and robust credible regions from the frequentist perspective, even in the case of non-convex identified sets. Firstly, they show that the set of posterior means converges to the hull of the true identified set. Secondly, the robust credible region has the correct asymptotic coverage for the true identified set, regardless the identified set to be convex or not.

%Clearly, when the attention is on impulse responses,
%the set $C_{\alpha,p}$ can be given by disconnected intervals.
%The practical implementation of the robust credible region consists in calculating such probability for each draw $\phi_p$ from 
%the posterior distribution of the reduced-form parameters and then calculate the union of the obtained sets. 
%In the case of an impulse response in the $p$-th regime, for each draw $\phi_p$ we will obtain the intervals $C_{\alpha,p}(\phi_p)$ such that 
%$\pi_{\phi_p|Y}\big(\ISp\subset C_{\alpha,p}(\phi_p)\big)\geq \alpha$. The robust credible region $C_{\alpha,p}$ will be given
%by the union of all the intervals obtained for each $\phi_p$, i.e.
%\begin{equation}
%\label{eq:RobCredReg}
%	C_{\alpha,p}=\left\{\bigcup_{\phi_p}C_{\alpha,p}(\phi_p)\:\Bigg|\pi_{\phi_p|Y}\big(\ISp\subset C_{\alpha,p}(\phi_p)\big)\geq \alpha,
%	\:\:\pi_{\phi_p|Y}\in \Pi_{\phi_p|Y}\right\}
%\end{equation}	

The next algorithm allows to numerically approximate the set of posterior means and the robust credible regions for set identified SVAR-WBs.

\begin{algo}
\label{algo:RobSetId}
Consider a set identified SVAR-WB as in Definition \ref{def:SetIdent}. 
Let $\phi=\Btot=\big[B_{1+}\,,\,\Sigma_1\,,\,\cdots\,,\,B_{s+}\,,\,\Sigma_{s}\big]$ be a generic reduced-form parameter point. 
Moreover, for each regime $p\in\{1,\ldots,s\}$, let $\eta_p = c_{p,ih}^\prime(\phi) q_{p,j^*}$ be the impulse response of interest.
\begin{enumerate}
	\item Specify $\pi_{\phi_p}$, $p\in\{1,\ldots,s\}$. Estimate a Bayesian reduced-form VAR to obtain the posterior $\pi_{\phi_p|Y}$, 
		$p\in\{1,\ldots,s\}$, and thus $\pi_{\phi|Y}$.
	\item Draw $\phi$ from $\pi_{\phi|Y}$. Given the draw of $\phi$, check whether the admissible matrix $Q$ obtained through 
		(Step 3)-(Step 5) of Algorithm \ref{algo:BayesSetId} satisfies the inequality restrictions. If so, then retain this $Q$ and go to (Step 3). 
		Otherwise, repeat (Step 3)-(Step 5) of Algorithm \ref{algo:BayesSetId} a maximum of $N$ times (e.g. N = 3000) or until $Q$ is obtained 
		satisfying the inequality restrictions. If none of the $L$ draws of $Q$ satisfies the inequality restrictions, approximate $\Qr$ as being empty 
		and return to Step 2 to obtain a new draw of $\phi$.
	\item Given $\phi$ obtained in (Step 2), iterate (Step 3)-(Step 5) of Algorithm 
		\ref{algo:BayesSetId} $L$ times and let $\Big(Q^{(l)}\in \Qr:\,l=1,\ldots,\tilde{L}\Big)$ be the draws that satisfy the 
		inequality restrictions. Thus, for each $p\in\{1,\ldots,s\}$, let $q_{p,j^*}^{(l)}$, $l=1,\ldots,\tilde{L}$
		be the $j^*$-th column vector of $Q_p^{(l)}$. Approximate $\big[\ell_{p}(\phi);\, u_{p}(\phi)\big]$ by 
		$\big[\mathrm{min}_l \:c_{p,ih}^\prime(\phi) q_{p,j^*}^{(l)}\:,\:\mathrm{max}_l \:c_{p,ih}^\prime(\phi) q_{p,j^*}^{(l)}\big]$.
	\item Repeat (Step 2)-(Step 3) $N$ times to obtain, for each $p\in\{1,\ldots,s\}$,
		$\big[\ell_{p}(\phi_n);\, u_{p}(\phi_n)\big]$, $n=1,\ldots,N$. Approximate the set of posterior means by the sample averages of
		$\big(\ell_{p}(\phi_n);\: n=1,\ldots,N\big)$ and $\big(u_{p}(\phi_n);\: n=1,\ldots,N\big)$. 
	\item To obtain an approximation of the smallest robust credible region with credibility $\alpha\in(0,\,1)$, define, for each $p\in\{1,\ldots,s\}$, 
		$d(\eta_p,\phi)=\mathrm{max}\big\{|\eta_p-\ell_p(\phi)|\,,\, |\eta_p-u_p(\phi)|\big\}$, and let $\hat{z}_\alpha(\eta_p)$ be the sample 
		$\alpha$-th quantile of $\big(d(\eta_p,\phi_n)\,:\, n=1,\ldots,N\big)$. An approximated smallest robust credible region for $\eta_p$
		is an interval centered at $\mathrm{arg\:\:}\mathrm{min}_{\eta_p}\:\hat{z}_\alpha(\eta_p)$ with radius 
		$\mathrm{min}_{\eta_p}\:\hat{z}_\alpha(\eta_p)$.
	\item The proportion of drawn $\phi$'s that pass (Step 2) is an approximation of the posterior probability of having a nonempty 
		identified set.
\end{enumerate}
\end{algo}

\begin{remark}
\label{rem:RobBayesAlg}
	Some remarks are in order. First, the present algorithm adapts Algorithms 1 and 2 in \citet{GK18} to our SVAR-WB. The way we draw orthogonal $Q$ matrices subject to zero and sign restrictions in (Step 2) is also common to \citet{ARW18}.
	Second, in \citet{GK18} the optimization step (Step 3) is a non-convex optimization problem whose argument is a scalar impulse response. For this reason, in their Algorithm 1 they propose to solve a non-linear minimization (maximization) problem with respect 
	to the orthogonal matrix $Q$. In our framework, instead, the minimization (maximization) problem is no-longer a scalar, as it should be solved jointly for the impulse responses of all the regimes. For this reason, we substitute the non-linear optimization 
	problem by a numerical search. This approach, although potentially less precise, should guarantee much more stability, as convergence of gradient-based optimization methods, based on all the orthogonal matrices $Q_1,\ldots,Q_s$, is even more problematic than in 
	\citeauthor{GK18}'s Algorithm 1. Our solution is in line with their Algorithm 2 and exactly reduces to this latter in the case of no breaks. 
	As this alternative way of deriving the bounds of the posterior mean interval is not based on a minimization or maximization strategy, a drawback is that it could provide a smaller $\ISp$ for each draw $\phi$. However, as the number of admissible $\Qr$ 
	matrices $\tilde{L}$ goes to infinity, the algorithm provides consistent estimator for the identified set at each regime. Finally, as for standard SVARs, as originally proposed by \cite{BK19}, the robust Bayesian approach can be implemented also for 
	locally-identified SVAR-WBs, where the identified set is represented by a finite number of isolated elements in $\IS$, instead of a set of positive Lebesgue measure.
\end{remark}

%%%%%%%%%%%%%%%%%%%%%%%%%%%%%%%%%%%%%%%%%%%%%%%%%%%%%%%%%%%%%%%%%%%%%%%%%%%%%%%
%												 SECTION EMPIRICAL APPLICATION  											%
%%%%%%%%%%%%%%%%%%%%%%%%%%%%%%%%%%%%%%%%%%%%%%%%%%%%%%%%%%%%%%%%%%%%%%%%%%%%%%%

\section{Empirical application}
\label{sec:EmpApp}

Among many others, \cite{BG06RESTATS} provide empirical evidence and theoretical explanations of differences in the conduct of the monetary policy by the Fed during the Great Moderation with respect to the previous decades. Within this framework we propose some 
empirical examples using SVAR-WBs as those developed in the present paper.

Our empirical analysis is based on the vector $y_{t}=(\tilde{y}_{t},\pi _{t},R_{t})^{\prime }$ ($n$=3), where $\tilde{y}_{t}$ is a measure of the output gap, $\pi _{t}$ the inflation rate and $R_{t}$ a nominal policy interest rate.\footnote{The measure of real activity, $\tilde{y}_{t}$, is the Congressional Budget Office (CBO) output gap, constructed as percentage log-deviations of real GDP with respect to CBO potential output. The measure of inflation, $\pi _{t}$, is the annualized quarter-on-quarter GDP deflator inflation rate, while the policy instrument, $R_{t}$, is the Federal funds rate (average of monthly observations). The data were collected from the website of the Federal Reserve Bank of St. Louis.} The observations are quarterly over the sample 1954.q3-2008.q3. 
Coherently with \cite{BG06RESTATS}, we divide the postwar period 1954.q3-2008.q3 into two sub-samples: the `pre-Volcker' period, 1954.q3-1979.q2, and the `Great-Moderation' period, 1979.q3-2008.q3. We consider, thus, one structural break and two regimes. The reduced form is specified by six lags for both the first and second regime.\footnote{\cite{BacchiocchiFanelli15} provide strong statistical evidence that the two sub-periods can be regarded as two periods characterized by different volatility and different dynamics in the reduced-form parameters.} 

In order to show how our methodology works we propose five alternative SVAR-WBs, characterized by different identification schemes, combining both equality and inequality restrictions. Model I and Model II are characterized by the same equality restrictions. The only difference is that the latter also presents a set of sign restrictions. 

\vspace{0.7cm}
\noindent
\textit{Equality restrictions: zero and stability restrictions}
\vspace{0.4cm}
%\begin{itemize}
%\itemsep0em 
%\item monetary policy shocks (MP) have no contemporaneous effect on inflation in the first regime;
%\item demand shocks (D) have no contemporaneous effect on inflation in the two regimes;
%\item supply shocks (S) have the same contemporaneous effect on inflation in the two regimes;
%\item supply shocks (S) have the same contemporaneous effect on output gap in the two regimes;
%\item demand shocks (D) have the same contemporaneous effect on output gap in the two regimes.
%\end{itemize}

\noindent
The transformation functions and the identifying restrictions for both the regimes are represented in Table \ref{tab:models}, upper panel. The symbol ``$\times$'' stands for unrestricted coefficient, while ``$\encircled{\times}$'' (in different colors) means that the two coefficients are restricted to be equal. The stability restrictions considered in this empirical application refer to the response of inflation to a demand shock and the response of output gap to both a demand and a supply shock, that contribute identifying the model, while allowing both the reaction function of the Fed and the effect of monetary policy shock on output to be regime-specific. 
%\footnote{In this respect, in order to guarantee the same magnitude of a monetary policy shock in the two regimes, the stability restriction is imposed by standardizing for the ratio between the standard deviations of the residuals of the interest rate equations in the two regimes.} 
Moreover, in line with many contributions based on a Cholesky identification scheme, like \cite{BG06RESTATS} and \cite{CEE05JPE}, we assume the monetary policy shock to influence inflation only after one quarter. This zero restriction, however, is limited to the first regime. This choice is justified by the intention to capture the larger efforts made by Governor Paul Volcker and the Federal Open Market Committee, the policy making committee of the Fed, starting in October of '79 to break the inflation cycle of the great inflation period. Finally, the last two zero restrictions assume a lag in the transmission of demand shocks to price dynamics, as a consequence of menu costs and other forms of price rigidities. Interestingly, in both regimes we allow the monetary policy shocks to have a non-zero contemporaneous effect on output as suggested by many DSGE models.\footnote{See, among others, \cite{DS04} and \cite{DSSW07}.} This characteristic is precluded in recursive identification schemes \textit{\`{a} la} \cite{CEE05JPE} and \cite{BG06RESTATS}. 

First of all, the necessary order condition for identification is met. Specifically, given the particular pattern of the restrictions, it is possible to see that the SVAR-WB in Model I does not meet the requirements of Corollary \ref{corol:OrderCond} and, as a consequence, identification is guaranteed by the rank condition in Theorem \ref{theo:NecSuff}. Finally, differently from standard SVARs, the identification benefits of the stability restrictions across the two regimes. 

\begin{table}
	\centering
	\caption{Restrictions for Model I to Model IV.}
	\label{tab:models}
		\begin{tabular}{lrccc}
		  \hline\hline
		  &&& \text{first regime} & \text{second regime}\\
			&&& $G_1\equiv G(A_{10}^\prime,A_{10}^\prime)$ & $G_2\equiv G(A_{20}^\prime,A_{20}^\prime)$\\
			\hline
			&&& $\begin{array}{ccc} \text{MP} & \text{S} & \text{D} \end{array}$ &  $\begin{array}{ccc} \text{MP} & \text{S} & \text{D} 
			\end{array}$ \\
			\text{model I}: & $\left[IR_{0}\right] = $ &
			$\begin{array}{c} i_t \\ \pi_t \\ \tilde{y}_t \end{array} $ & 
 			$\left[\begin {array}{ccc} \times&\times&\times\\ 0&$\Circled[outer color=green]{$\times$}$&0 \\\times&$\Circled{$\times$}$&$\Circled[outer color=red]{$\times$}$ \end{array}\right]$&
			$\left[\begin {array}{ccc} \times&\times&\times\\ 0&$\Circled[outer color=green]{$\times$}$&\times \\ \times&$\Circled{$\times$}$&$\Circled[outer color=red]{$\times$}$ \end{array}\right]$\\
			\hline
			\rule{0pt}{2ex}\\
			\text{model II}: & \multicolumn{4}{c}{model I + sign restrictions}\\ 
			\rule{0pt}{2ex}\\
			\hline
			&&& $\begin{array}{ccc} \text{MP} & \text{S} & \text{D} \end{array}$ & $\begin{array}{ccc} \text{MP} & \text{S} & \text{D} 
			\end{array}$\\
			\text{model III}: & $\left[IR_{0}\right] = $ &
			$\begin{array}{c} i_t \\ \pi_t \\ \tilde{y}_t \end{array} $ & 
 			$\left[\begin {array}{ccc} \times&\times&\times\\ \times&$\Circled[outer color=green]{$\times$}$&\times \\ \times&\times&$\Circled[outer color=red]{$\times$}$ \end{array}\right]$&
			$\left[\begin {array}{ccc} \times&\times&\times\\ \times&$\Circled[outer color=green]{$\times$}$&\times \\ \times&\times&$\Circled[outer color=red]{$\times$}$ \end{array}\right]$\\
			& \multicolumn{4}{c}{+ sign restrictions}\\ 
			\hline
			\rule{0pt}{2ex}\\
			\text{model IV}: & \multicolumn{4}{c}{$
			\begin{array}{lr}
			\text{equality restrictions:}& \text{NO}\\ 
			\text{sign restrictions:}& \text{YES}\\
			\text{ranking restrictions:}& \text{NO}\\
			\text{FEV restrictions:}& \text{NO}
			\end{array}$}
			\rule{0pt}{2ex}\\
			\hline
			\rule{0pt}{2ex}\\
			\text{model V}: & \multicolumn{4}{c}{$
			\begin{array}{lr}
			\text{equality restrictions:}& \text{NO}\\ 
			\text{sign restrictions:}& \text{YES}\\
			\text{ranking restrictions:}& \text{YES}\\
			\text{FEV restrictions:}& \text{YES}
			\end{array}$}
			\rule{0pt}{2ex}\\
			\hline\hline
		\end{tabular}
\end{table}

About the estimation and inference, given the reduced-form parameters $\phi$, the admissible orthogonal matrices for both the regimes can be estimated through Algorithm \ref{algo:EstimGen}. Being Model I locally identified, more than one admissible orthogonal matrix is expected to satisfy the restrictions and be coherent with the reduced-form parameters. In this respect, the standard strategy to focus on just one single admissible orthogonal matrix, as in \cite{BacchiocchiFanelli15}, could lead to misleading results. The inference on the parameters of Model I, thus, has been produced by implementing the three procedures presented in Section \ref{sec:Inference}.

We focus on the response of output gap to a restrictive monetary policy shock in the two regimes for each of the presented models. Concerning Model I, the responses are reported in Figure \ref{fig:ModelI}. In both panels, with different nuances of gray, we show the highest $90\%$, $75\%$, $50\%$, $25\%$ and $10\%$ quantiles of the posterior distribution of the identified set (impulse response of output gap to a monetary policy shock) for each horizon $h$.\footnote{The posterior distribution of the identified set is obtained through a Kernel smoothing function calculated over $100$ points (Matlab command \textit{ksdensity}). Increasing the number of points provides practically indistinguishable results.} The frequentist-valid inference is obtained by retaining the $90\%$ draws of the reduced-form parameters with highest value of the posterior density function.\footnote{Very similar results have been obtained through the maximization of the value of the likelihood function or the AIC and BIC information criterion. All the results can be obtained from the authors upon request.} The results are shown as the intervals delimited by the dotted-circle lines. Finally, we also show the results of the robust Bayesian approach in terms of the set of posterior means (dotted lines) and the upper and lower bounds of the robust credible regions with credibility $90\%$ (solid lines). 

The first interesting result is that, once considering both equality and normalization restrictions, the distribution of the effect in both the regimes is clearly bi-modal, at least for the first horizons after the shock. Although the probability mass of the impulse responses is mainly concentrated on the negative responses of the output gap to a restrictive monetary policy, there is also evidence of unexpected potential positive responses. The robust credible bounds, as expected, do represent a sort of hull for the standard Bayesian results, being the two robust credible regions very close to the upper and lower bounds of the highest $90\%$ probability of the posterior distribution of the identified set. In all cases, the frequentist-valid inference reveals to be extremely conservative and produces wide intervals. The other two approaches, pure Bayesian and robust Bayesian, instead, produce much more interesting results that emphasize some of the theoretical findings proposed in the paper. 

As already mentioned in the previous sections, the number of admissible solutions can be potentially reduced by the inclusion of sign restrictions. In this respect, Model II has the same equality restrictions as Model I, but includes some very mild sign restrictions, that are common to Model III to Model V, too. Specifically, the restrictions we consider, for both the regimes, are:

\vspace{0.7cm}
\noindent\textit{Sign restrictions in both first and second regime}
\begin{itemize}
\itemsep0em 
	\item interest rate responds non-negatively to monetary policy shocks (MP) on impact;
	\item interest rate responds non-negatively to supply shocks (S) on impact;
	\item interest rate responds non-negatively to demand shocks (D) on impact;
	\item inflation responds non-positively to monetary policy shocks (MP) on impact and after one quarter ($h=0,1$);
\end{itemize}

The first three sign restrictions are in line with all standard reaction functions of the central bank to demand and supply (price) shocks. The responses for output gap related to Model II are reported in Figure \ref{fig:ModelII}. Although Models I and II feature the same set of equality restrictions, imposing sign restrictions in the SVAR-WB, easily justifiable by theoretical grounds and commonly used in applied macroeconomic contributions, enables us to rule out one of the two admissible solutions obtained by the equality restrictions alone, and returns a completely different scenario. The reason is simply due to the local nature of the identification. Sign restrictions help discarding very unlikely responses, that uselessly contributed to widen the confidence bands of impulse responses. 

In Model III we relax some of the equality restrictions considered in Model I and Model II, and move to the set-identified SVAR-WB. In particular we maintain all the sign restrictions of Model II, while reduce the equality restrictions to just two stability constraints, as shown in Table \ref{tab:models}, third panel. Specifically, in Model III we relax the assumption of full price rigidity in both the regimes, as well as the zero restriction on the response of inflation to a monetary policy shock in the great inflation period and the stability restriction of the output gap to a supply shock. The results, in terms of the response of output gap to a monetary policy shock, are reported in Figure \ref{fig:ModelIII}. Although the model becomes set identified, rather than locally identified, the results show a significant recessionary reaction of output in the second regime, while becomes insignificant in the first. 

Some recent literature provides evidence of instabilities on the effect of demand and supply shocks on inflation and the business cycle (see, for instance, \citeauthor{DLPT20}, \citeyear{DLPT20}). If this represents a push for the relevance of our SVAR-WBs, on the other side it could be viewed as a deterrent to impose stability restrictions as those used in Model I to III. In this respect, however, our machinery continues to offer interesting tools consisting in imposing sign, ranking and FEV restrictions, within and across the regimes.

In this respect, in Model IV and V, to be more agnostic, we completely abandon equality (stability and zero) restrictions in favor of less stringent inequality restrictions, both on impulse responses and FEVs. Specifically, Model IV serves as a reference, where we only impose sign restrictions within each regime, as in the spirit of traditional SVARs. The results are shown in Figure \ref{fig:ModelIV}. Model V, instead, incorporates inequality constraints within and across the regimes, in terms of ranking and FEV restrictions.

\vspace{0.7cm}
\noindent\textit{Ranking restrictions in both first and second regime}
\begin{itemize}
\itemsep0em 
	\item on impact response of interest rate to demand shock (D) larger in the second regime;
	\item on impact response of interest rate to supply shock (S) larger in the second regime;
	\item on impact response of inflation to demand shock (D) larger in the first regime.
\end{itemize}

\vspace{0.7cm}
\noindent\textit{FEV restrictions in both first and second regime}
\begin{itemize}
\itemsep0em 
	\item FEV of output gap to demand shock (D) larger than all the other shocks in both regimes at $h=0,1$;
	\item FEV of inflation to supply shock (S) larger than all the other shocks in both regimes at $h=0,1$;
	\item FEV of interest rate to monetary policy shock (MP) larger than all the other shocks in both regimes at $h=0,1$;
	\item FEV of inflation to demand shock (D) larger in the first regime at $h=0$.
\end{itemize}

The first two ranking restrictions are coherent with the view, largely recognized by many economists, of a more aggressive and credible monetary policy to fight inflation during the great moderation.\footnote{See, for example, Lecture 2 by Nobel Prized and Fed Governor Ben Bernanke on ``The Federal Reserve after World War II'' (\citeauthor{B12}, \citeyear{B12}).} The last ranking restriction, instead, is inspired by some recent findings on the flattening of the Phillips curve, as advocated, among others by \cite{HMS20RinE}. According to this evidence, we impose a larger reaction of inflation to demand shocks during the great inflation than the great moderation. 

About the FEV restrictions, the first three are within regimes, and read as the Max Share Identification approach for a multiplicity of shocks, as deeply discussed and formalized by \cite{CV22}. Specifically, each of the three shocks is expected to explain, at least for $h=0,1$, the largest part of FEVs of one of the three variables. Finally, the last FEV restriction is a constraint across regimes and is totally in line with the last ranking restriction, but in terms of the FEVs, rather than of the impulse responses. Model V includes all these restrictions, jointly with the traditional sign restrictions. The results are reported in Figure \ref{fig:ModelV}.
%Finally, in Model IV, other than removing the full rigidity of prices in the second regime, we relax the two zero restrictions that allowed to identify a supply shock in Model I and Model II. The new results are shown in Figure \ref{fig:ModelIV}. As expected,  the response of output becomes much more uncertain, but seems to go in the same direction as the evidence obtained for the previous models.

Comparing the results of all the models, it clearly emerges that for the locally-identified specification in Model II (with sign restrictions to remove unreasonable results) a restrictive monetary policy shock produces, in both the regimes, a significant negative response of the output gap, a bit more pronounced during the great inflation, see Figure \ref{fig:ModelII}. However, when removing some of the equality restrictions, leading to a set-identified SVAR-WB, the evidence changes. While the monetary policy shock is recessionary in the second regime, it becomes highly non significant in the first regime, as shown in Figure \ref{fig:ModelIII}. Such results are partially confirmed when using only sign restrictions within each regime, that wouldn't justify the joint analysis of all regimes, as instead advocated in this paper (Model IV, Figure \ref{fig:ModelIV}). In this respect, however, it is worth noting the rather large robust credible regions, that differently from the Bayesian analysis, would suggest a non significant response of output to a monetary policy shock in the two regimes. Model V, instead, emphasizes the potential benefits of our methodology when combining traditional sign restrictions within each regime, to inequality constraints, in terms of ranking and FEV restrictions, across the regimes. In fact, as shown in Figure \ref{fig:ModelV}, while the Bayesian approach produces similar results than Model IV, including ranking and FEV restrictions across the regimes allows to enormously reduce the width of the robust credible regions for the impulse responses. Definitely, using our methodology, especially for set-identified SVAR-WBs, there is strong evidence of a different impact of monetary policy shock on output in the two regimes. 

Providing a detailed and thorough investigation on the US monetary policy transmission mechanism is beyond the scopes of the present empirical application, that mainly serves as a guideline for describing the implementation and highlighting the potentiality of our new methodology. A more sophisticated application, as in the spirit of \cite{SimsZha06AER} and \cite{BG06RESTATS} is in our research agenda. Some comments and comparisons, however, are worth stressing. 

\cite{SimsZha06AER} propose a Markov-switching SVAR with both structural coefficients and variances to be potentially time-varying over the different regimes. Interesting, they also allow some of the parameters to remain unchanged, without, however, treating these as restrictions to be used in the identification issue. They propose a set of different specifications and find that the best fit occurs with time variation in the variances of the disturbances only, without, however, closing the doors to a possible regime-specific monetary policy reaction function. Their evidence about a different conduct of the monetary policy over time, however, is not as strong as in \cite{BG06RESTATS}. 

Related to this point, the main difference with respect to \cite{SimsZha06AER} is about the definition of the reaction function of the Fed. As a consequence of their identification strategy, the behavior of the Fed is modeled through a negative relation between the policy rate and a monetary aggregate (M2 divisia). Our identification strategy, instead, allows for a potential immediate reaction, although with different strength over the regimes, to demand and supply shocks. 

Figure \ref{fig:ModelV_DS} reports, for the agnostic identification scheme of Model IV, the systematic response of the Fed, in terms of the policy rate, to these two kinds of shocks during the great inflation and great moderation periods. In the upper panels we show that effectively the response of the Fed to a price shock is immediate, especially during the great inflation period. Furthermore, the response of the Fed is not limited to an immediate reaction but is long-lasting, particularly during the great moderation. Moving the attention to the response of the Fed to demand shocks, in the lower panels of Figure \ref{fig:ModelV_DS} we highlight the differences, if any, in the conduct of the monetary policy in the two regimes. In both cases the Fed significantly responds positively to inflationary demand shocks, and shows, substantially, a similar conduct in the two regimes. Definitely, if the Fed contributed to maintain more stable inflation and output, it seems to be due to the more aggressive and long-lasting response to price shocks during the great moderation regime. Such evidence is even stronger for the identification scheme of Model V.

Another point that we want to emphasize is about the potential response of output to a monetary policy shock. Our identification strategy in Model II, that benefits of the restrictions across the regimes, leaves completely unrestricted the reaction of the economic activity to unexpected monetary policy interventions, that is precluded both in \cite{BG06RESTATS} and \cite{SimsZha06AER}. Focusing on both panels in Figure \ref{fig:ModelII}, we show that in both the regimes the response is recessionary and significant on impact and for seven/eight quarters after a monetary squeeze. 

As already said, \cite{BG06RESTATS} also have evidence of different effects of monetary policy on the economy. The results reported in Figures \ref{fig:ModelII} to \ref{fig:ModelV} are, in some sense, in line with their findings. However, differently from these authors, our results show a more pronounced response of output to monetary policy shocks during the great moderation. \cite{BG06RESTATS}, who obtain their results under the rather strong assumption of a Cholesky identification scheme in both regimes, explain their findings by stating that ``monetary policy has more successfully managed to moderate the effects of exogenous disturbances since the early 1980s, possibly by systematically responding more decisively to fluctuations in economic conditions'' (\citeauthor{BG06RESTATS}, \citeyear{BG06RESTATS}, page 446). In this respect, our view is essentially the same, but we change the perspective in interpreting this thought. In fact, we completely agree with the view of a different reaction function of the central bank in the two regimes, but if the Fed \textit{systematically} responds more aggressively to demand and supply shocks since the early 1980s, an unexpected monetary policy shock should have more pronounced effects on the real economy, being this last shock orthogonal to demand and supply shocks. In other words, if the Feb has gained more credibility since Paul Volcker started his mandate, economic agents do expect the Fed to automatically react to both demand and supply shocks.\footnote{See \cite{CGG00QJE}, \citeauthor{CS05RED} (\citeyear{CS01NBER}, \citeyear{CS05RED}), and \cite{Boivin06JMCB}, for further evidence on the more aggressive response of the Fed to inflation since the early 1980s.} Short term interest rate will incorporate these changes according to this \textit{known} reaction function. What remains in the hands of the Fed, is thus a more powerful monetary instrument that should affect more prices and output. This alternative interpretation is completely in line with our findings.

\begin{figure}[H]
        \caption{Impulse response functions of output gap to a monetary policy shock: Model I (locally-identified SVAR-WB: equality restrictions, only).}
  \label{fig:ModelI}
\begin{center}
                \subfigure[{Model I: first regime}]{
                \includegraphics[angle=270,origin=c,scale=0.24]{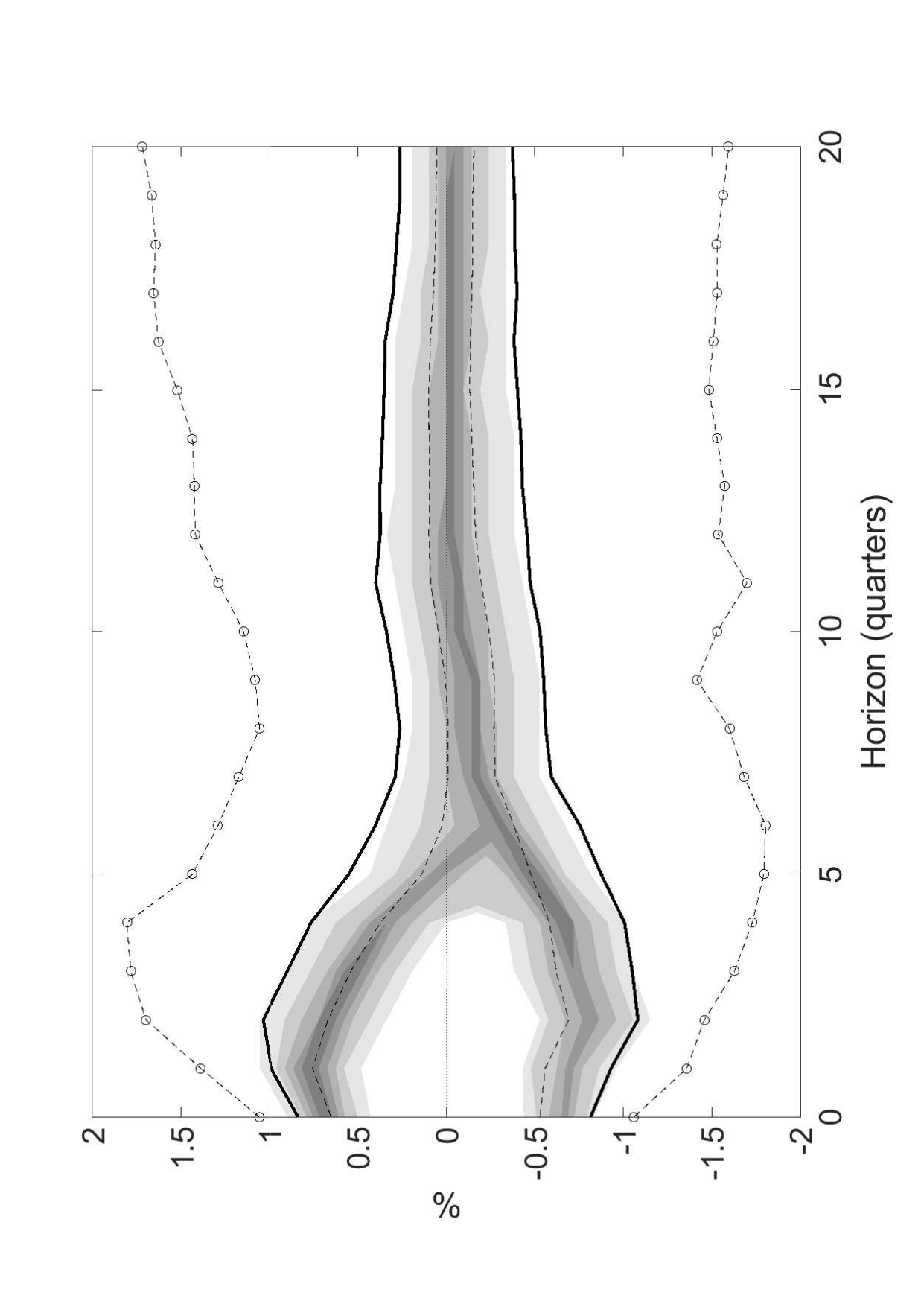}
                }
                \subfigure[{Model I: second regime}]{
                \includegraphics[angle=270,origin=c,scale=0.24]{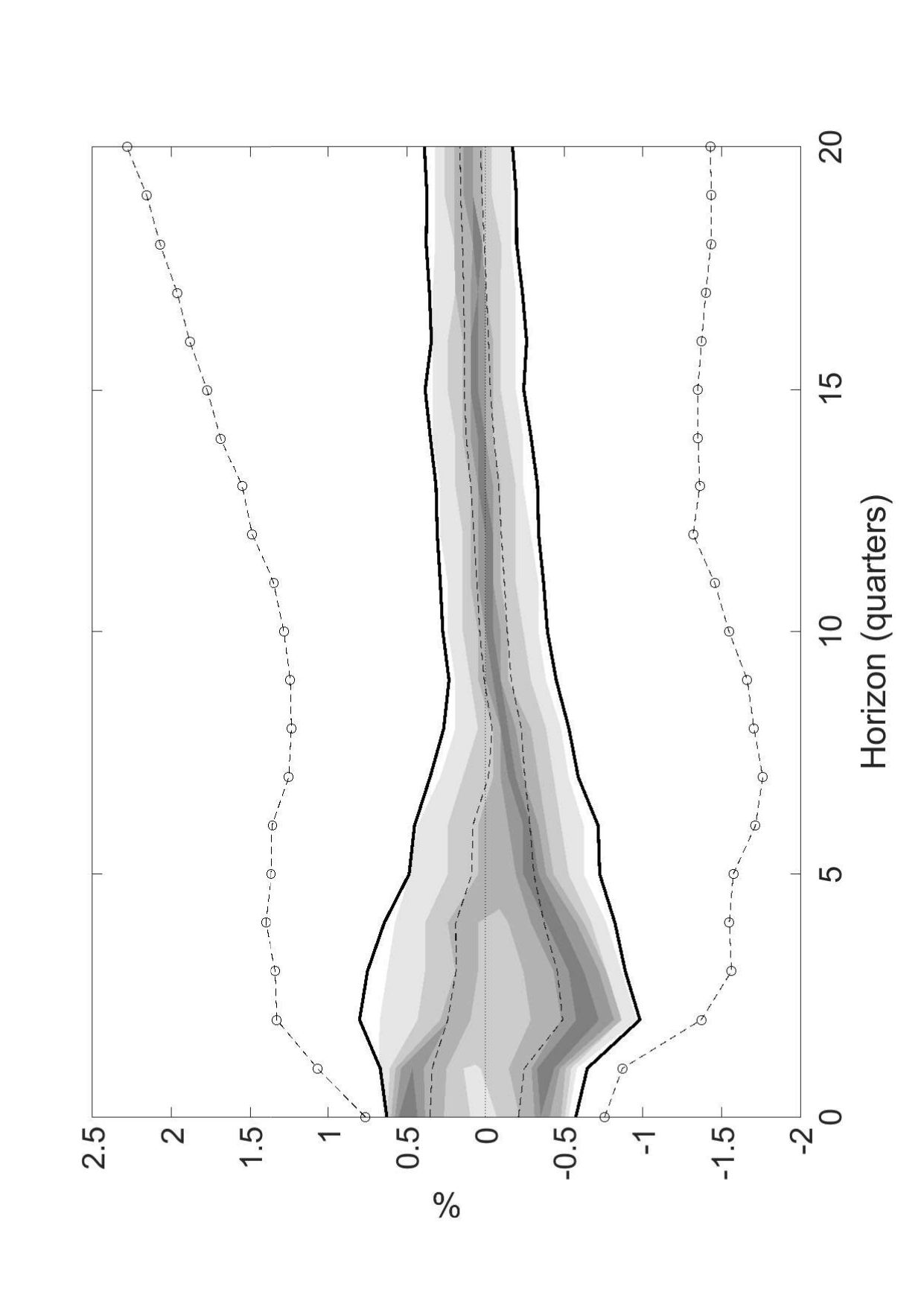}
                }
\end{center}
\begin{minipage}{\textwidth}%
{\scriptsize{\textit{Notes}: In both figures, in different scales of gray, we show the highest $90\%$, $75\%$, $50\%$, $25\%$ and $10\%$ quantiles of the distribution, for each horizon $h$, obtained through the Bayesian approach. Frequentist-valid intervals, obtained by retaining the $90\%$ draws of the reduced-form parameters with highest value of the posterior density function, are shown as the intervals delimited by the dotted-circle lines. In dotted lines we report the set of posterior means while in solid lines the upper and lower bounds of the robust credible regions with credibility $90\%$ obtained through the robust Bayesian approach.\par}}
\end{minipage}
\end{figure}

\begin{figure}[H]
        \caption{Impulse response functions of output gap to a monetary policy shock: Model II (locally-identified SVAR-WB: equality and sign restrictions).}
  \label{fig:ModelII}
\begin{center}
                \subfigure[{Model II: first regime}]{
                \includegraphics[angle=270,origin=c,scale=0.24]{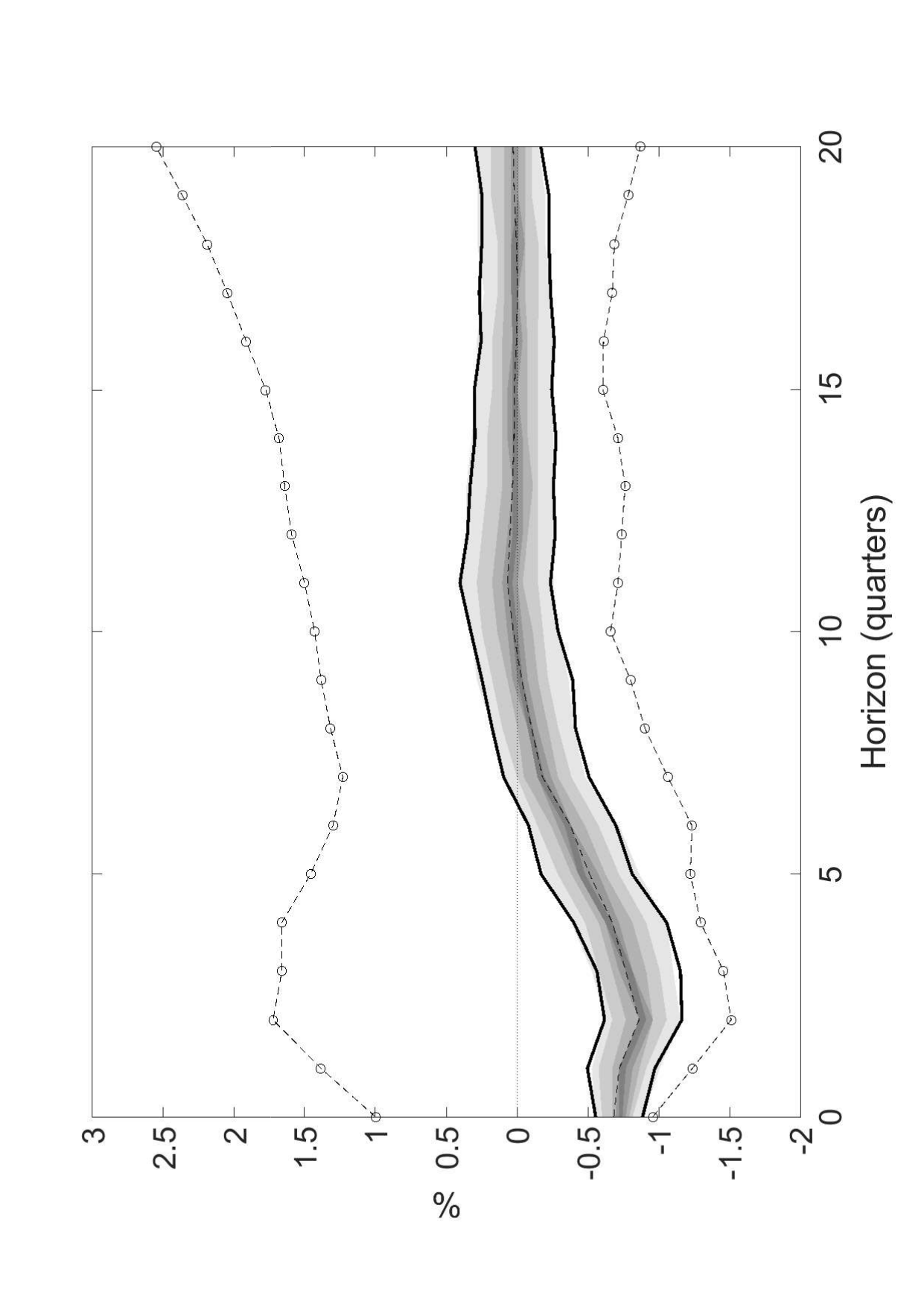}
                }
                \subfigure[{Model II: second regime}]{
                \includegraphics[angle=270,origin=c,scale=0.24]{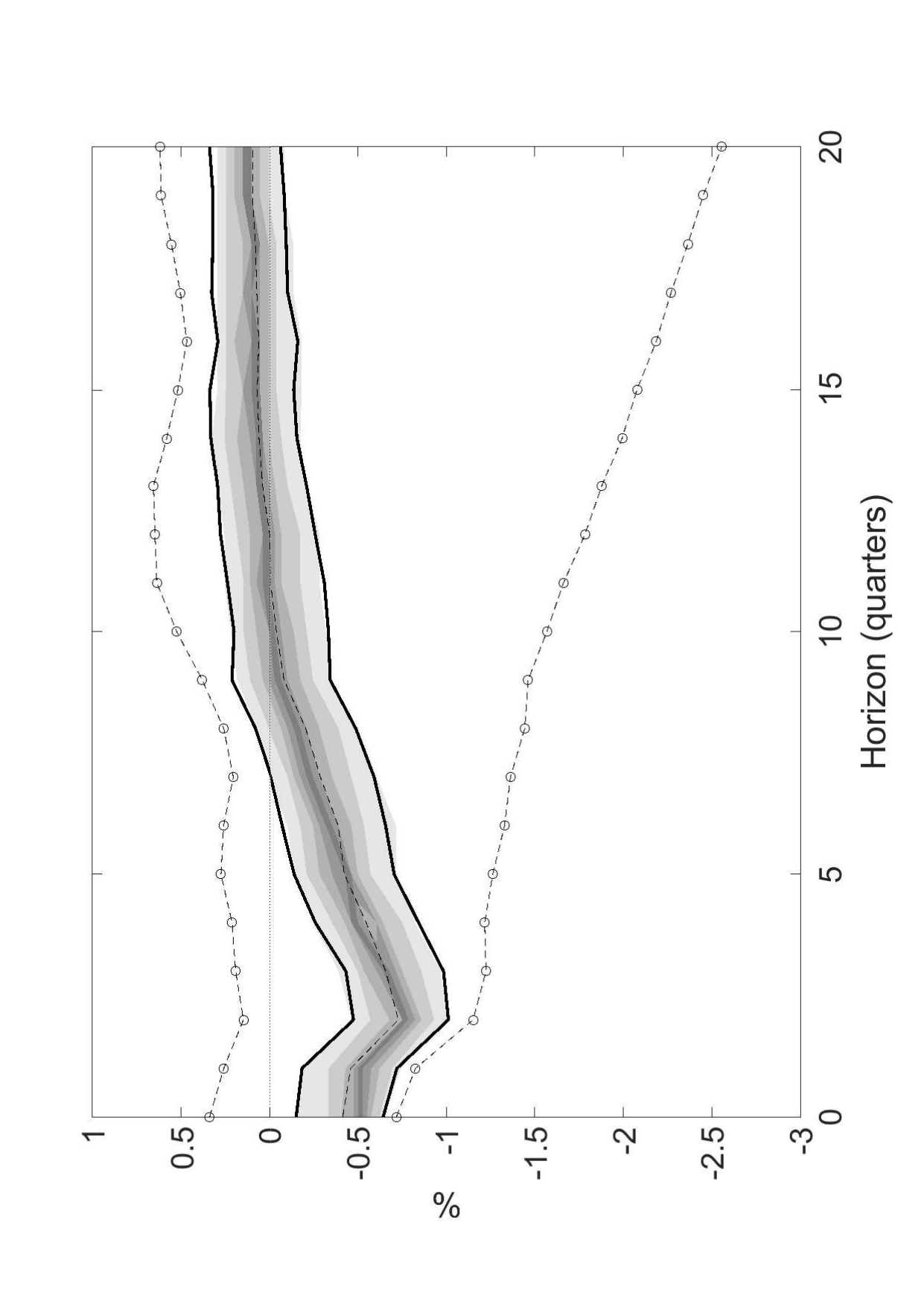}
                }
\end{center}
\begin{minipage}{\textwidth}%
{\scriptsize{\textit{Notes}: In both figures, in different scales of gray, we show the highest $90\%$, $75\%$, $50\%$, $25\%$ and $10\%$ quantiles of the distribution, for each horizon $h$, obtained through the Bayesian approach. Frequentist-valid intervals, obtained by retaining the $90\%$ draws of the reduced-form parameters with highest value of the posterior density function, are shown as the intervals delimited by the dotted-circle lines. In dotted lines we report the set of posterior means while in solid lines the upper and lower bounds of the robust credible regions with credibility $90\%$ obtained through the robust Bayesian approach.\par}}
\end{minipage}
\end{figure}

\begin{figure}[H]
        \caption{Impulse response functions of output gap to a monetary policy shock: Model III (set-identified SVAR-WB: equality and sign restrictions).}
  \label{fig:ModelIII}
\begin{center}
                \subfigure[{Model III: first regime}]{
                \includegraphics[angle=270,origin=c,scale=0.24]{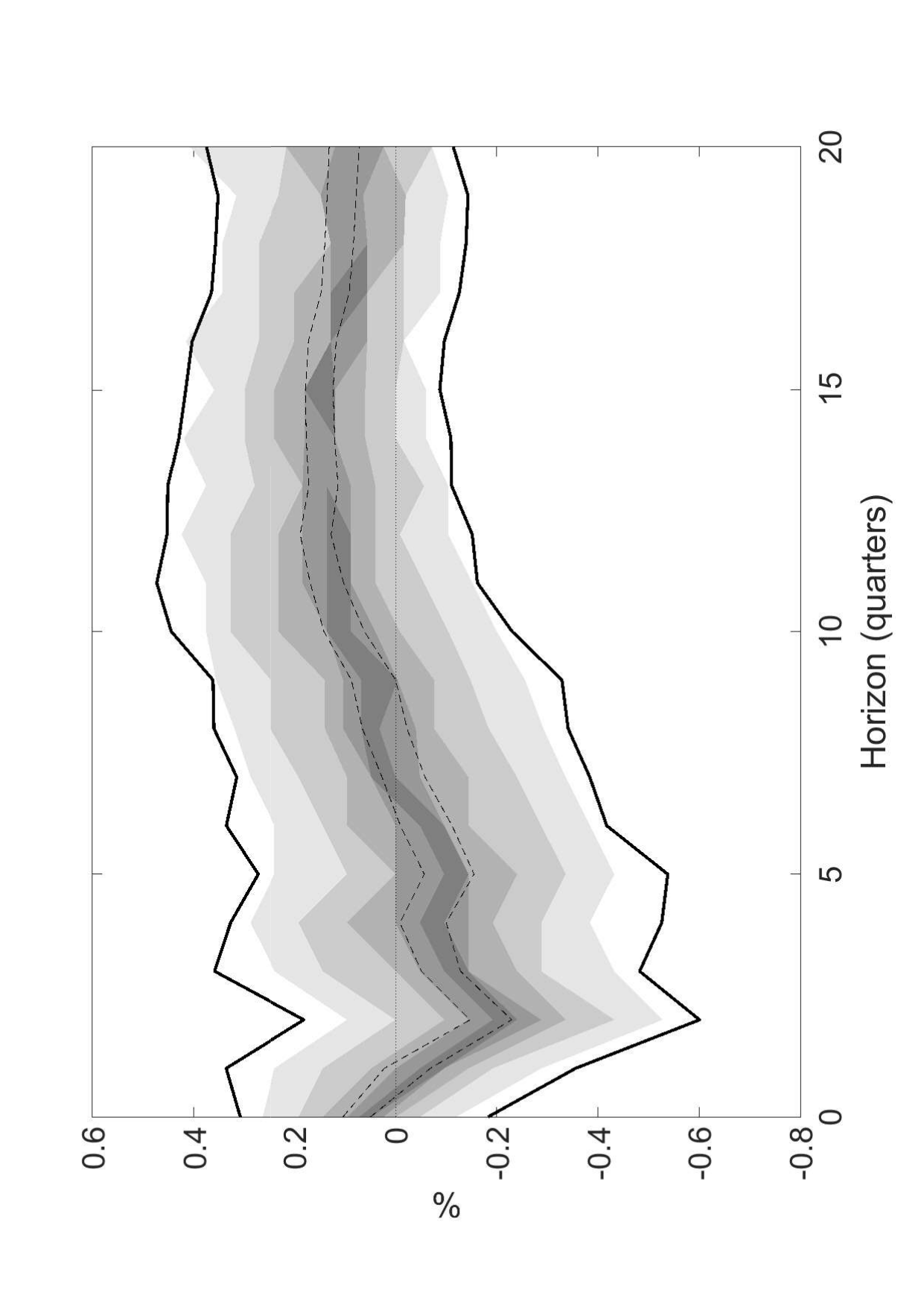}
                }
                \subfigure[{Model III: second regime}]{
                \includegraphics[angle=270,origin=c,scale=0.24]{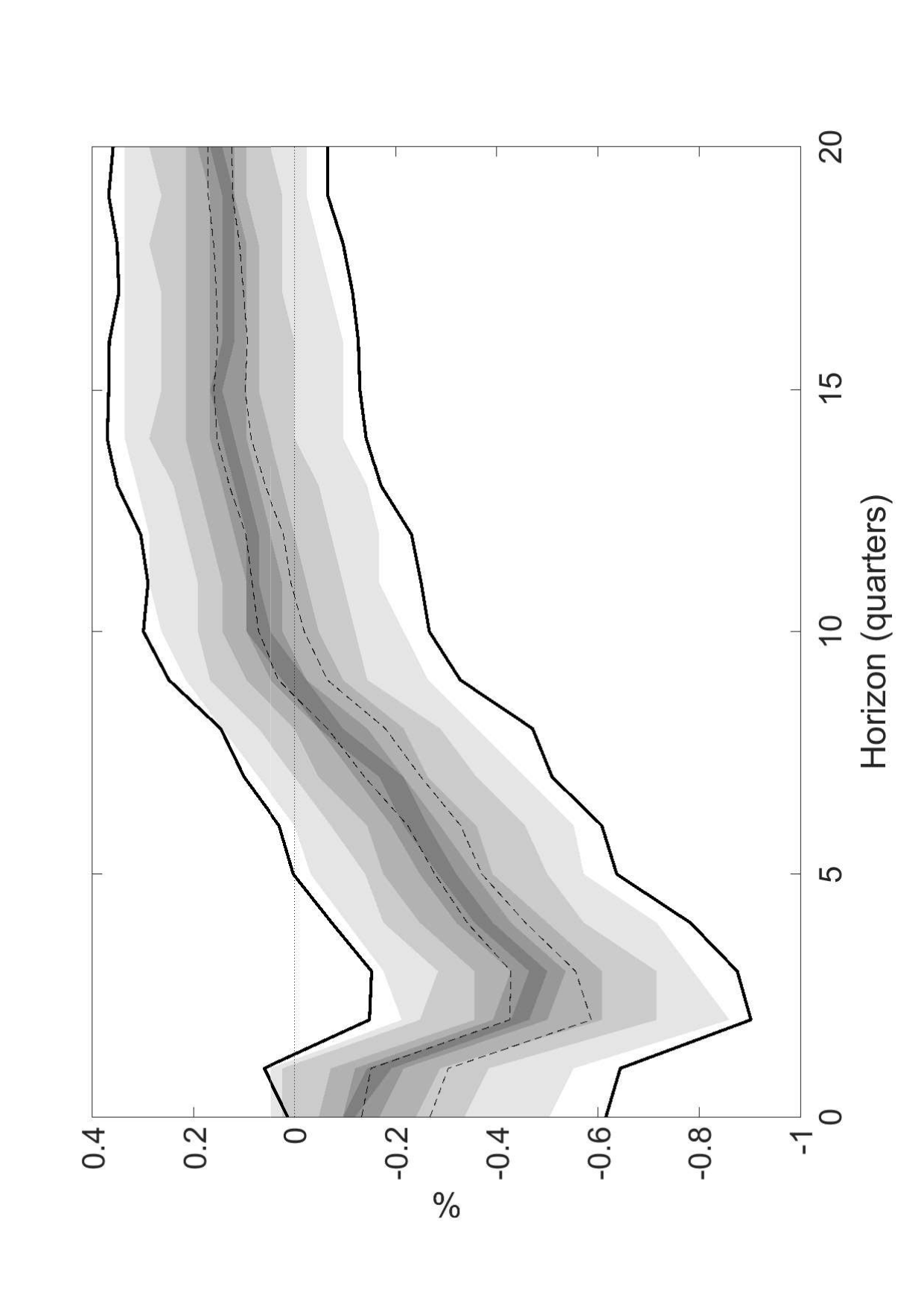}
                }
\end{center}
\begin{minipage}{\textwidth}%
{\scriptsize{\textit{Notes}: In both figures, in different scales of gray, we show the highest $90\%$, $75\%$, $50\%$, $25\%$ and $10\%$ quantiles of the distribution, for each horizon $h$, obtained through the Bayesian approach. In dotted lines we report the set of posterior means while in solid lines the upper and lower bounds of the robust credible regions with credibility $90\%$ obtained through the robust Bayesian approach.\par}}
\end{minipage}
\end{figure}

\begin{figure}[H]
        \caption{Impulse response functions of output gap to a monetary policy shock: Model IV (set-identified SVAR-WB: sign restrictions within regimes, only).}
  \label{fig:ModelIV}
\begin{center}
                \subfigure[{Model IV: first regime}]{
                \includegraphics[angle=270,origin=c,scale=0.24]{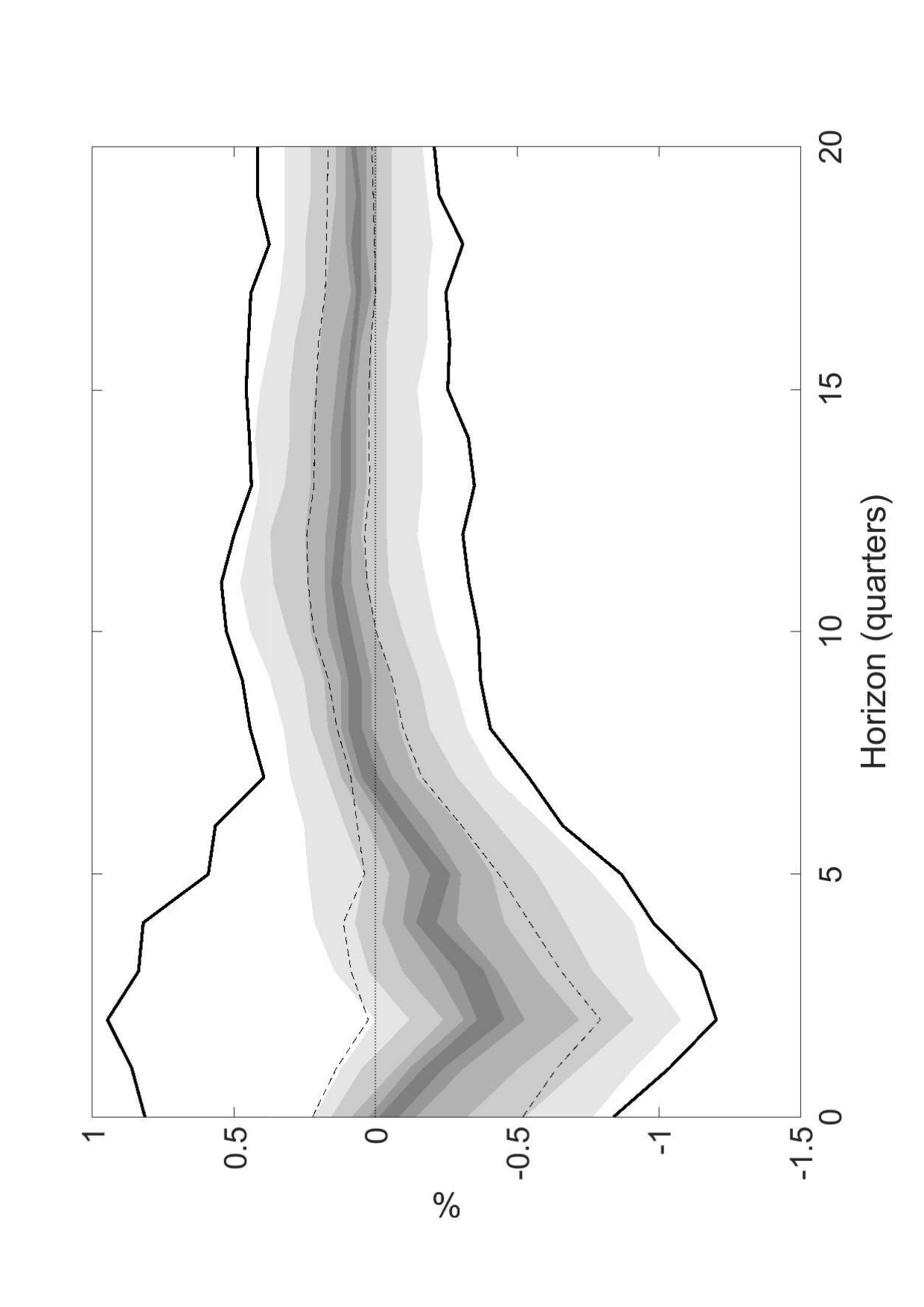}
                }
                \subfigure[{Model IV: second regime}]{
                \includegraphics[angle=270,origin=c,scale=0.24]{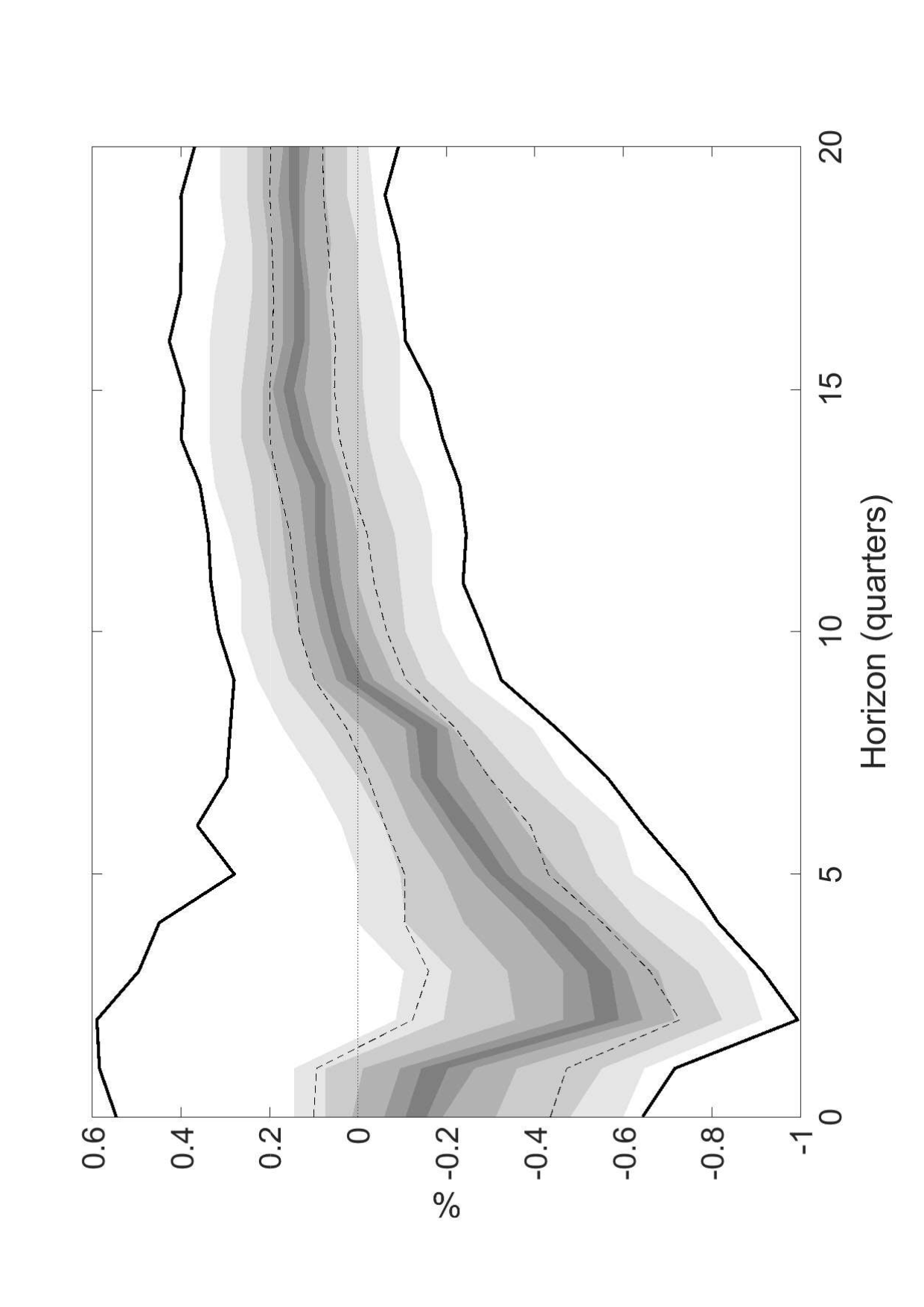}
                }
\end{center}
\begin{minipage}{\textwidth}%
{\scriptsize{\textit{Notes}: In both figures, in different scales of gray, we show the highest $90\%$, $75\%$, $50\%$, $25\%$ and $10\%$ quantiles of the distribution, for each horizon $h$, obtained through the Bayesian approach. In dotted lines we report the set of posterior means while in solid lines the upper and lower bounds of the robust credible regions with credibility $90\%$ obtained through the robust Bayesian approach.\par}}
\end{minipage}
\end{figure}

\begin{figure}[H]
        \caption{Impulse response functions of output gap to a monetary policy shock: Model V (set-identified SVAR-WB: sign restrictions + ranking restrictions + FEV restrictions).}
  \label{fig:ModelV}
\begin{center}
                \subfigure[{Model V: first regime}]{
                \includegraphics[angle=270,origin=c,scale=0.24]{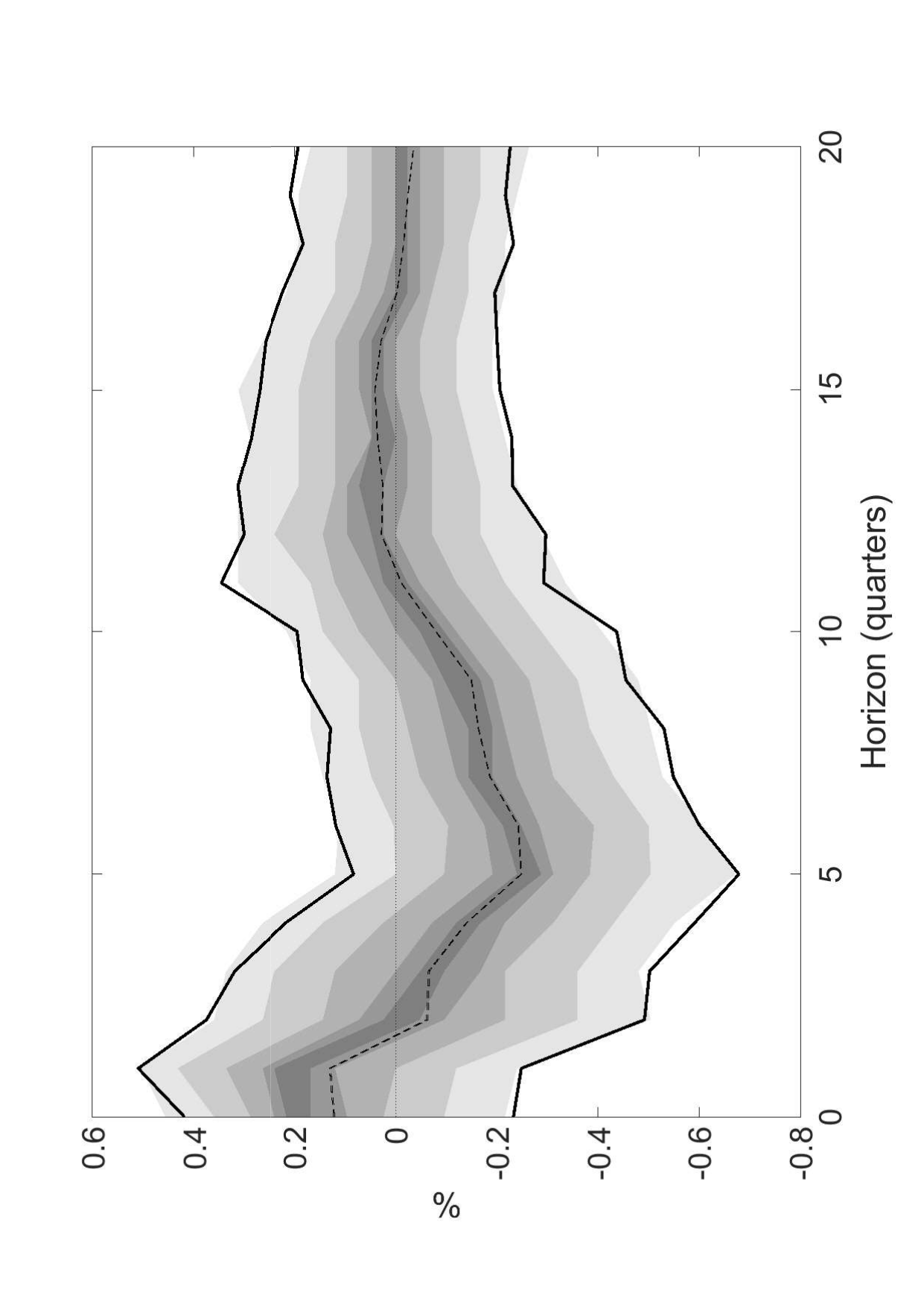}
                }
                \subfigure[{Model V: second regime}]{
                \includegraphics[angle=270,origin=c,scale=0.24]{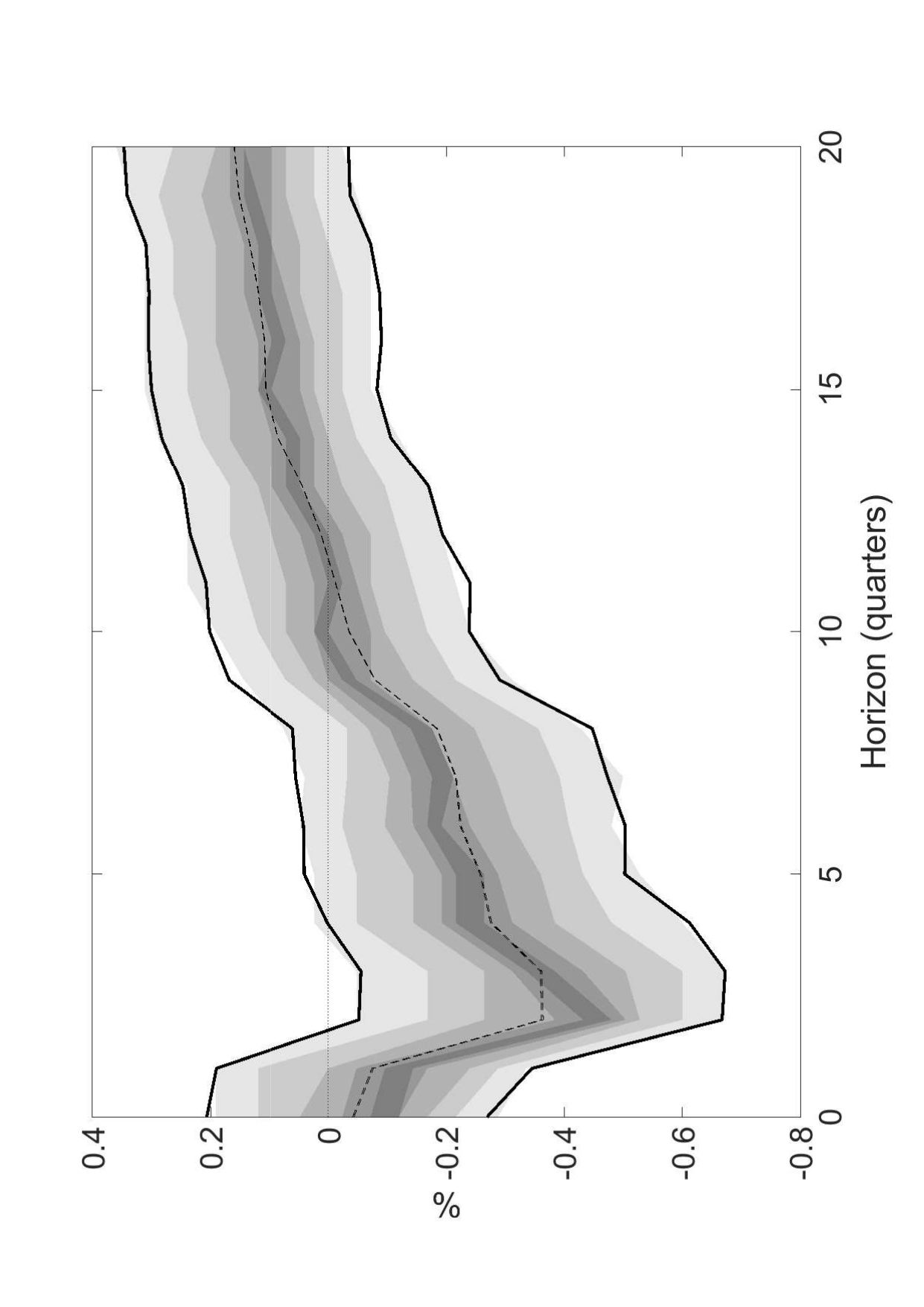}
                }
\end{center}
\begin{minipage}{\textwidth}%
{\scriptsize{\textit{Notes}: In both figures, in different scales of gray, we show the highest $90\%$, $75\%$, $50\%$, $25\%$ and $10\%$ quantiles of the distribution, for each horizon $h$, obtained through the Bayesian approach. In dotted lines we report the set of posterior means while in solid lines the upper and lower bounds of the robust credible regions with credibility $90\%$ obtained through the robust Bayesian approach.\par}}
\end{minipage}
\end{figure}

\begin{figure}[H]
        \caption{Impulse response functions of interest rate to supply and demand shocks: Model IV (set-identified SVAR-WB: sign restrictions within regimes, only).}
  \label{fig:ModelV_DS}
\begin{center}
								\subfigure[{Response to supply shocks: first regime}]{
                \includegraphics[angle=270,origin=c,scale=0.24]{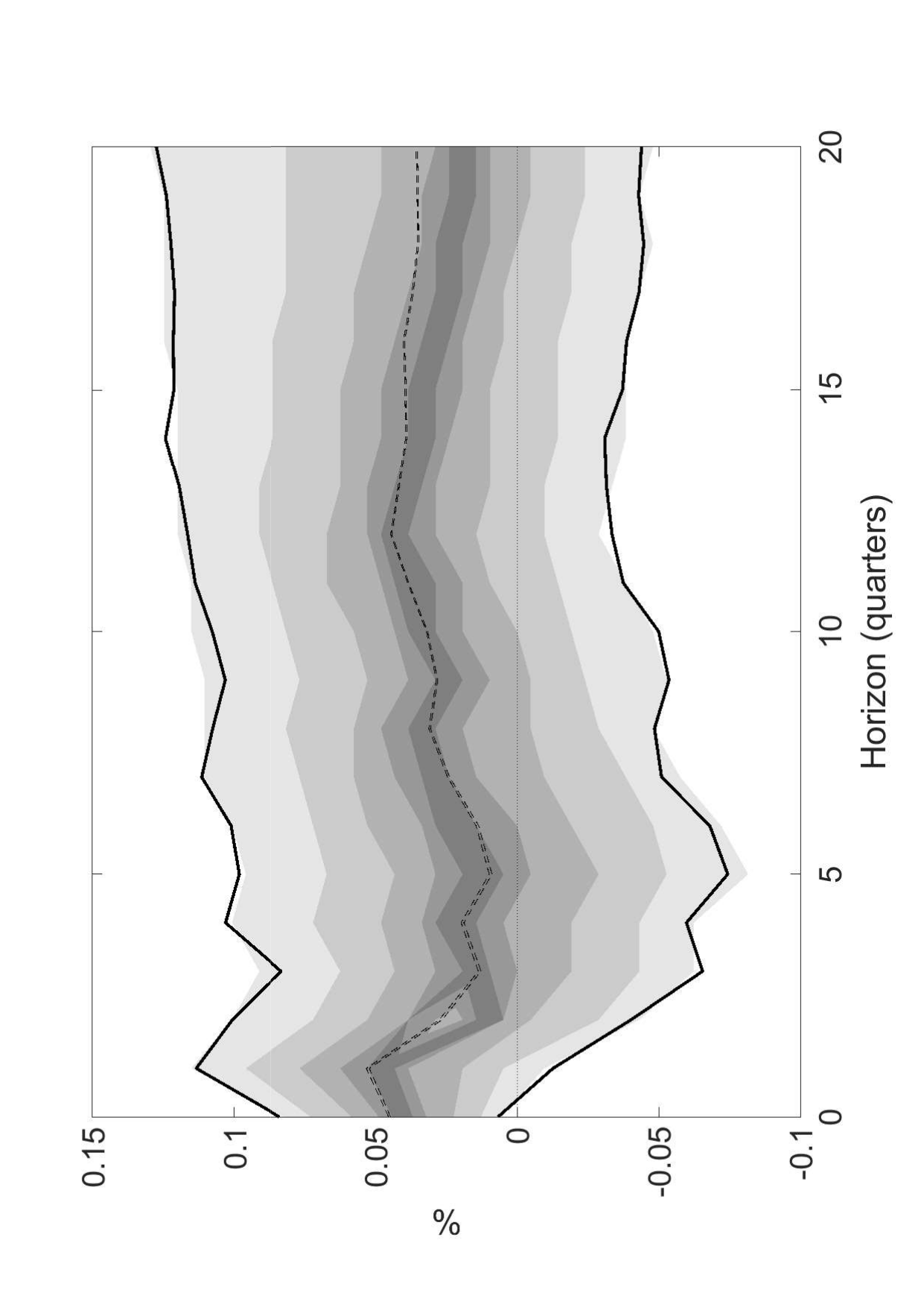}
                }
                \subfigure[{Response to supply shocks: second regime}]{
                \includegraphics[angle=270,origin=c,scale=0.24]{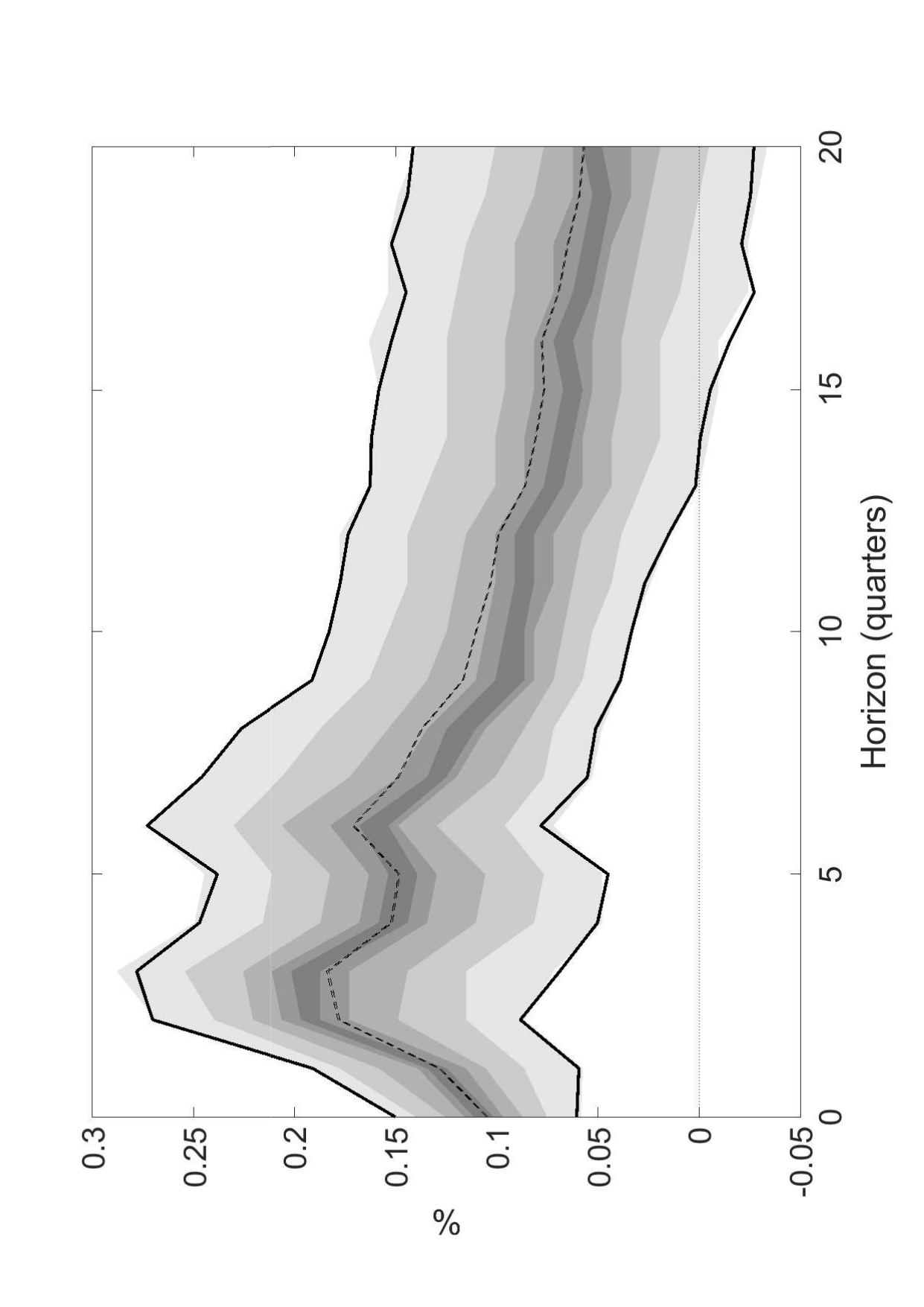}
                }\\
                \subfigure[{Response to demand shocks: first regime}]{
                \includegraphics[angle=270,origin=c,scale=0.24]{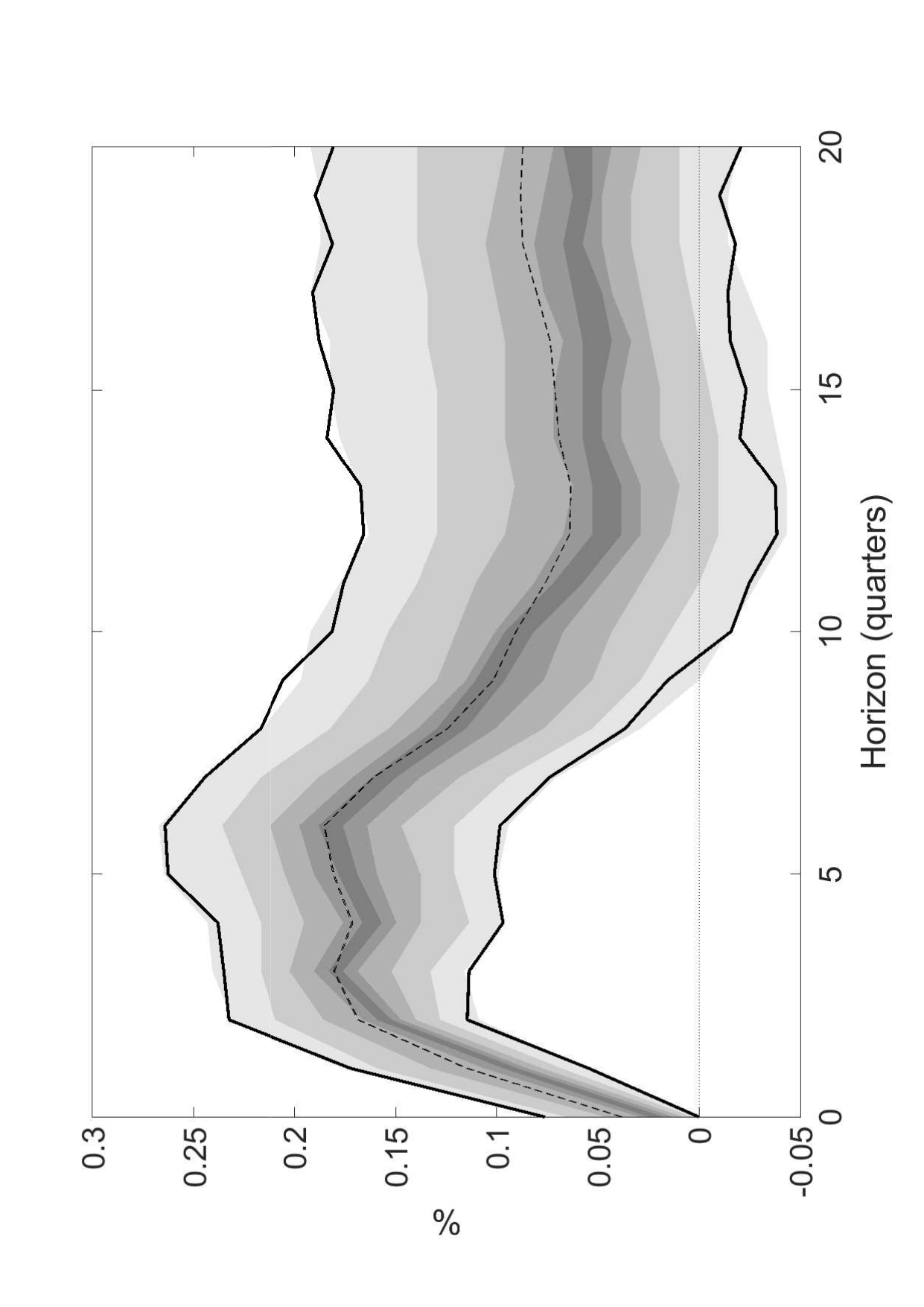}
                }
                \subfigure[{Response to demand shocks: second regime}]{
                \includegraphics[angle=270,origin=c,scale=0.24]{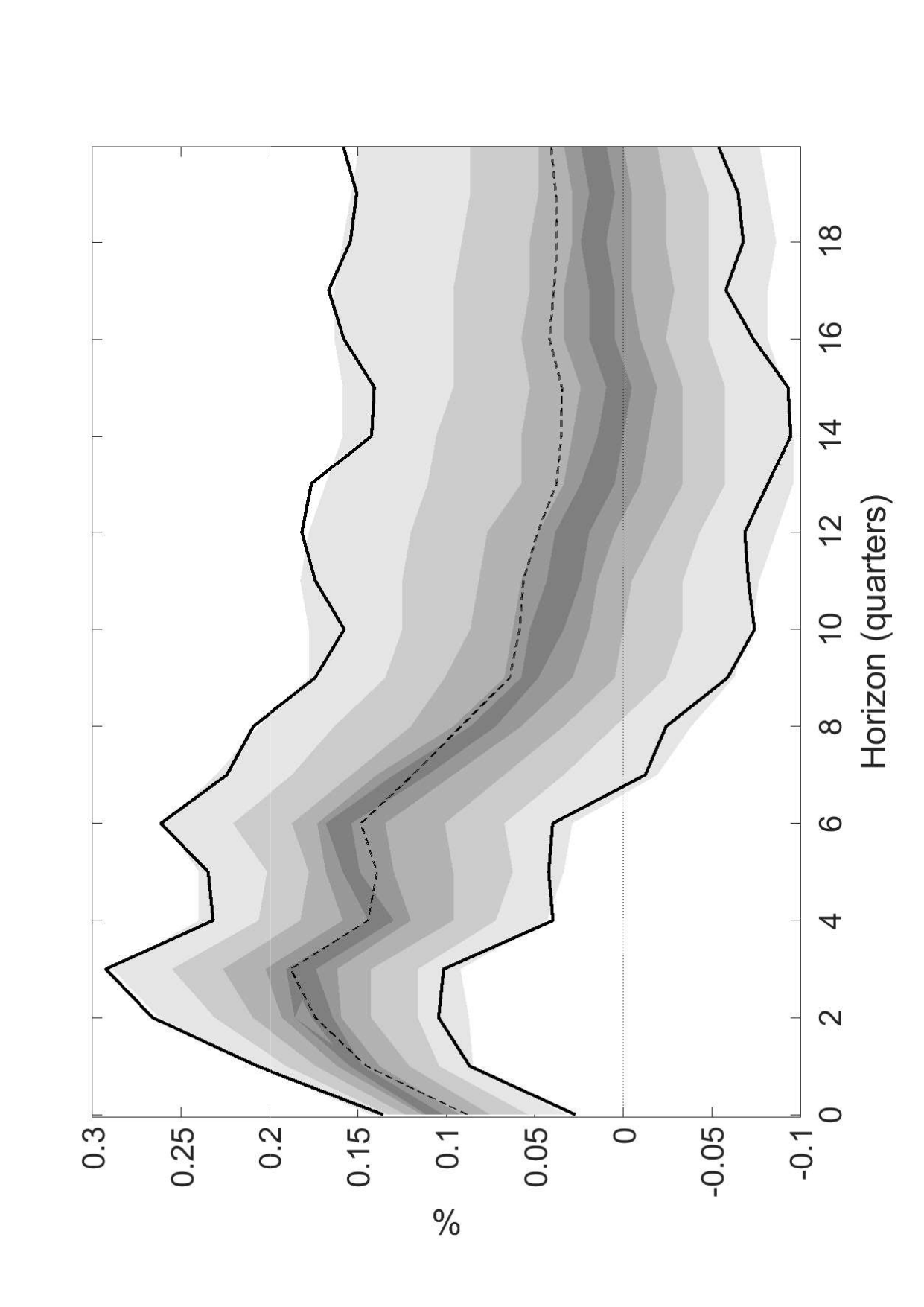}
                }
\end{center}
\begin{minipage}{\textwidth}%
{\scriptsize{\textit{Notes}: In all figures, in different scales of gray, we show the highest $90\%$, $75\%$, $50\%$, $25\%$ and $10\%$ quantiles of the distribution, for each horizon $h$, obtained through the Bayesian approach. Frequentist-valid intervals, obtained by retaining the $90\%$ draws of the reduced-form parameters with highest value of the posterior density function, are shown as the intervals delimited by the dotted-circle lines. In dotted lines we report the set of posterior means while in solid lines the upper and lower bounds of the robust credible regions with credibility $90\%$ obtained through the robust Bayesian approach.\par}}
\end{minipage}
\end{figure}

%%%%%%%%%%%%%%%%%%%%%%%%%%%%%%%%%%%%%%%%%%%%%%%%%%%%%%%%%%%%%%%%%%%%%%%%%%%%%%%
%														 SECTION CONCLUSION																%
%%%%%%%%%%%%%%%%%%%%%%%%%%%%%%%%%%%%%%%%%%%%%%%%%%%%%%%%%%%%%%%%%%%%%%%%%%%%%%%

\pagebreak

\section{Conclusion}
\label{sec:conclusion}

This paper deals with SVAR models with structural breaks and offers five main contributions. After presenting the representation of SVAR-WBs, we first study the identification theory and propose a set of results for easily checking whether the model is identified or not. The identification of the structural parameters can be obtained through a set of restrictions on the parameters, and new in the framework of SVAR-WBs, on functions of them. It is also shown that one can have gains in terms of identification when such restrictions are imposed across-regimes as stability restrictions. This characteristic, however, generates local identification, instead of the more suitable global one. The number of observationally equivalent identified points in the parametric space can be reduced by the inclusion of mild sign restrictions on the parameters or on functions of them, like impulse responses. 

Second, given the presence of local identification, we provide a new strategy for obtaining all the solutions of the identification issue, i.e. all the isolated and observationally equivalent parameters satisfying the imposed equality and inequality restrictions. 

Third, we provide three strategies for doing inference on the structural parameters that jointly consider all the admissible parameters. These approaches for estimating and doing inference on the parameters are completely new in this literature, specifically when compared to the ML approach generally used in SVARs and SVAR-WBs, that reduces to focus on just one of the multiple observationally equivalent admissible parameters. 

Fourth, we also consider the case of only set-identified SVAR-WB. Inequality constraints can be imposed, both on impulse responses and FEVs, within each regime or across the regimes. While the formers essentially coincide with traditional inequality restrictions in the SVAR literature, the latters are completely new, and offer an innovative tool for practitioners dealing with SVARs characterized by potential structural breaks on the parameters. For set-identified SVAR-WBs, a Bayesian and robust Bayesian approach have been proposed to perform inference on the structural parameters or on the impulse responses of interest.

Finally, all the theoretical results developed in the paper are implemented through an empirical analysis about the conduct of the monetary policy pursued by the Federal Reserve over the great inflation and great moderation periods. In this respect, we provide evidence about substantial different reaction functions over the two regimes, in particular characterized by a more aggressive and persistent response to price shocks during the great moderation. Such a different conduct leads to a more pronounced response of output to monetary policy shocks during this latter period.

\newpage

\renewcommand{\thesection}{\Alph{section}}\setcounter{section}{0}

\setlength{\baselineskip}{12pt} 
\bibliographystyle{ecta}
%\bibliography{acompat,SVARhetero}
\bibliography{SVAR_WB}

\newpage

%%%%%%%%%%%%%%%%%%%%%%%%%%%  APPENDIX %%%%%%%%%%%%%%%%%%%%%%%%%%%%%%%%%%%%%%%%%%%
\appendix
%%%%%%%%%%%%%%%%%%%%%%%%%%%%%%%%%%%%%%%%%%%%%%%%%%%%%%%%%%%%%%%%%%%%%%%%%%%%%%%%%

%%%%%%%%%%%%%%%%%%%%%%%%%%%  APPENDIX Proofs %%%%%%%%%%%%%%%%%%%%%%%%%%%%%%%%%%%%%%%%%%%
\section{Proofs} 
\label{app:Proofs}

\begin{proof}[Proof of Theorem \ref{theo:SuffCond}] 
	If both the transformation functions $G_p(\cdot)$ and the infinitesimal rotations $Q_p$ are admissible for any $p\in\{1,\ldots,s\}$, and 
	the restrictions are organized as discussed in Section \ref{sec:ident_restr}, and specifically, with the equations ordered according to 
	Eq. (\ref{eq:ordering}), we have
	\[
	\begin{array}{lcl}
	(R_{11,j}\,|\ldots|\,R_{1s,j})
	\left(\begin{array}{c}
	G_1\,Q_1\\
	\vdots\\
	G_s\,Q_s
	\end{array}\right)e_j=(0) 
	& \Longleftrightarrow & 
	(R_{11,j}\,|\ldots|\,R_{1s,j})
	\left(\begin{array}{c}
	G_1(I_n+H_1)\\
	\vdots\\
	G_2(I_n+H_s)
	\end{array}\right)e_j=(0) \\
	&\vdots&\\
	(R_{s1,j}\,|\ldots|\,R_{ss,j})
	\left(\begin{array}{c}
	G_1\,Q_1\\
	\vdots\\
	G_s\,Q_2
	\end{array}\right)e_j=(0) 
	& \Longleftrightarrow & 
	(R_{s1,j}\,|\ldots|\,R_{ss,j})
	\left(\begin{array}{c}
	G_1(I_n+H_1)\\
	\vdots\\
	G_2(I_n+H_2)
	\end{array}\right)e_j=(0) 
	\end{array}
	\]
	where, as before, $G_p = G(A_{p0}^\prime,A_{p+}^\prime)$, $\forall\;p\in\{1,\ldots,s\}$. Given that the infinitesimal transformation 
	is admissible, these quantities can be compacted as
	\[
	\begin{array}{lcl}
	\left(\begin{array}{ccc}
	R_{11,j} &\ldots& R_{1s,j}\\
  \vdots & \ddots & \vdots\\
	R_{s1,j} &\ldots& R_{ss,j}\\
  \end{array}\right)
	\left(\begin{array}{c}
	G_1\,H_1\\
	\vdots\\
	G_s\,H_s
	\end{array}\right)e_j=(0) 
	& \Longleftrightarrow & 
	\left(\begin{array}{ccc}
	R_{11,j} &\ldots& R_{1s,j}\\
  \vdots & \ddots & \vdots\\
	R_{s1,j} &\ldots& R_{ss,j}\\
	\end{array}\right)
	\left(\begin{array}{c}
	G_1\,H_{1,j}\\
	\vdots\\
	G_s\,H_{s,j}
	\end{array}\right)=(0) 
	\end{array}
	\]
	where the generic $H_{p,j}$ is the j-th column of $H_p$, $p\in\{1,\ldots,s\}$. 
	Using the properties of the vec operator and the Kronecker product we obtain that the previous relation can be written as
	\begin{equation}
	\label{eq:SysTwoReg}
	\begin{array}{l}
	\left(\begin{array}{ccc}
	R_{11,j} &\ldots& R_{1s,j}\\
  \vdots & \ddots & \vdots\\
	R_{s1,j} &\ldots& R_{ss,j}\\
	\end{array}\right)
	\left(\begin{array}{c}
	(H_{1,j}^\prime\otimes I_g)\ve G_1\\
	\vdots\\
	(H_{s,j}^\prime\otimes I_g)\ve G_s
	\end{array}\right)=(0)\\ 
	\\
	\hspace{3cm}\Longleftrightarrow 
	\underset{f_{j}\times sgn}{\left(\begin{array}{ccc}
	R_{1,j}^*(H_{1,j}^\prime\otimes I_g) &\ldots& R_{s,j}^*(H_{s,j}^\prime\otimes I_g)\\
  \end{array}\right)}
	\underset{sgn\times 1}{\left(\begin{array}{c}
	\ve G_1\\
	\vdots\\
	\ve G_s
	\end{array}\right)}=
	\underset{f_{j}\times 1}{(\;0\;)} 
	\end{array}
	\end{equation}
	Moreover, the explicit form of the restrictions introduced in Eq. (\ref{eq:ExpForm}) allows to write the constraints on all 
	$j\in\{1,\ldots,n\}$ as
	\begin{eqnarray}
		\left(\begin{array}{c} \ve G_1\\\vdots\\\ve G_s  \end{array}\right)=
		\left(\begin{array}{c}
		S_{1,1}^*\theta_{1}\\
		S_{1,2}^*\theta_{2}\\
		\vdots\\
		S_{1,n}^*\theta_{n}\\
		\hdashline
		\vdots\\
		\hdashline
		S_{s,1}^*\theta_{1}\\
		S_{s,2}^*\theta_{2}\\
		\vdots\\
		S_{s,n}^*\theta_{n}
		\end{array}\right)
		& \Longleftrightarrow &
		\left(\begin{array}{c} \ve G_1\\\vdots\\\ve G_s  \end{array}\right)=
		\left(\begin{array}{cccc}
		S_{1,1}^*& & &\\
		& S_{1,2}^*& &\\
		& & \ddots &\\
		& & & S_{1,n}^*\\
		\hdashline
		\vdots\\
		\hdashline
		S_{s,1}^*& & &\\
		& S_{s,2}^*& &\\
		& & \ddots &\\
		& & & S_{s,n}^*\\
		\end{array}\right)
		\left(\begin{array}{c}
		\theta_{1}\\ \theta_{2}\\ \vdots \\ \theta_{n}
		\end{array}\right)\nonumber\\
		& \Longleftrightarrow &
		\left(\begin{array}{c} \ve G_1\\\vdots\\\ve G_s  \end{array}\right)=
		S^*\,\theta \label{eq:vecG1Gs}
  \end{eqnarray}	
	where $\theta_j$ is the vector collecting the $\tau_j$ free parameters of the \textit{j}-th shock.
	Furthermore, remembering that $H_1,H_2,\ldots,H_s$ are skew-symmetric, the \textit{j}-th column of each of them can be written as
	\[
	H_{p,j}=
	\left(\begin{array}{c}
	h_{p,1j}\\ h_{p,2j}\\ \vdots\\ h_{p,nj}
	\end{array}\right)
	\hspace{0.5cm}\text{with}\hspace{0.3cm}p\in\{1,\ldots,s\}
	\]
	with, by definition, $h_{1,jj}=h_{2,jj}=\ldots=h_{s,jj}=0$, for $j\in\{1,\ldots,n\}$. Hence, for any $p\in\{1,\ldots,s\}$ and 
	$j\in\{1,\ldots,n\}$
	\[
	(H_{p,j}^\prime \otimes I_g ) = (h_{p,1j}I_g\,|\,h_{p,2j}I_g\,|\cdots|\;0\,I_g\;|\cdots|\;h_{p,nj}I_g), 
	\]
	that leads to
	\[
	R_{p,j}^*(H_{p,j}^\prime \otimes I_g ) = (h_{p,1j}R_{p,j}^*\,|\,h_{p,2j}R_{p,j}^*\,|\cdots|\;0\,R_{p,j}^*\;|\cdots|\;h_{p,nj}R_{p,j}^*),
	\hspace{0.1cm}\text{with }p\in\{1,\ldots,s\}, 
	\hspace{0.1cm}j\in\{1,\ldots,n\}.
	\]
	As already mentioned in Section \ref{sec:ident_restr}, let the equations be ordered according to Eq. (\ref{eq:ordering}). 
	Starting from Eq. (\ref{eq:SysTwoReg}) and considering the first shock ($j=1$), i.e. the most restricted one, if we combine with Eq. 
	(\ref{eq:vecG1Gs}) we obtain the following system of equations
	\begin{equation}
	\label{eq:RadoSys}
		\begin{array}{rcl}
		0\;R_{1,1}^*S_{1,1}^*\theta_{1}+h_{1,21}R_{1,1}^*S_{1,2}^*\theta_{2}+\ldots+h_{1,n1}R_{1,1}^*S_{1,n}^*\theta_{n}+\hspace{1cm}&&\\
		+0\;R_{2,1}^*S_{2,1}^*\theta_{1}+h_{2,21}R_{2,1}^*S_{2,2}^*\theta_{2}+\ldots+h_{2,n1}R_{2,1}^*S_{2,n}^*\theta_{n}+&&\\
		\vdots\hspace{3cm}&&\\
		+0\;R_{s,1}^*S_{s,1}^*\theta_{1}+h_{s,21}R_{s,1}^*S_{s,2}^*\theta_{2}+\ldots+h_{s,n1}R_{s,1}^*S_{s,n}^*\theta_{n}\hspace{0.35cm}& = & 0
		\end{array}
		\end{equation}
		or, using the definitions in Eq.s (\ref{eq:Vjik})-(\ref{eq:Vjtheta}), more compactly,
		\begin{equation}
		\label{eq:RadoSysComp}
		\setlength{\dashlinegap}{1pt}
		\underset{f_{1}\times s(n-1)}{\left[
		\begin{array}{c:c:c:c}
		V_{1,1,2}\theta_{2}\;|\ldots|\;V_{1,1,n}\theta_n\; & \;V_{1,2,2}\theta_{2}\;|\ldots|\;V_{1,2,n}\theta_n\; & \ldots & 
		\;V_{1,s,2}\theta_{2}\;|\ldots|\;V_{1,s,n}\theta_n\\
		\end{array}\right]}
		\underset{s(n-1)\times 1}{\left(\begin{array}{c}
		h_{1,21}\\\vdots\\ h_{1,n1}\\\hdashline h_{2,21}\\\vdots\\ h_{2,n1}\\\hdashline \vdots\\\hdashline h_{s,21}\\\vdots\\ h_{s,n1}
		\end{array}\right)} = \underset{f_{1}\times 1}{(\;0\;).}
	\end{equation}
	If the quantity under the square brackets, that we denoted as $\Vtj$ and that depends on the restrictions (through the matrices $V_{1,i,k}$) 
	and on the unknown parameters (through the vectors $\theta_k$), has full column rank, then the 
	only admissible solution is the null vector, leading to the conclusion that the first column of $H_1,H_2,\ldots,H_s$ is equal to zero. 
	The first shock, in each of the regimes, will be thus locally identified.
		
	Checking for the identification of the second shock ($j=2$) can be easily done by checking whether the second column of $H_1,H_2,\ldots,H_s$ 
	is a zero vector, too. A system of equations of the kind of (\ref{eq:RadoSys}) can be written where the unknowns are represented by 
	$H_{1,2},H_{2,2},\ldots,H_{s,2}$. It is interesting to note, however, that if the first equation of the model is identified in all the
	regimes, then we already have that $h_{1,21}=h_{2,21}=\ldots=h_{s,21}=0$, which implies that $h_{1,12}=h_{2,12}=\ldots=h_{s,12}=0$.
	The system of equations, thus, reduces to
	\begin{equation}
		\label{eq:RadoSys2}
		\begin{array}{rcl}
		h_{1,32}R_{1,2}^*S_{1,3}^*\theta_{3}+\ldots+h_{1,n2}R_{1,2}^*S_{1,n}^*\theta_{n}+\hspace{1cm}&&\\
		+h_{2,32}R_{2,2}^*S_{2,3}^*\theta_{3}+\ldots+h_{2,n2}R_{2,2}^*S_{2,n}^*\theta_{n}+&&\\
		\vdots\hspace{3cm}&&\\
			+h_{s,32}R_{s,2}^*S_{s,3}^*\theta_{3}+\ldots+h_{s,n2}R_{s,2}^*S_{s,n}^*\theta_{n}\hspace{0.35cm}& = & 0
		\end{array}
		\end{equation}
		or, more compactly,
		\begin{equation}
		\label{eq:RadoSysComp2}
		\setlength{\dashlinegap}{1pt}
		\underset{f_{2}\times s(n-2)}{\left[
		\begin{array}{c:c:c:c}
		V_{2,1,3}\theta_{3}\;|\ldots|\;V_{2,1,n}\theta_n\; & \;V_{2,2,3}\theta_{3}\;|\ldots|\;V_{2,2,n}\theta_n\; & \ldots & 
		\;V_{2,s,3}\theta_{3}\;|\ldots|\;V_{2,s,n}\theta_n\\
		\end{array}\right]}
		\underset{s(n-2)\times 1}{\left(\begin{array}{c}
		h_{1,32}\\\vdots\\ h_{1,n2}\\\hdashline h_{2,32}\\\vdots\\ h_{2,n2}\\\hdashline \vdots\\\hdashline h_{s,32}\\\vdots\\ h_{s,n2}
		\end{array}\right)} = \underset{f_{2}\times 1}{(\;0\;).}
	\end{equation}
	We can use, thus, the same strategy as before to check whether the unique solution to the previous system is the null vector. In the same
	way we can proceed and check the identification of all the equations of the SVAR-WB with $s$ regimes. This proves the theorem.
	
	\vspace{0.5cm}
	
	Moreover, going back to system (\ref{eq:RadoSysComp}), it admits the null vector as the unique solution only if, at least, the number of
	equations is not less than the number of unknowns, i.e. $f_1\geq s(n-1)$. Similarly, for the system (\ref{eq:RadoSysComp2}), it must 
	hold that $f_2\geq s(n-2)$, and similarly for all the equations of the SVAR-WB. This proves our Corollary \ref{corol:OrderCond}.
	
	\vspace{0.5cm}
	Furthermore, considering jointly all the shocks of the SVAR-WB, it must hold that 
	$f=f_1+f_2+\ldots+f_n\geq s(n-1)+s(n-2)+\ldots +s=s[(n-1)+(n-2)+\ldots+1] = sn(n-1)/2$, proving formally the necessary condition of 
	Theorem \ref{theo:OrdCond}.
\end{proof}	
	
\vspace{0.5cm}

We first report two lemmas providing important mathematical results that will be used for proving the theorem. 
Details can be obtained in \cite{Spivak65}.

\vspace{0.5cm}
 
\begin{lemma}
  \label{lemmaRWZ1}
	Let $M_1$ and $M_2$ be two differentiable manifolds of the same dimension and let $h$ be a continuously differentiable function 
	from $M_1$ to $M_2$. If $E$ is a set of measure zero in $M_1$, then $h(E)$ is of measure zero in $M_2$.
\end{lemma}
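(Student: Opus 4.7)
The plan is to reduce the manifold-level statement to the classical Euclidean result that a $C^1$ map $f\colon\Omega\to\mathbb{R}^n$ on an open set $\Omega\subset\mathbb{R}^n$ sends Lebesgue-null subsets of $\Omega$ to Lebesgue-null subsets of $\mathbb{R}^n$, and then patch this together with a countable atlas argument.

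First, I would invoke second countability (part of the standing definition of ``differentiable manifold'') to choose countable atlases $\{(U_\alpha,\varphi_\alpha)\}$ for $M_1$ and $\{(V_\beta,\psi_\beta)\}$ for $M_2$, each mapping into $\mathbb{R}^n$ with $n=\dim M_1=\dim M_2$. Writing $E=\bigcup_\alpha (E\cap U_\alpha)$ and $h(E\cap U_\alpha)=\bigcup_\beta h(E\cap U_\alpha)\cap V_\beta$, the countable subadditivity of measure reduces everything to showing that, for each pair $(\alpha,\beta)$, the composite $f_{\alpha\beta}=\psi_\beta\circ h\circ\varphi_\alpha^{-1}$, defined on the open set $\varphi_\alpha(U_\alpha\cap h^{-1}(V_\beta))\subset\mathbb{R}^n$, sends the null set $\varphi_\alpha(E\cap U_\alpha\cap h^{-1}(V_\beta))$ to a null set in $\mathbb{R}^n$. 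Each $f_{\alpha\beta}$ is continuously differentiable since $h$, $\varphi_\alpha^{-1}$, and $\psi_\beta$ all are.

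The core step is then the Euclidean claim. I would cover the open domain of $f_{\alpha\beta}$ by countably many closed cubes contained in it; on any such cube $Q$, the continuous derivative $Df_{\alpha\beta}$ is bounded by some constant $M_Q$, so the mean value inequality gives that $f_{\alpha\beta}|_Q$ is Lipschitz with constant $L_Q=\sqrt{n}\,M_Q$. Given $\varepsilon>0$, cover the null set $E_0\cap Q$ (where $E_0$ is the relevant null image in $\mathbb{R}^n$) by countably many cubes of side $s_i$ with $\sum_i s_i^n<\varepsilon$; the Lipschitz property forces the image of each to fit inside a cube of side $L_Q s_i$, so the image is covered by cubes of total volume $\le L_Q^n\varepsilon$. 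Letting $\varepsilon\to 0$ gives that $f_{\alpha\beta}(E_0\cap Q)$ is null, and countable additivity over the cubes $Q$ and over the pairs $(\alpha,\beta)$ concludes the proof.

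The main obstacle is the Lipschitz-covering estimate in the Euclidean step, which is exactly the place where the equal-dimension hypothesis is used: the factor $L_Q^n$ precisely matches the $s_i^n$ in the cube-volume bound. Without matching dimensions, the same argument would produce a volume bound in a space of different dimension and break down (as indeed the lemma would be false, e.g.\ for $t\mapsto(t,0)$ from $\mathbb{R}$ into $\mathbb{R}^2$). Beyond that, the bookkeeping is routine: the reduction to Euclidean charts is standard provided one keeps track of which countable unions are being taken, and the cube-covering estimate only uses continuity of $Df_{\alpha\beta}$ on compact subsets of its domain.
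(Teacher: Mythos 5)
Your proof is correct and is the standard argument: the paper itself offers no proof of this lemma, deferring instead to \cite{Spivak65}, and the chart-by-chart reduction followed by the Lipschitz cube-covering estimate for $C^1$ maps between open subsets of $\mathbb{R}^n$ is exactly the argument found there. The only nit is a harmless constant: the image of a side-$s_i$ cube under an $L_Q$-Lipschitz map sits in a cube of side $L_Q\sqrt{n}\,s_i$ rather than $L_Q s_i$, which changes nothing since the total volume bound still tends to zero with $\varepsilon$.
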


\begin{lemma}
  \label{lemmaRWZ2}
	Let $M_1$ be an i-dimensional differentiable manifold and $M_2$ a j-dimensional differentiable manifold and let $h$ be a continuously 
	differentiable function from $M_1$ to $M_2$ whose derivative is of rank j for all $x\in M_1$. If $\hat{M}_2$ is a differentiable 
	submanifold of $M_2$ and $E$ is a set of measure zero in $\hat{M}_2$, then $\hat{M}_1=h^{-1}\left(\hat{M}_2\right)$ is a differentiable 
	submanifold of $M_1$ and $h^{-1}(E)$ is of measure zero in $\hat{M}_1$.
\end{lemma}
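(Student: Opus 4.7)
The plan is to combine two classical facts from differential topology: first, that the preimage of a submanifold under a submersion is a submanifold of the ambient space, and second, that in a neighborhood where a $C^1$ submersion is a Cartesian projection the preimage of a null set is null by Fubini. Both facts follow from the submersion theorem (a special case of the constant rank theorem) applied locally, and the global statement is then assembled via a countable cover of coordinate charts.

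First I would establish the submanifold structure of $\hat{M}_1 = h^{-1}(\hat{M}_2)$. Fix $x \in \hat{M}_1$ and $y = h(x) \in \hat{M}_2$, and set $k = \dim \hat{M}_2$. Because $\hat{M}_2$ is a $C^1$ submanifold of $M_2$, there is a chart $(V,\psi)$ of $M_2$ around $y$ such that $\psi(\hat{M}_2 \cap V) = \psi(V) \cap (\mathbb{R}^k \times \{0\})$, so $\hat{M}_2$ is locally cut out by the $j-k$ coordinate functions $\psi^{k+1},\ldots,\psi^{j}$. Since $h$ is everywhere a submersion, the composition $\psi \circ h$ is still a submersion on $h^{-1}(V)$, and the submersion theorem yields a chart $(U,\varphi)$ of $M_1$ around $x$ in which $\psi \circ h \circ \varphi^{-1}$ equals the standard projection $\pi : \mathbb{R}^i \to \mathbb{R}^j$, $(u_1,\ldots,u_i)\mapsto(u_1,\ldots,u_j)$. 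In this chart, $\varphi(\hat{M}_1 \cap U) = \varphi(U) \cap \pi^{-1}(\mathbb{R}^k \times \{0\}) = \varphi(U) \cap \bigl(\mathbb{R}^k \times \{0\}^{j-k} \times \mathbb{R}^{i-j}\bigr)$, a slice by a coordinate subspace. Hence $\hat{M}_1$ is a $C^1$ submanifold of $M_1$ of dimension $i - j + k$, with $(U,\varphi)$ providing a natural submanifold chart.

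Second I would verify that $h^{-1}(E)$ has measure zero in $\hat{M}_1$. Since measure zero on a manifold is defined chart-wise (and is a well-defined notion because $C^1$ diffeomorphisms preserve Lebesgue null sets), the hypothesis on $E$ says that the image of $E \cap V$ under the first $k$ coordinates of $\psi$ is a Lebesgue-null subset $E' \subset \mathbb{R}^k$. In the chart $(U,\varphi)$ constructed above, $\varphi(h^{-1}(E) \cap U) \subset E' \times \{0\}^{j-k} \times \mathbb{R}^{i-j}$, viewed inside the submanifold chart $\mathbb{R}^k \times \mathbb{R}^{i-j}$ of $\hat{M}_1$. By Fubini's theorem, $E' \times \mathbb{R}^{i-j}$ has $(i-j+k)$-dimensional Lebesgue measure zero, so $h^{-1}(E) \cap U$ is null in the submanifold $\hat{M}_1$. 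Finally, since $M_1$ is second-countable, so is $\hat{M}_1$, and we can cover $\hat{M}_1$ by countably many such charts; the countable union of null sets is null, giving the desired conclusion.

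The main obstacle is the first step, namely producing a single chart on $M_1$ that simultaneously straightens $h$ to a projection and flattens $\hat{M}_2$ to a coordinate plane; this is handled by first choosing $\psi$ to flatten $\hat{M}_2$ and \emph{then} invoking the submersion theorem for $\psi \circ h$. Once this combined normal form is in place, the measure-theoretic conclusion is routine Fubini plus second-countability.
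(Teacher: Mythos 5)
Your argument is correct. Note that the paper does not actually prove this lemma: it is stated as a known result from differential topology, with the reader referred to Spivak (1965) for details (it is likewise quoted without proof in Rubio-Ram\'irez, Waggoner and Zha, whose Lemma this is). What you have written is precisely the standard argument one finds in such references --- flatten $\hat{M}_2$ with a submanifold chart $\psi$, apply the submersion (constant rank) normal form to $\psi \circ h$ to obtain a chart in which $h$ is a coordinate projection and $\hat{M}_1$ is a coordinate slice, then dispose of the null-set claim by Fubini in that chart and a countable cover via second countability. The one point worth making explicit is that the chart $U$ must be shrunk so that $h(U)\subseteq V$ before identifying $\hat{M}_1\cap U$ with the slice $\varphi(U)\cap\bigl(\mathbb{R}^k\times\{0\}^{j-k}\times\mathbb{R}^{i-j}\bigr)$; this is implicit in your construction and costs nothing. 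So the proposal is a complete and correct proof of a statement the paper itself leaves to the literature.
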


%\begin{lemma}
%  \label{lemmaRWZ3}
%	For $1\leq j\leq n$, let $V_j$ be a linear subspace of $\mathbb{R}^m$ and let $V=V_1\times V_2\times\cdots\times V_n$. 
%	Define $S$ to be the set of all $\left(v_1,\cdots,v_n\right)\in V$, whose span is of dimension strictly less than $n$. 
%	Either $S=V$ or $S$ is a set of measure zero in $V$. 
%\end{lemma}

\vspace{0.5cm}

\begin{proof}[Proof of Theorem \ref{theo:IdentEverywhere}] 
  Consider the first shock ($j=1$). This is locally identified if and only if the system (\ref{eq:RadoSysComp}) admits the null 
	vector as the unique possible solution.
	Equivalently, we can say that it is identified if and only if the $f_{1}\times s(n-1)$ matrix $\textbf{V}_1(\theta)$ has full column rank. 
	However, the rank of $\textbf{V}_1(\theta)$ is equal to the rank of the square matrix $\big(\textbf{V}_1(\theta)^\prime
	\textbf{V}_1(\theta)\big)$, that will be full when the determinant is different from zero. 
	The determinant $|\textbf{V}_1(\theta)^\prime\textbf{V}_1(\theta)|$, however, is a finite order polynomial in 
	$\theta_1^\ast=\left(\theta_2^\prime,\,\ldots,\,\theta_n^\prime\right)^\prime$.
	Let
	\begin{equation}
	\label{eq:W1theta}
		W_1 = \big\{\theta_1\in \Re^{\tau_1} \hspace{0.3cm}\big| \hspace{0.3cm} \text{det}|\textbf{V}_1(\theta)^\prime\textbf{V}_1(\theta)|\neq 0
		\big. \big\}.\nonumber
	\end{equation}
	Since the Lebesgue measure of zero set of a polynomial function is zero (see, e.g. \citeauthor{Johansen95}, \citeyear{Johansen95}, or
	\citeauthor{CT05poly}, \citeyear{CT05poly}), then either $W_1$ is empty because the rank is not full, or its complement is empty 
	(as the determinant does not depend on $\theta_1$). Similarly we can proceed for all $j=\{1,\ldots,n\}$. We thus obtain the set
	\begin{equation}
	\label{eq:Wtheta}
		W = \big\{\theta\in \Re^{\tau} \hspace{0.3cm}\big| \hspace{0.3cm} 
		\text{det}|\textbf{V}_j(\theta)^\prime\textbf{V}_j(\theta)|\neq 0,
		\hspace{0.2cm} \forall \hspace{0.2cm} j=\{1,\ldots,n\}\big. \big\}.\nonumber
	\end{equation}
	that, either is empty, or its complement is of measure zero in $\Re^{\tau_1+\ldots+\tau_n}$.
	
	Now, from Eq. (\ref{eq:vecG1Gs}) we know that
	\[
		\left(\begin{array}{c} \ve G_1\\\vdots\\\ve G_s  \end{array}\right)=S^*\,\theta. 
	\]
	If we denote by $Z$ the set
	\begin{equation}
	\label{eq:Z}
		Z = \big\{\textbf{G}\Atot \in \Re^{sgn} \hspace{0.3cm}\big| \hspace{0.1cm} 
		\text{the sufficient condition holds for the shock $j=1$ in all regimes}\big. \big\}.\nonumber
	\end{equation}
	then, as the rank of $S^*$ is full by construction, according to Lemma \ref{lemmaRWZ2}, if $W$ is of measure zero in $\Re^\tau$,
	then $Z$ is of measure zero in $\Re^{sgn}$.
	
	Finally, let $K$ be defined as in Eq. (\ref{eq:k})
 	\[
		K = \left\{\Atot\in \Ar \hspace{0.3cm}\left| \hspace{0.3cm} \text{the sufficient rank condition holds}\right. \right\}.
	\]
	Hence, if $Z$ is of measure zero in $\Re^{sgn}$ then, according to Condition \ref{cond:Regular} and Lemma \ref{lemmaRWZ2}, 
	$K$ will be of measure zero as well.
\end{proof}	

\vspace{0.5cm}

\begin{proof}[Proof of Theorem \ref{theo:NecSuff}]
	The idea of the proof is to check whether in the neighborhood of an admissible parameter point there exists another point 
	observationally equivalent and still admissible. 	This can be done by checking about the existence of at least one skew 
	symmetric matrix $H_p$, $p=1,\ldots,s$, such that the orthogonal matrix $Q_p=(I_n+H_p)$ transforms the admissible parameters 
	$(A_{p0},A_{p+})$ into the still admissible $(Q_pA_{p0},Q_pA_{p+})$.
	
	The strategy, thus, is similar to the proof of Theorem \ref{theo:SuffCond}, but the identification check has to be pursued for all 
	the equations of the SVAR-WB jointly. In fact, if the number of restrictions imposed for the \textit{j}-th shock (in all the regimes) 
	$q_{\cdot j} < s(n-j)$, then such shock cannot be identified as the associated system of equations for evaluating the existence of any 
	$H_p$ has more unknowns (the \textit{j}-th column of $H_1,\ldots,H_s$) than equations (defined by the matrix $\tVtj$). 
	The identification issue, however, can be addressed by considering all the shocks jointly.
	
	Starting from the system (\ref{eq:RadoSysComp}) and extending it for all the shocks we can write
	\begin{equation}
	\label{eq:NecSuffSys}
		\left(\begin{array}{cccc}
		\tilde{\textbf{V}}_1(\theta) & & &\\
		& \tilde{\textbf{V}}_2(\theta) & &\\
		& & \ddots & \\
		& & & \tilde{\textbf{V}}_n(\theta)
		\end{array}\right)\:\:
		\ve\:\left(\begin{array}{c}
		H_1\\H_2\\ \vdots\\ H_s
		\end{array}\right)
		= \left(\begin{array}{c}
		0\\0\\ \vdots\\ 0
		\end{array}\right).
	\end{equation}
	However, the $n\times n$ matrices $H_1,\ldots,H_s$ are skew-symmetric, each of which characterized by $n(n-1)/1$ distinct elements 
	collected into the $h_1,\ldots,h_s$ vectors, respectively, 	representing the effective unknowns in the system (\ref{eq:NecSuffSys}). 
	
	Moving from $[\ve\left(H_1^\prime,\,\ldots\,, H_s^\prime\right)^\prime]$ to $\left(h_1^\prime\,,\ldots\,,h_s^\prime\right)^\prime$ 
	can be done through the matrix $\Ttt$ defined in Eq. (\ref{eq:Ttt}). In fact, if we define the matrix $\tilde{T}_{n,s}$ as in 
	Eq. (\ref{eq:Tt}), it is easy to see that
	\begin{eqnarray}
		\ve\:\left(\begin{array}{c}H_1\\\vdots\\ H_s	\end{array}\right) & = & (\tilde{T}_{n,s}\otimes I_n)\: 
		\left(\begin{array}{c}\ve\,(H_1)\\\vdots\\ \ve\,(H_s)\end{array}\right)\nonumber\\
		& = & (\tilde{T}_{n,s}\otimes I_n)\:\left(\begin{array}{c}\tilde{D}_n\,h_1\\\vdots\\ \tilde{D}_n\,h_s\end{array}\right)\nonumber\\
		& = & (\tilde{T}_{n,s}\otimes I_n)\:(I_s\otimes \tilde{D}_n)\left(\begin{array}{c}h_1\\\vdots\\h_s\end{array}\right)\nonumber\\
		& = & \Ttt\left(\begin{array}{c}h_1\\\vdots\\h_s\end{array}\right).\nonumber
	\end{eqnarray}
	The system (\ref{eq:NecSuffSys}) will admit the null vector as the unique solution, denoting local identification, 
	if and only if the matrix $\tVt \: \Ttt$ has full column rank, which proves the theorem.
\end{proof}	

\vspace{0.5cm}

\begin{proof}[Proof of Theorem \ref{theo:IdentEverywhereRank}]
	The proof essentially follows the one for proving Theorem \ref{theo:IdentEverywhere}. The only difference is represented by the 
	starting point, that in this case is the square matrix $\Big(\Vtt^\prime\Vtt\Big)$, that will be of full rank when the determinant 
	is different from zero. The remaining is completely equivalent. 
\end{proof}

%%%%%%%%%%%%%%%%%%%%%%%%%%%%%%%%%%% APPENDIX EXAMPLES %%%%%%%%%%%%%%%%%%%%%%%%%%%%%%%%%%%%%%%%%%
\section{Examples}
\label{app:examples}

In this section we provide some examples on how to impose the restrictions discussed in Section \ref{sec:ident_restr} and how to apply the 
theory of identification of SVAR-WBs presented in Section \ref{sec:Id}.

\subsection{Triangular SVAR-WB with no restrictions across regimes}
\label{ex:IdentChol}

	Consider a SVAR-WB model with one single break, i.e. simply two regimes. Moreover, suppose that the restrictions are imposed on the
	contemporaneous matrices such that $A_{10}$ and $A_{20}$ are both lower triangular, with no across-regime restrictions. In this	case, 
	the SVAR-WB can be investigated separately in each single regime. Hence, using the standard methodology proposed by RWZ, the SVAR-WB is
	globally identified as all single SVARs, in each of the regimes, are globally identified.

\subsection{Trivariate SVAR-WB with restrictions on the contemporaneous matrices}
\label{ex:IdentTrivariate}

	Consider a trivariate ($n=3$) SVAR-WB model with one single break. Moreover, as before, suppose that the restrictions are imposed on 
	the contemporaneous matrices $A_{10}$ and $A_{20}$ such that
	\begin{equation}
		\label{eq:ExTrivariate}
		A_{10}=\left(\begin{array}{ccc}
					 \encircled{a_{1,11}}&0&0\\a_{1,21}&\underset{}{a_{1,22}}&0\\a_{1,31}&a_{1,32}&a_{1,33}
					 \end{array}\right)\hspace{2cm}
		A_{20}=\left(\begin{array}{ccc}
					 \encircled{a_{1,11}}&a_{2,12}&0\\a_{2,21}&\underset{}{a_{2,22}}&0\\a_{2,31}&a_{2,32}&a_{2,33}
					 \end{array}\right),
	\end{equation}
	where it clearly emerges that a) $A_{10}$ is lower triangular, b) $a_{1,11}$ remains unchanged in the two regimes and c) $A_{20}$ is 
	not lower triangular and contains more parameters than in standard SVAR models. 
	This last point represents the gain that can be obtained when some structural parameters do not change across the regimes.
	
	The function allowing to impose the restrictions on the structural parameters of the \textit{p}-th regime is simply given by 
	$G\left(\Apt,\Apdt\right)=\Apt$, and the restrictions will take the form
	\begin{equation}
		\label{eq:ExRestrTrivariate}
		\Rj\left(\begin{array}{c}A_{10}^\prime\\ A_{20}^\prime \end{array}\right)e_j=0,\hspace{0.3cm}\text{for}\hspace{0.3cm}1\leq j\leq 3 
		\hspace{0.3cm}\text{and}\hspace{0.3cm}p=\{1,2\}. 
	\end{equation}
	It is important to note that the first column of $\left(\begin{array}{c}G_1\\G_2\end{array}\right)$, i.e. when $j=1$, is characterized by 
	$s(n-j)=2(3-1)=4$ restrictions, the second column ($j=2$) by 2 restrictions and, finally, the third column ($j=3$) by 0 restrictions. This
	is exactly what indicated in Corollary \ref{corol:OrderCond}, suggesting that we can use Theorem \ref{theo:SuffCond} for studying local
	identification in the SVAR-WB under investigation.
	
	More specifically, for the first equation (in the two regimes):
	\begin{eqnarray}
		\label{eq:ImRestrTrivariate1}
		j=1 \hspace{1cm}& \Rightarrow & \underbrace{\left(\begin{array}{c:c}R_{11,1} & R_{12,1}\\\hdashline 
		R_{21,1} & R_{22,1}\end{array}\right)}_{\textbf{R}_{1}}\left(\begin{array}{c}A_{10}^\prime\\ A_{20}^\prime \end{array}\right)e_1=0\nonumber\\
									  & \Rightarrow & \left(\begin{array}{ccc:ccc}0&1&0&0&0&0\\0&0&1&0&0&0\\1&0&0&-1&0&0\\\hdashline 0&0&0&0&0&1\end{array}\right)
										    	\left(\begin{array}{c}a_{1,11}\\a_{1,12}\\a_{1,13}\\a_{2,11}\\a_{2,12}\\a_{2,13}\end{array}\right)=
													\left(\begin{array}{c}0\\0\\0\\0\end{array}\right)\nonumber
	\end{eqnarray}
	where
	\[
		R_{11,1}=\left(\begin{array}{ccc}0&1&0\\0&0&1\\1&0&0\end{array}\right),\hspace{0.3cm}
		R_{12,1}=\left(\begin{array}{ccc}0&0&0\\0&0&0\\-1&0&0\end{array}\right),\hspace{0.3cm}
		R_{22,1}=\left(\begin{array}{ccc}0&0&1\end{array}\right). 
	\]
	The corresponding explicit form, instead, can be written as
	\begin{eqnarray}
		\label{eq:ExRestrTrivariate1}
		j=1 \hspace{1cm}& \Rightarrow & \left(\begin{array}{c}a_{1,11}\\a_{1,12}\\a_{1,13}\\a_{2,11}\\a_{2,12}\\a_{2,13}\end{array}\right)=
																		\underbrace{\left(\begin{array}{c:c}S_{11,1} & S_{12,1}\\\hdashline S_{21,1} & 
																		S_{22,1}\end{array}\right)}_{\textbf{S}_{1}}
																		\left(\begin{array}{c}\theta_{1,1}\\ \theta_{1,2} \end{array}\right)\nonumber\\
									  & \Rightarrow & \left(\begin{array}{c}a_{1,11}\\a_{1,12}\\a_{1,13}\\a_{2,11}\\a_{2,12}\\a_{2,13}\end{array}\right)=
																		\left(\begin{array}{cc}1&0\\0&0\\0&0\\\hdashline 1&0\\0&1\\0&0\end{array}\right)
																		\left(\begin{array}{c}\theta_{1,1}\\ \theta_{1,2} \end{array}\right)\nonumber
	\end{eqnarray}
	where $\theta_{1,1}$ and $\theta_{1,2}$ are the two elements of $\theta_1$, i.e. the free parameters associated to the first row of 
	$A_{1,0}$ and $A_{2,0}$, respectively, and where
	\[
		S_{12,1}=\left(\begin{array}{cc}1&0\\0&0\\0&0\end{array}\right),\hspace{0.3cm}
		S_{22,1}=\left(\begin{array}{cc}1&0\\0&1\\0&0\end{array}\right) 
	\]
	while $S_{11,1}$ is not defined, being $R_{11,1}$ a non-singular $3\times 3$ matrix.
	
	For the second equation, in both the regimes, 
	\begin{eqnarray}
		\label{eq:ImRestrTrivariate2}
		j=2 \hspace{1cm}& \Rightarrow & \underbrace{\left(\begin{array}{c:c}R_{11,2} & R_{12,2}\\\hdashline R_{21,2} & 
		R_{22,2}\end{array}\right)}_{\textbf{R}_{2}}\left(\begin{array}{c}A_{10}^\prime\\ A_{20}^\prime \end{array}\right)e_2=0\nonumber\\
									  & \Rightarrow & \left(\begin{array}{ccc:ccc}0&0&1&0&0&0\\\hdashline 0&0&0&0&0&1 \end{array}\right)
										    	\left(\begin{array}{c}a_{1,21}\\a_{1,22}\\a_{1,23}\\a_{2,21}\\a_{2,22}\\a_{2,23}\end{array}\right)=
													\left(\begin{array}{c}0\\0\end{array}\right)\nonumber
	\end{eqnarray}
	where
	\[
		R_{11,2}=\left(\begin{array}{ccc}0&0&1\end{array}\right),\hspace{0.3cm}
		R_{12,2}=\left(\begin{array}{ccc}0&0&0\end{array}\right),\hspace{0.3cm}
		R_{21,2}=\left(\begin{array}{ccc}0&0&0\end{array}\right),\hspace{0.3cm}
		R_{22,2}=\left(\begin{array}{ccc}0&0&1\end{array}\right). 
	\]
	The corresponding explicit form, as for the first equation, can be written as
	\begin{eqnarray}
		\label{eq:ExRestrTrivariate2}
		j=2 \hspace{1cm}& \Rightarrow & \left(\begin{array}{c}a_{1,21}\\a_{1,22}\\a_{1,23}\\a_{2,21}\\a_{2,22}\\a_{2,23}\end{array}\right)=
																		\underbrace{\left(\begin{array}{c:c}S_{11,2} & S_{12,2}\\\hdashline S_{21,2} & 
																		S_{22,2}\end{array}\right)}_{\textbf{S}_{2}}
																		\left(\begin{array}{c}\theta_{2,1}\\ \theta_{2,2} \\ \theta_{2,3} \\ \theta_{2,4}
																		\end{array}\right)\nonumber\\
									  & \Rightarrow & \left(\begin{array}{c}a_{1,21}\\a_{1,22}\\a_{1,23}\\a_{2,21}\\a_{2,22}\\a_{2,23}\end{array}\right)=
																		\left(\begin{array}{cc:cc}1&0&0&0\\0&1&0&0\\0&0&0&0\\\hdashline 0&0&1&0\\0&0&0&1\\0&0&0&0\\ 
																		\end{array}\right)
																		\left(\begin{array}{c}\theta_{2,1}\\ \theta_{2,2} \\ \theta_{2,3} \\ \theta_{2,4} \end{array}\right)
																		\nonumber
	\end{eqnarray}
	where $(\theta_{2,1},\,\theta_{2,2},\,\theta_{2,3},\,\theta_{2,4})^\prime$, is the vector of free parameters associated to the second 
	rows of $A_{1,0}$ and $A_{2,0}$, and where
	\[
  	S_{11,2}=\left(\begin{array}{cc}1&0\\0&1\\0&0\end{array}\right),\hspace{0.3cm}
		S_{12,2}=\left(\begin{array}{cc}0&0\\0&0\\0&0\end{array}\right),\hspace{0.3cm}
		S_{21,2}=\left(\begin{array}{cc}1&0\\0&0\\0&0\end{array}\right),\hspace{0.3cm}
		S_{22,2}=\left(\begin{array}{cc}1&0\\0&1\\0&0\end{array}\right). 
	\]

	Finally, considering the third equation ($j=3$), there are no restrictions associated, thus $\textbf{R}_{3}$ is not defined and  
	$\textbf{S}_{3}=\left(\begin{array}{c:c}I_3 & 0_{3\times 3}\\ \hdashline 0_{3\times 3}&I_3\end{array}\right)$, with $S_{11,3}=I_3$, 
	$S_{12,3}=0_{3\times 3}$, $S_{21,3}=0_{3\times 3}$ and $S_{22,3}=I_3$.
	
	According to Eq. (\ref{eq:Vjik}), it is possible to define the $V_{j,p,k}$. Specifically,
	\[
		\begin{array}{llllll}
		j=1 \hspace{0.3cm}& \Rightarrow & S_{1,1}^\ast=S_{12,1} & S_{2,1}^\ast=S_{22,1} 
																		&	R_{1,1}^\ast=\left(\begin{array}{c}R_{11,1}\\\hdashline 0_{3\times 1}\end{array}\right)
																		&	R_{2,1}^\ast=\left(\begin{array}{c}R_{12,1}\\\hdashline R_{22,1}\end{array}\right)	\\
		j=2 \hspace{0.3cm}& \Rightarrow & S_{1,2}^\ast=\left(\begin{array}{c:c}S_{11,2}&0_{3\times 2}\end{array}\right)
																		& S_{2,2}^\ast=\left(\begin{array}{c:c}0_{3\times 2}\end{array}\right)
																		&	R_{1,2}^\ast=\left(\begin{array}{c}R_{11,2}\\\hdashline 0_{3\times 1}\end{array}\right)
																		&	R_{2,2}^\ast=\left(\begin{array}{c}0_{3\times 1}\\\hdashline R_{22,2}\end{array}\right)	\\
		j=3 \hspace{0.3cm}& \Rightarrow & S_{1,3}^\ast=\left(\begin{array}{c:c}I_3&0_{3\times 3}\end{array}\right)& 
																		S_{2,3}^\ast=\left(\begin{array}{c:c}0_{3\times 3}&I_{3}\end{array}\right)& &
		\end{array},
	\]	
	which leads to
	\begin{eqnarray}
		V_{1,1,2} & = & \left(\begin{array}{c}R_{1,1}^\ast \cdot S_{1,2}^\ast\end{array}\right)
								  = \left(\begin{array}{c}\left(\begin{array}{ccc} 0&1&0\\0&0&1\\1&0&0\\0&0&0\end{array}\right)
									\cdot\left(\begin{array}{cccc} 1&0&0&0\\0&1&0&0\\0&0&0&0\end{array}\right)\\\end{array}\right)
									= \left(\begin{array}{cccc} 0&1&0&0\\0&0&0&0\\1&0&0&0\\0&0&0&0\end{array}\right)\nonumber\\
		V_{1,2,2} & = & \left(\begin{array}{c}R_{2,1}^\ast \cdot S_{2,2}^\ast\end{array}\right)
									= \left(\begin{array}{c}\left(\begin{array}{ccc}0&0&0\\0&0&0\\-1&0&0\\0&0&1\end{array}\right)
									\cdot\left(\begin{array}{cccc} 0&0&1&0\\0&0&0&1\\0&0&0&0\end{array}\right)\end{array}\right)
									= \left(\begin{array}{cccc} 0&0&0&0\\0&0&0&0\\0&0&-1&0\\0&0&0&0\end{array}\right)\nonumber\\
		V_{1,1,3} & = & \left(\begin{array}{c}R_{1,1}^\ast \cdot S_{1,3}^\ast\end{array}\right)
									= \left(\begin{array}{c}\left(\begin{array}{ccc}0&1&0\\0&0&1\\1&0&0\\0&0&0\end{array}\right)
									\cdot\left(\begin{array}{c:c}I_3 & 0_{3\times 3}\end{array}\right)\end{array}\right)
									= \left(\begin{array}{cccccc} 0&1&0&0&0&0\\0&0&1&0&0&0\\1&0&0&0&0&0\\0&0&0&0&0&0\end{array}\right)\nonumber\\
		V_{1,2,3} & = & \left(\begin{array}{c}R_{2,1}^\ast \cdot S_{2,3}^\ast\end{array}\right)
									= \left(\begin{array}{c}\left(\begin{array}{ccc}0&0&0\\0&0&0\\-1&0&0\\0&0&1\end{array}\right)
									\cdot\left(\begin{array}{c:c} 0_{3\times 3} & I_3\end{array}\right)\end{array}\right)
									= \left(\begin{array}{cccccc} 0&0&0&0&0&0\\0&0&0&0&0&0\\0&0&0&-1&0&0\\0&0&0&0&0&1\end{array}\right).\nonumber
	\end{eqnarray}
	We can use Theorem \ref{theo:SuffCond} to check whether the first structural shock is identified. According to the definition of
	$V_{1,1,2}$, $V_{1,2,2}$, $V_{1,1,3}$, $V_{1,2,3}$, the matrix $V_1(\theta)$ becomes:
	\[
		V_1(\theta) = \Big(V_{1,1,2}\theta_2,\: V_{1,2,2}\theta_2,\: V_{1,1,3}\theta_3,\: V_{1,2,3}\theta_3\Big) = 
		\left(\begin{array}{cccc}
		\theta_{2,2}&\theta_{3,2}&0&0\\
		0&\theta_{3,3}&0&0\\
		\theta_{2,1}&\theta_{3,1}&-\theta_{2,3}&-\theta_{3,4}\\
		0&0&0&\theta_{3,6}
		\end{array}\right),
	\]
	that has full column rank for almost all values of $\theta$.
		
	For the identification of the second equation in the two regimes we have to start from $V_{2,1,3}$ and $V_{2,2,3}$, i.e.
	\begin{eqnarray}
		V_{2,1,3} & = & \left(\begin{array}{c}R_{1,2}^\ast \cdot S_{1,3}^\ast\end{array}\right)
									= \left(\begin{array}{c}\left(\begin{array}{ccc}0&0&1\\0&0&0\end{array}\right)
									\cdot\left(\begin{array}{cc} I_3 & 0_{3\times 3}\end{array}\right)\end{array}\right)
									= \left(\begin{array}{cccccc} 0&0&1&0&0&0\\0&0&0&0&0&0\end{array}\right)\nonumber\\
		V_{2,2,3} & = & \left(\begin{array}{c}R_{2,2}^\ast \cdot S_{2,3}^\ast\end{array}\right)
									= \left(\begin{array}{c}\left(\begin{array}{ccc}0&0&0\\0&0&1\end{array}\right)
									\cdot\left(\begin{array}{cc} 0_{3\times 3} & I_3\end{array}\right)\end{array}\right)
									= \left(\begin{array}{cccccc} 0&0&0&0&0&0\\0&0&0&0&0&1\end{array}\right).\nonumber
	\end{eqnarray}
	As before, we can use Theorem \ref{theo:SuffCond} to check whether the second structural shock is identified. According to the definition of
	$V_{2,1,3}$ and $V_{2,2,3}$, the matrix $V_2(\theta)$ becomes:
	\[
		V_2(\theta) = \Big(V_{2,1,3}\theta_3,\: V_{2,2,3}\theta_3\Big) = 
		\left(\begin{array}{cc}
		\theta_{3,3}&0\\
		0&\theta_{3,6}\\
		\end{array}\right)
	\]
	that has full column rank for almost all values of $\theta$.

	Finally, it is worth noting that the pattern of restrictions satisfies Definition \ref{def:RecJustIdent}, and thus the SVAR-WB is 
	recursively just identified.

\subsection{Bivariate SVAR-WB with short- and long-run restrictions}
\label{ex:IdentBivariate}

	Consider the bivariate SVAR-WB characterized by two restrictions across the regimes: the former imposes that the response on impact of the 
	second variable to the first shock is equal in the two regimes; the latter imposes that the long-run effect of the second variable to 
	the first shock to be the same in the two regimes.
	In other words, we have the following two restrictions:
	\[
		[IR_{1,0}]_{(2,1)}=[IR_{2,0}]_{(2,1)} \text{ and } 	[IR_{1,\infty}]_{(2,1)}=[IR_{2,\infty}]_{(2,1)}.
	\]
	The condition of Corollary \ref{corol:OrderCond} is met, suggesting to use Theorem \ref{theo:SuffCond} for studying local identification.	
	More precisely, let 
	\[
		\begin{array}{c}
		G_1\equiv G(A_{10}^\prime,A_{1+}^\prime)\equiv \left[\begin{array}{c}IR_{1,0}\\IR_{1,\infty}\end{array}\right]\equiv 
		\left[\begin{array}{c}A_{1,0}^{-1}\\(A_{10}-\sum_{i=1}^{p}A_{1i})^{-1}\end{array}\right]\\
		\\
		G_2\equiv G(A_{20}^\prime,A_{2+}^\prime)\equiv \left[\begin{array}{c}IR_{2,0}\\IR_{2,\infty}\end{array}\right]\equiv 
		\left[\begin{array}{c}A_{2,0}^{-1}\\(A_{20}-\sum_{i=1}^{p}A_{2i})^{-1}\end{array}\right].
		\end{array}
		\]
		The restrictions take the form:
		\[
		\begin{array}{ccc}
			\left(\begin{array}{cc}R_{11,1} & R_{12,1}\\ R_{21,1} & R_{22,1}\end{array}\right)\left(\begin{array}{c}G_1\\G_2\end{array}\right)e_1=0 
			& \Rightarrow &
			\left(\begin{array}{cccc:cccc}0 & 1 & 0 & 0 & 0 & -1 & 0 & 0\\ 0 & 0 & 0 & 1 & 0 & 0 & 0 & -1\end{array}\right)
			\left(\begin{array}{cc}\times & \times\\ \times & \times\\ \times & \times\\ \times & \times\\ \times & \times\\ \times & \times\\ \times 
			& \times\\ \times & \times \end{array}\right)
			\left(\begin{array}{c}1\\0\end{array}\right)=\left(\begin{array}{c}0\\0\end{array}\right)
		\end{array}
	\]
	where 
	\[
		R_{11,1}=\left(\begin{array}{cccc}0&1&0&0\\0&0&0&1\end{array}\right)=R_{1,1}^\ast \hspace{0.5cm}\text{ and }\hspace{0.5cm}
		R_{12,1}=\left(\begin{array}{cccc}0&-1&0&0\\0&0&0&-1\end{array}\right)=R_{2,1}^\ast
	\]
	while $R_{21,1}$ and $R_{22,1}$ are not defined.
	
	The explicit form of the restrictions can be written as
	\[
		\begin{array}{l}
			\left(\begin{array}{c}G_1\\G_2\end{array}\right)\left(\begin{array}{c}1\\0\end{array}\right)=\left(\begin{array}{cc}S_{11,1}&S_{12,1}\\
			S_{21,1}&S_{22,1}\end{array}\right)\theta_{1}\\
			\hspace{2cm}\Longleftrightarrow 
			\left(\begin{array}{cc}\times & \times\\ \times & \times\\ \times & \times\\ \times & \times\\ \times & \times\\ \times & \times\\ \times 
			& \times\\ \times & \times \end{array}\right)
			\left(\begin{array}{c}1\\0\end{array}\right)=
			\left(\begin{array}{cc:cccc}1&0&0&0&0&0\\0&0&0&1&0&0\\0&1&0&0&0&0\\0&0&0&0&0&1\\\hdashline 
			0&0&1&0&0&0\\0&0&0&1&0&0\\0&0&0&0&1&0\\0&0&0&0&0&1\end{array}\right)
			\left(\begin{array}{c}\theta_{1,1}\\\theta_{1,2}\\\theta_{1,3}\\\theta_{1,4}\\\theta_{1,5}\\\theta_{1,6}\end{array}\right)
		\end{array}
	\]
	where 
	\[
		S_{11,1}=\left(\begin{array}{cc}1&0\\0&0\\0&1\\0&0\end{array}\right), \hspace{0.5cm}
		S_{12,1}=\left(\begin{array}{cccc}0&0&0&0\\0&1&0&0\\0&0&0&0\\0&0&0&1\end{array}\right), \hspace{0.5cm}
		S_{21,1}=0_{4\times 2}, \hspace{0.5cm}
		S_{22,1}= I_4.
	\]
	Moreover, as the restrictions involve only the first column of $G_1$ and $G_2$, then $\textbf{R}_2$ is not defined, while 
	$\textbf{S}_2=I_8$, and, more specifically, $S_{11,2}=I_4$, $S_{12,2}=S_{21,2}=0_{4 \times 4}$ and $S_{22,2}=I_4$. As a consequence,
	$S_{1,2}^\ast=(I_4\:0_{4\times 4})$, and $S_{2,2}^\ast=(0_{4\times 4}\:I_4)$.
	
	Based on these matrices characterizing the restrictions, we can obtain
	\begin{eqnarray}
		V_{1,1,2}   & = & (R_{1,1}^\ast \cdot S_{1,2}^\ast) = \left(\begin{array}{cccc}0&1&0&0\\0&0&0&1\end{array}\right) 
										\left(\begin{array}{ccc}I_4&&0_{4\times 4}\end{array}\right)\nonumber\\
								& = & \left(\begin{array}{cccccccc}0&1&0&0&0&0&0&0\\0&0&0&1&0&0&0&0\end{array}\right)\nonumber\\
		V_{1,2,2}	  & = & (R_{2,1}^\ast \cdot S_{2,2}^\ast) = \left(\begin{array}{cccc}0&-1&0&0\\0&0&0&-1\end{array}\right) 
										\left(\begin{array}{ccc}0_{4\times 4}&&I_4\end{array}\right)\nonumber\\
						  & = & \left(\begin{array}{cccccccc}0&0&0&0&0&-1&0&0\\0&0&0&0&0&0&0&-1\end{array}\right),\nonumber
	\end{eqnarray}
	from which we derive the $V_1(\theta)$ matrix as follows:
	\[
		V_1(\theta)=\left(V_{1,1,2}\theta_2,\:V_{1,2,2}\theta_2\right)=
		\left(\begin{array}{cc}
		\theta_{2,2}&-\theta_{2,6}\\
		\theta_{2,4}&-\theta_{2,8}
		\end{array}\right)
	\]
	that has full column rank for almost all values of $\theta$.

\subsection{Non-identified Trivariate SVAR-WB}
\label{ex:NoIdentTrivariate}

	Consider the trivariate SVAR-WB characterized by the following responses on impact in the two regimes:
	\[
	A_{10}^{-1}\equiv IR_{1,0}\equiv \left(\begin{array}{ccc}c_{1,11}& 0 & 0\\0 & c_{1,22}& \encircled{c_{1,23}}\\c_{1,31}& c_{1,32}& 
	c_{1,33}\end{array}\right)
	\hspace{0.5cm}\text{ and }\hspace{0.5cm}
	A_{20}^{-1}\equiv IR_{2,0}\equiv \left(\begin{array}{ccc}c_{2,11}& 0 & 0\\c_{2,21} & c_{2,22}& \encircled{c_{2,23}}\\c_{2,31}& c_{2,32}& 
	c_{2,33}\end{array}\right)
	\]
	with the further across-regime restriction $c_{1,23}=c_{2,23}$. The SVAR-WB is characterized by a total of six restrictions, 
	satisfying the necessary condition of Theorem \ref{theo:OrdCond}.  
	
	The functions of the parameters subject to restrictions can be defined as $G_1 = A_{10}^{-1}$ and $G_2 = A_{20}^{-1}$. The condition of 
	Corollary \ref{corol:OrderCond} is not met. The SVAR-WB is not recursive and identification has to be investigated through 
	Theorem \ref{theo:NecSuff}. %or through Theorem \ref{theo:NecSuffStruct}. 
	We will see that this latter proves to be more suited for the SVAR-WB at hand. When considered jointly, the third columns of 
	$IR_{1,0}$ and $IR_{2,0}$ are the most restricted ones, thus we start our analysis from these ones. 
	Let $j=1$ be referring to the third column of $IR_{1,0}$ and $IR_{2,0}$. When $j=1$ the matrices of restrictions, 
	in implicit form, are
	\[
	\textbf{R}_1 = \left(\begin{array}{cc}R_{11,1}&R_{12,1}\\R_{21,1} & R_{22,1}\end{array}\right)  
							 = \left(\begin{array}{ccc:ccc}1&0&0&0&0&0\\0&1&0&0&-1&0\\\hdashline 0&0&0&1&0&0\end{array}\right)
	\]
	from which 
	\[
	R_{1,1}^\ast=\left(\begin{array}{ccc}1&0&0\\0&1&0\\0&0&0\end{array}\right), \hspace{0.5cm}
	R_{2,1}^\ast=\left(\begin{array}{ccc}0&0&0\\0&-1&0\\1&0&0\end{array}\right)
	\]
	and, in explicit form  
	\[
	\textbf{S}_1 = \left(\begin{array}{cc}S_{11,1} & S_{12,1}\\S_{21,1} & S_{22,1}\end{array}\right)  
							 = \left(\begin{array}{c:cc}0&0&0\\0&1&0\\1&0&0\\\hdashline 0&0&0\\0&1&0\\0&0&1\end{array}\right)
	\]
	from which
	\[
	S_{1,1}^\ast=\left(\begin{array}{ccc}0&0&0\\0&1&0\\1&0&0\end{array}\right), \hspace{0.5cm}
	S_{2,1}^\ast=\left(\begin{array}{ccc}0&0&0\\0&1&0\\0&0&1\end{array}\right).
	\]
	We then consider the second columns of $IR_{1,0}$ and $IR_{2,0}$, as the most restricted ones after the third columns. The restrictions are:
	\[
	\textbf{R}_2 = \left(\begin{array}{cc}R_{11,2} & R_{12,2}\\R_{21,2} & R_{22,2}\end{array}\right)  
							 = \left(\begin{array}{ccc:ccc}1&0&0&0&0&0\\\hdashline 0&0&0&1&0&0\end{array}\right)
	\]
	from which	
	\[
	R_{1,2}^\ast=\left(\begin{array}{ccc}1&0&0\\0&0&0\end{array}\right), \hspace{0.5cm}
	R_{2,2}^\ast=\left(\begin{array}{ccc}0&0&0\\1&0&0\end{array}\right)
	\]
	and
	\[
	\textbf{S}_2 = \left(\begin{array}{cc}S_{11,2} & S_{12,2}\\S_{21,2} & S_{22,2}\end{array}\right)  
							 = \left(\begin{array}{cc:cc}0&0&0&0\\1&0&0&0\\0&1&0&0\\\hdashline 0&0&0&0\\0&0&1&0\\0&0&0&1\end{array}\right)
	\]
	from which
	\[
	S_{1,2}^\ast=\left(\begin{array}{cccc}0&0&0&0\\1&0&0&0\\0&1&0&0\end{array}\right), \hspace{0.5cm}
	S_{2,2}^\ast=\left(\begin{array}{cccc}0&0&0&0\\0&0&1&0\\0&0&0&1\end{array}\right).
	\]
	Finally we consider the first column of $G_1$ and $G_2$. The restrictions are
	\[
	\textbf{R}_3 = \left(\begin{array}{cc}R_{11,3} & R_{12,3}\\R_{21,3} & R_{22,3}\end{array}\right)  
							 = \left(\begin{array}{ccc:ccc}0&1&0&0&0&0\end{array}\right)
	\]
	from which
	\[
	R_{1,3}^\ast=\left(\begin{array}{ccc}0&1&0\end{array}\right), \hspace{0.5cm}
	R_{2,3}^\ast=\left(\begin{array}{ccc}0&0&0\end{array}\right)
	\]
	and
	\[
	\textbf{S}_3 = \left(\begin{array}{cc}S_{11,3} & S_{12,3}\\S_{21,3} & S_{22,3}\end{array}\right)  
							 = \left(\begin{array}{cc:ccc}1&0&0&0&0\\0&0&0&0&0\\0&1&0&0&0\\\hdashline 0&0&1&0&0\\0&0&0&1&0\\0&0&0&0&1\end{array}\right)
	\]
	from which
	\[
	S_{1,3}=\left(\begin{array}{ccccc}1&0&0&0&0\\0&0&0&0&0\\0&1&0&0&0\end{array}\right), \hspace{0.5cm}
	S_{1,3}=\left(\begin{array}{ccccc}0&0&1&0&0\\0&0&0&1&0\\0&0&0&0&1\end{array}\right).
	\]
	Remembering that $V_{j,p,k}=\big(R_{p,j}^\ast\cdot S_{p,k}^\ast\big)$, we can calculate $\tilde{V}_1(\theta)$, $\tilde{V}_2(\theta)$ and
	$\tilde{V}_3(\theta)$ as follows:
	\[
	\tilde{V}_1(\theta)=\big[\begin{array}{ccc:ccc}V_{1,1,1}\theta_1&V_{1,1,2}\theta_2&V_{1,1,3}\theta_3&
																								 V_{1,2,1}\theta_1&V_{1,2,2}\theta_2&V_{1,2,3}\theta_3\end{array}\big]
	\]
	where
	\[
		\begin{array}{ll}
			V_{1,1,1}=\big(R_{1,1}^\ast S_{1,1}^\ast\big)=\left(\begin{array}{ccc}0&0&0\\0&1&0\\0&0&0\end{array}\right)&
			V_{1,1,2}=\big(R_{1,1}^\ast S_{1,2}^\ast\big)=\left(\begin{array}{cccc}0&0&0&0\\1&0&0&0\\0&0&0&0\end{array}\right)\\
			V_{1,1,3}=\big(R_{1,1}^\ast S_{1,3}^\ast\big)=\left(\begin{array}{ccccc}1&0&0&0&0\\0&0&0&0&0\\0&0&0&0&0\end{array}\right)&
			V_{1,2,1}=\big(R_{2,1}^\ast S_{2,1}^\ast\big)=\left(\begin{array}{ccc}0&0&0\\0&-1&0\\0&0&0\end{array}\right)\\
			V_{1,2,2}=\big(R_{2,1}^\ast S_{2,2}^\ast\big)=\left(\begin{array}{cccc}0&0&0&0\\0&0&-1&0\\0&0&0&0\end{array}\right)&
			V_{1,2,3}=\big(R_{2,1}^\ast S_{2,3}^\ast\big)=\left(\begin{array}{ccccc}0&0&0&0&0\\0&0&0&-1&0\\0&0&1&0&0\end{array}\right);
		\end{array}
	\]
	\[
	\tilde{V}_2(\theta)=\big[\begin{array}{ccc:ccc}V_{2,1,1}\theta_1&V_{2,1,2}\theta_2&V_{2,1,3}\theta_3&
																								 V_{2,2,1}\theta_1&V_{2,2,2}\theta_2&V_{2,2,3}\theta_3\end{array}\big]
	\]
	where
	\[
		\begin{array}{ll}
			V_{2,1,1}=\big(R_{1,2}^\ast S_{1,1}^\ast\big)=\left(\begin{array}{ccc}0&0&0\\0&0&0\end{array}\right)&
			V_{2,1,2}=\big(R_{1,2}^\ast S_{1,2}^\ast\big)=\left(\begin{array}{cccc}0&0&0&0\\0&0&0&0\end{array}\right)\\
			V_{2,1,3}=\big(R_{1,2}^\ast S_{1,3}^\ast\big)=\left(\begin{array}{ccccc}1&0&0&0&0\\0&0&0&0&0\end{array}\right)&
			V_{2,2,1}=\big(R_{2,2}^\ast S_{2,1}^\ast\big)=\left(\begin{array}{ccc}0&0&0\\0&0&0\end{array}\right)\\
			V_{2,2,2}=\big(R_{2,2}^\ast S_{2,2}^\ast\big)=\left(\begin{array}{cccc}0&0&0&0\\0&0&0&0\end{array}\right)&
			V_{2,2,3}=\big(R_{2,2}^\ast S_{2,3}^\ast\big)=\left(\begin{array}{ccccc}0&0&0&0&0\\0&0&1&0&0\end{array}\right);
		\end{array}
	\]
	\[
	\tilde{V}_3(\theta)=\big[\begin{array}{ccc:ccc}V_{3,1,1}\theta_1&V_{3,1,2}\theta_2&V_{3,1,3}\theta_3&
																								 V_{3,2,1}\theta_1&V_{3,2,2}\theta_2&V_{3,2,3}\theta_3\end{array}\big]
	\]
	where
	\[
		\begin{array}{ll}
			V_{3,1,1}=\big(R_{1,3}^\ast S_{1,1}^\ast\big)=\left(\begin{array}{ccc}0&1&0\end{array}\right)&
			V_{3,1,2}=\big(R_{1,3}^\ast S_{1,2}^\ast\big)=\left(\begin{array}{cccc}1&0&0&0\end{array}\right)\\
			V_{3,1,3}=\big(R_{1,3}^\ast S_{1,3}^\ast\big)=\left(\begin{array}{ccccc}0&0&0&0&0\end{array}\right)&
			V_{3,2,1}=\big(R_{2,3}^\ast S_{2,1}^\ast\big)=\left(\begin{array}{ccc}0&0&0\end{array}\right)\\
			V_{3,2,2}=\big(R_{2,3}^\ast S_{2,2}^\ast\big)=\left(\begin{array}{cccc}0&0&0&0\end{array}\right)&
			V_{3,2,3}=\big(R_{2,3}^\ast S_{2,3}^\ast\big)=\left(\begin{array}{ccccc}0&0&0&0&0\end{array}\right).
	\end{array}
	\]
	If we define the vector of free parameters 
	\[
	\theta=\big(\theta_1^\prime,\,\theta_2^\prime,\,\theta_2^\prime\big)^\prime=
	\big(\theta_{1,1},\,\theta_{1,2},\,\theta_{1,3},\,\theta_{2,1},\,\theta_{2,2},\,\theta_{2,3},\,\theta_{2,4},\,\theta_{3,1},\,\theta_{3,2},\,
	\theta_{3,3},\,\theta_{3,4},\,\theta_{3,5}\big)^\prime,
	\]
	we can analytically calculate $\tilde{V}(\theta)$ as in Eq. (\ref{eq:tVtheta})
	\[
		\scalebox{0.9}{
		$
		\tilde{V}(\theta)=\left(\begin{array}{c:c:c}
		\left(\begin{array}{cccccc}0&0&\theta_{3,1}&0&0&0\\
					\theta_{1,2}&\theta_{2,1}&0&\theta_{1,2}&-\theta_{2,3}&-\theta_{3,4}\\0&0&0&0&0&\theta_{3,3}\end{array}\right)&&\\
		&\left(\begin{array}{cccccc}0&0&\theta_{3,1}&0&0&0\\0&0&0&0&0&\theta_{3,3}\end{array}\right)&\\
		&&\left(\begin{array}{cccccc}\theta_{1,2}&\theta_{2,1}&0&0&0&0\end{array}\right)
		\end{array}\right).
		$}
	\]
	Moreover, given $n=3$ and $s=2$, the known matrix $\Ttt$ will take the form
	\[
		\Ttt = (\tilde{T}_{3,2}\otimes I_3)(I_2\otimes \tilde{D}_3)=\left(\begin{array}{cc}
		\tilde{D}_3^{(1)}&0\\0&\tilde{D}_3^{(1)}\\\tilde{D}_3^{(2)}&0\\0&\tilde{D}_3^{(2)}\\\tilde{D}_3^{(3)}&0\\0&\tilde{D}_3^{(3)}
		\end{array}\right),\hspace{0.2cm}\text{where}\hspace{0.2cm}\tilde{D}_3=
		\left(\begin{array}{ccc}0&0&0\\1&0&0\\0&1&0\\\hdashline -1&0&0\\0&0&0\\0&0&1\\\hdashline 0&-1&0\\0&0&-1\\0&0&0\end{array}\right)=
		\left(\begin{array}{c}\tilde{D}_3^{(1)}\\ \hdashline \tilde{D}_3^{(2)}\\\hdashline \tilde{D}_3^{(3)}\end{array}\right).
	\]
	We are now ready to calculate the matrix $\Vtt=\tVt \Ttt$, whose column rank, according to Theorem \ref{theo:NecSuff}, allows
	to check for the local identification of the SVAR-WB:
	\[
		\Vtt=\tVt\:\Ttt=\left(\begin{array}{ccc:ccc}
				0&\theta_{3,1}&0&0&0&0\\
				\theta_{2,1}&0&0&-\theta_{2,3}&-\theta_{3,4}&0\\
				0&0&0&0&\theta_{3,3}&0\\
				\hdashline
				0&0&\theta_{3,1}&0&0&0\\
				0&0&0&0&0&\theta_{3,3}\\
				\hdashline
				0&-\theta_{1,2}&-\theta_{2,1}&0&0&0
				\end{array}\right)
	\]
	From the analytical definition of $\Vtt$ it clearly emerges that the first and fourth columns are linearly dependent irrespectively of 
	the values of $\theta_{2,1}$ and $\theta_{2,3}$. The trivariate SVAR-WB, thus, does not meet the rank condition in Theorem \ref{theo:NecSuff}
	and is not locally identified. 
	
	Obtaining the analytical formulation of $\Vtt$, however, can result difficult when the number of variables or regimes is 	larger. 
	In this respect, we can use Algorithm \ref{algo:IdGeneral} and calculate numerically $\Vtt$, once generating random entries for
	$\theta$. If the rank is full for at least one $\theta$, according to Theorem \ref{theo:IdentEverywhere}, the model is locally identified
	almost everywhere in the parametric space. On the contrary, if the rank is not full, the algorithm suggests to repeat the procedure for 
	a large number of draws for $\theta$ (say, $N=10,000$). Clearly, the decision about the identification cannot be definitive but, if for 
	none of the draws the condition is met, it is reasonable to suppose that effectively the SVAR-WB be not locally identified.

%%%%%%%%%%%%%%%%%%%%%%%%%%%  APPENDIX recursively just-identified SVAR-WB %%%%%%%%%%%%%%%%%%%%%%%%%%%%%%%%%%%%%%%%%%%
\section{Two algorithms for the estimation of recursively just-identified SVAR-WBs}
\label{sec:AlgorithmSeqRec}

%\subsection{An algorithm for the estimation of recursively just-identified SVARs}
%\label{sec:AlgorithmRec}

The previous Algorithm \ref{algo:EstimGen} is very general, but can be enormously improved, in terms of efficiency and control, 
when the restrictions follow particular patterns. The next algorithm, in fact, is devoted to just-identified SVAR-WBs supporting the 
restrictions postulated by Definition \ref{def:RecJustIdent}. Essentially, given the recursive pattern of the restrictions,
the algorithm operates recursively and, for each $j\in\{1,\ldots,n\}$, calculates the \textit{j}-th column of the orthogonal matrices 
$Q_1,\ldots,Q_s$. As we allow for restrictions across the regimes, for each $j$ the calculation has to be performed jointly, i.e. we
search for admissible $q_{j,1},\ldots,q_{j,s}$ that satisfy the restrictions, are orthogonal to the previous columns of $Q_1,\ldots,Q_s$,
and are of unit length. In this respect, it is fundamental to see that Eq. (\ref{eq:GQ}) allows to write the restrictions on the 
\textit{j}-th column of all the admissible orthogonal matrices as 
\begin{equation}
\label{eq:TrasfZeroRestr}
	R_{1,j}^*G_1q_{1,j}+\cdots+R_{s,j}^*G_sq_{s,j}=0\;\Longrightarrow\;
	\Big(R_{1,j}^*G_1\;\cdots\;R_{s,j}^*G_s\Big)
	\left(\begin{array}{c}q_{1,j}\\\vdots\\q_{s,j}\end{array}\right)=0\nonumber
\end{equation}	
implying that $\Big(q_{1,j}^\prime,\,\ldots\,,q_{s,j}^\prime\Big)^\prime$ has to be orthogonal to 
$\Big(R_{1,j}^*G_1\;\cdots\;R_{s,j}^*G_s\Big)$.

\begin{algo}
\label{algo:EstimRec}
Consider a recursively just identified SVAR-WB with sign restrictions as in Eq. (\ref{eq:SignRestr_jp}) and equality restrictions as in 
Definition \ref{def:RecJustIdent}. Let $\phi=\Btot=\big[B_{1+}\,,\,\Sigma_1\,,\,\cdots\,,\,B_{s+}\,,\,\Sigma_{s}\big]$
be the reduced-form parameter point.
\begin{enumerate}
	\item Set $j=1$ and the total number of solutions of the estimation problem $m=1$. Moreover, set the counter of the number of 
		solutions $\nu=1$. Finally, set the ordering suggested by Definition \ref{def:RecJustIdent}. 
  \item 
		Based on the restrictions as in Eq. (\ref{eq:ImpForm}), define the quantity
		\begin{equation}
    \label{eq:GammajApp}
				\Gamma_j=\left(\begin{array}{ccc}
				R_{1,j}^{*}G_1&\ldots&R_{s,j}^{*}G_s\\&\tilde{Q}_1^\prime&\\&\vdots&\\&\tilde{Q}_{j-1}^\prime&\end{array}\right).
		\end{equation}
		If $j=1$, then $\Gamma_j=\big(\begin{array}{ccc}R_{1,j}^{*}G_1&\ldots&R_{s,j}^{*}G_s\end{array}\big)$.
	\item Let $\Gamma_{j\perp}$ be the space of all $(ns\times 1)$ vectors orthogonal to $\Gamma_j$, with $\mathrm{dim}\,(\Gamma_{j\perp})=s$.
		Moreover, let
		\begin{equation}
		\label{eq:GammaBase}
				\left(\begin{array}{ccc}b_{j,1}&\ldots&b_{j,s}\end{array}\right)=
				\left[\begin{array}{ccc}
					\left(\begin{array}{c}b_{j,1}^{(1)}\\\hdashline b_{j,1}^{(2)}\\\hdashline\vdots\\\hdashline b_{j,1}^{(s)}\end{array}\right)&
					\cdots&
					\left(\begin{array}{c}b_{j,s}^{(1)}\\\hdashline b_{j,s}^{(2)}\\\hdashline\vdots\\\hdashline b_{j,s}^{(s)}\end{array}\right)
					\end{array}\right]\nonumber
		\end{equation}
		be a base for $\Gamma_{j\perp}$, where each sub-vector $b_{j,k}^{(p)}$ is of dimension $(n\times 1)$.
		Then, for $p\in\{1,\ldots,s\}$, form the $(n\times s)$ matrix $b_j^{(p)}=\big(b_{j,1}^{(p)},\,\ldots,\,b_{j,s}^{(p)}\big)$ and,
		subsequently, the $(s\times s)$ matrix $\tilde{B}_j^{(p)}=b_j^{(p)\prime}\,b_j^{(p)}$.
		Solve the system of equations
		\begin{equation}
		\label{eq:sysAlgoRec}
				\left\{\begin{array}{rcl}\tilde{\lambda}_j^\prime\tilde{B}_j^{(1)}\tilde{\lambda}_j&=&1\\
				\vdots&&\\\tilde{\lambda}_j^\prime\tilde{B}_j^{(s)}\tilde{\lambda}_j&=&1
				\end{array}\right.
		\end{equation}
		for the $(s\times 1)$ vector of unknowns $\tilde{\lambda}_j=\left(\lambda_1\,\ldots,\,\lambda_s\right)^\prime$ and collect all the 
		admissible solutions $\left(\tilde{\lambda}_j^{(1)}\,\ldots,\,\tilde{\lambda}_j^{(k_{j,\nu})}\right)$, where $k_{j,\nu}$ is the number
		of real solutions. 
		For $i\in\{1,\ldots,k_{j,\nu}\}$, calculate the $\big(q_{j,1}^{(i)\prime},\,\ldots\,,q_{j,s}^{(i)\prime}\big)^\prime$ vector as
		\[
			\left(\begin{array}{c}q_{j,1}^{(i)}\\\vdots\\q_{j,s}^{(i)}\end{array}\right)=
			\left(\begin{array}{ccc}b_{j,1}&\ldots&b_{j,s}\end{array}\right)
			\tilde{\lambda}_j^{(i)},
		\]
		and retain only those satisfying the sign restrictions; update the number of admissible solutions $k_{j,\nu}$.
	\item Given $\nu$, if $k_{j,\nu}\neq 0$, for $i\in\{1,\ldots,k_{j,\nu}\}$, update the admissible partial orthogonal matrices
		\[
			Q_{j,p}^{(\tau)}=\big[Q_{j-1,p}^{(\nu)}\,|\,q_{j,p}^{(i)}\big]
		\]
		with $p\in\{1\,\ldots,s\}$, and where $\tau$ is a progressive number going from $(k_{j,1}+\cdots+k_{j,\nu-1}+1)$ to 
		$(k_{j,1}+\cdots+k_{j,\nu})$, denoting the updated number of admissible solutions for the first $j$ columns of the orthogonal matrices 
		$Q_1,\ldots,Q_s$, for the set of solutions associated to $\nu$. If, instead, $k_{j,\nu}=0$, then delete all the solutions 
		$Q_{j-1,1}^{(\nu)},\ldots,Q_{j-1,s}^{(\nu)}$, i.e. all the solutions associated to $\nu$.
	\item Update the number of admissible solutions for all $\nu\in\{1,\ldots,m\}$, i.e. $m=k_{j,1}+\cdots+k_{j,m}$. 
		If $m=0$, then no admissible solutions and STOP. Otherwise, if $m\neq 0$ and $j=n$, then STOP. If, instead, $m\neq 0$ but $j<n$, update
		$j=j+1$ and repeat Steps 2-4 for $\nu\in\{1\ldots,m\}$, where, in turn, based on the obtained $Q_{j,p}^{(1)},\ldots,Q_{j,p}^{(m)}$,
		define the matrices
		\begin{equation}
		\label{eq:QtAlgo}
			\tilde{Q}_1=\left(\begin{array}{cccc}
			Q_{1,1}^{(\nu)}&0&\cdots&0\\
			0&Q_{2,1}^{(\nu)}&\cdots&0\\
			\vdots&\vdots&\ddots&\vdots\\
			0&0&\cdots&Q_{s,1}^{(\nu)}
			\end{array}\right)
			\hspace{0.3cm}\cdots\hspace{0.3cm}
			\tilde{Q}_{j-1}=\left(\begin{array}{cccc}
			Q_{1,j-1}^{(\nu)}&0&\cdots&0\\
			0&Q_{2,j-1}^{(\nu)}&\cdots&0\\
			\vdots&\vdots&\ddots&\vdots\\
			0&0&\cdots&Q_{s,j-1}^{(\nu)}
			\end{array}\right).
		\end{equation}
\end{enumerate}
\end{algo}

\vspace{0.3cm}
In the previous algorithm, the general matrix $Q_{p,j}^{(\tau)}$, for each regime $p$, contains the already defined 
$[q_{p,1}^{(i)},\ldots,q_{p,j-1}^{(i)}]$ orthogonal vectors; the counter $1\leq j\leq n$ indicates the column of $Q_p$; 
the index $m$ counts the number of admissible $Q_p$ matrices; finally, for each column $j$, the counter $1\leq i\leq m$ updates 
the $Q_{p,j}$ matrices by adding the new admissible column(s). The matrices $\tilde{Q}_1,\ldots,\tilde{Q}_{j-1}$ allow to find
a new vector containing $(q_{j,1}^\prime,\ldots,q_{j,s}^\prime)^\prime$ that is orthogonal to the restrictions and to the 
already obtained $j-1$ columns in the $Q_1,\ldots,Q_s$ matrices. Given the reduced form parameters $\phi=\Btot$, Algorithm 
\ref{algo:EstimRec} produces the set of all admissible $Q\in \Qr$. From this set of admissible matrices, depending on the 
function $\eta(\phi,Q)$ of the structural parameters we are interested in, it is thus possible to easily obtain the identified 
set $\IS$.

The most delicate point of the algorithm is finding all the solutions of the system of quadratic equations 
(\ref{eq:sysAlgoRec}) in (Step 3), made of $s$ quadratic equations in $s$ unknowns. However, given that in empirical applications 
the number of regimes is relatively small, solving the system represents a simple problem and can be easy implemented in modern 
softwares.\footnote{As before, the Matlab commands \texttt{vpasolve} and \texttt{solve} represent simple and powerful solutions.} 

The way the algorithm works is substantially based on Algorithm 2 in \cite{BK19}, which provides all the admissible orthogonal 
matrices for locally- but not globally-identified SVARs. Our Algorithm \ref{algo:EstimRec} is an extension in that it accounts 
for the presence of possibly related $s$ regimes, where the relationships among the regimes is intended as possible restrictions 
involving the parameters (or functions of them) of different regimes. 

In the end, the number of possible solutions, the locally-identified parameters, can be rather large. In this vein, it can be very 
useful to introduce plausible and largely recognized sign restrictions, that can significantly reduce the number of observationally 
equivalent admissible structural parameters.

Finally, if the condition for local identification is met, as well as the pattern in Definition \ref{def:RecJustIdent}, but there are 
no restrictions across regimes, then the SVAR-WB will be globally identified too, and Algorithm \ref{algo:EstimRec} will become a
generalization of Algorithm 1 in RWZ. All solutions of Eq. (\ref{eq:sysAlgoRec}) will be real and the two unit vectors
$q_{p,j,1}$ and $q_{p,j,2}$ will lie on the same straight line, but one opposite to the other. The sign normalization will discard
one of the two and, for each regime, only one admissible solution will be retained.

\vspace{0.5cm}
In this second part of the appendix we report an algorithm to estimate a SVAR-WB with very specific restrictions. In particular other 
than being recursively just identified according to Definition \ref{def:RecJustIdent}, it also presents, for each $j\in\{1,\ldots,n\}$, 
a specific ordering of the regimes such that, for the \textit{p}-th regime, the structural parameters for the previous ones have been 
already estimated, and thus, considered as known. We introduce, thus, the following formal definition:

\begin{defin}[Sequential and recursive just identification]
  \label{def:SeqRecJustIdent}
	A recursive just identified SVAR-WB is also said sequentially and recursively just identified if the number of restrictions is 
	$f_{p,j}=n-j$, with $j\in\{1,\ldots,n\}$ and $p\in\{1,\ldots,s\}$, and, moreover, 
	for any $j\in\{1,\ldots, n\}$, it is possible to find an ordering of the regimes $k_{1,j},\ldots,k_{s,j}$ such that:
	\begin{itemize}
		\item[a.] the \textit{j}-th equation in the $k_{1,j}$-th regime is identified (locally or globally);
		\item[b.] the \textit{j}-th equation in the $k_{p,j}$-th regime, $k_{p,j}\in\{k_{2,j},\ldots,k_{s,j}\}$, is either not subject to 
		restrictions across regimes or depends on the parameters of the \textit{j}-th equation in regimes $k_{1,j},\ldots,k_{p-1,j}$.
	\end{itemize}
\end{defin}
\vspace{0.5cm}

The following matrices will play a relevant role in the algorithm:
\begin{equation}
\label{eq:FiiTilde}
	\tilde{F}_{pp,j}\equiv\left(\begin{array}{c}R_{pp,j}G_p\\q_{p,1}^\prime\\\vdots\\q_{p,j-1}^\prime\end{array}\right)
	\hspace{0.5cm}\text{ and }\hspace{0.5cm}
	\tilde{c}_{p,j}\equiv\left(\begin{array}{c}-R_{p1,j}G_1q_{1,j}-\cdots-R_{p(p-1),j}G_{p-1}q_{p-1,j}\\0\\\vdots\\0\end{array}\right)
\end{equation}
where $p=\{1,\ldots,s\}$ is the index of the regimes, while $j=\{1,\ldots,n\}$ indicates the column of the function $G(\cdot)$. 
For simplifying the notation, as before $G_p$ denotes the quantity $G_p=G_p(\Cholpit,B_p^\prime)$. 
Moreover, when $j=1$ we define $\tilde{F}_{pp,1}\equiv R_{pp,1}G_1$, and when $p=1$, we define $\tilde{c}_{p,j}\equiv(0)$.
Furthermore, for $p=\{1,\ldots,s\}$ and $j=\{1,\ldots,n\}$, let 
\begin{eqnarray}
	\tilde{A}_{p,j} & = & \tilde{F}_{pp,j}^\prime\bigg(\tilde{F}_{pp,j}\tilde{F}_{pp,j}^\prime\bigg)^{-1}\tilde{c}_{p,j}\label{eq:Ai}\\
	\tilde{B}_{p,j} & = & \bigg(I_n-\tilde{F}_{pp,j}^\prime\bigg(\tilde{F}_{pp,j}\tilde{F}_{pp,j}^\prime\bigg)^{-1}\tilde{F}_{pp,j}\bigg),
	\label{eq:Bi}
\end{eqnarray}
that are functions of the restrictions and the reduced-form parameters. Finally, let $\tilde{\alpha}_{p,j}$ be any bases for 
$\text{span}(\tilde{B}_{p,j})$ that, as proved in \cite{BK19}, given the particular recursive pattern of the restrictions, has 
dimension equal to 1.

\begin{algo}
\label{algo:EstimSeqRec}
Consider a just identified SVAR-WB with sign restrictions as in Eq. (\ref{eq:SignRestr_jp}) and equality restrictions as in Definition 
\ref{def:SeqRecJustIdent}. Let $\phi=\Btot=\big[B_{1+}\,,\,\Sigma_1\,,\,\cdots\,,\,B_{s+}\,,\,\Sigma_{s}\big]$
be the reduced-form parameter point.
\begin{enumerate}
	\item Set the counter for the columns of the orthogonal matrices $j=1$, the counter for the number of solutions of the estimation 
	problem $i=1$, and the total number of solutions of the estimation problem $m=1$, and check for the ordering suggested by 
	Definition \ref{def:RecJustIdent}. Let the correct ordering be $k_{1,j}\leq k_{2,j}\leq \ldots\leq k_{s,j}$ and fix the quantity 
	$\tilde{Q}_{p,j}^{(i)}=\{\:\:\}$, $\forall \;p\,\in\,\{k_{1,j},k_{2,j},\ldots,k_{s,j}\}$;
	\item Sequentially, for each $p\,\in\,\{k_{1,j},k_{2,j},\ldots,k_{s,j}\}$, form the matrices $\tilde{F}_{pp,j}^{(i)}$ and 
	$\tilde{c}_{p,j}^{(i)}$ as in Eq. (\ref{eq:FiiTilde}), calculate $A_{p,j}^{(i)}$, $B_{p,j}^{(i)}$ as in Eq.s (\ref{eq:Ai})-(\ref{eq:Bi}) 
	and define $\alpha_{p,j}^{(j)}$ any bases for the $\text{span}(\tilde{B}_{p,j})$. Then solve the quadratic equation in the scalar 
	unknown $z$:
  \begin{equation}
     \label{eq:QuadEqNHAlgo}
     \tilde{A}_{p,j}^{(i)\prime}\tilde{A}_{p,j}^{(i)}+2\tilde{A}_{p,j}^{(i)\prime}\tilde{\alpha}_{p,j}^{(i)}\:z+
		 \tilde{\alpha}_{p,j}^{(i)\prime}\tilde{\alpha}_{p,j}^{(i)}\:z^2-1\,=\,0.
  \end{equation}
	If both solutions $z_1$ and $z_2$ are not real, then remove $\tilde{Q}_{p,j}^{(i)}$ and
				\begin{enumerate}
                \item[a.] if $i<m$, then $i=i+1$ and go to Step 2;
								\item[b.] if $i=m$, then go to Step 3.
				\end{enumerate}
	If the solutions $z_1$ and $z_2$ are real, then
        \begin{enumerate}
                \item[a.] calculate the two vectors $q_{p,j,1}=\tilde{A}_{p,j}^{(i)}+\tilde{\alpha}_{p,j}^{(i)}z_1$ and 
													$q_{p,j,2}=\tilde{A}_{p,j}^{(i)}+\tilde{\alpha}_{p,j}^{(i)}z_2$;
                \item[b.] define $k$ as the number of vectors satisfying the normalization and sign restrictions;
                \item[c.] if $k=0$, then remove $\tilde{Q}_{p,j}^{(i)}$; otherwise, for $h=1,\ldots,k$, 
													update the $\tilde{Q}_{p,j}^{(i)}$ matrix as
									\begin{equation}
										\tilde{Q}_{p,j}^{(i)}=[Q_{p,j-1}^{(i)}\,|\,q_{p,j,h}];
										\label{eq:QijMat}\nonumber
									\end{equation}
								\item[d.] if $i<m$, then $i=i+1$ and go to Step 2; otherwise, if $i=m$, then go to Step 3.									
        \end{enumerate}
\item Update $m$ by counting the number of $\tilde{Q}_i$ matrices; if $m=0$, then relax some sign restriction or change the 
			equality restrictions and go to Step 1; otherwise, if $j<n$, set $j=j+1$, $i=1$ and go to Step 2; else, if $j=n$
			set $M$ as the number of admissible solutions and STOP.
\end{enumerate}
\end{algo}

In the previous algorithm, the general matrix $\tilde{Q}_{p,j}^{(i)}$, for each regime $p$, contains the already defined 
$[q_{p,1}^{(i)},\ldots,q_{p,j-1}^{(i)}]$ orthogonal vectors; the counter $1\leq j\leq n$ indicates the column of $Q_p$; 
the index $m$ counts the number of admissible $Q_p$ matrices; finally, for each column $j$, the counter $1\leq i\leq m$ updates 
the $\tilde{Q}_{p,j}$ matrices by adding the new admissible column(s).
Given the reduced form parameters $\phi=\Btot$, Algorithm \ref{algo:EstimSeqRec} produces the set of all admissible $Q\in \Qr$. 
From this set of admissible matrices, depending on the function $\eta(\phi,Q)$ of the structural parameters we are interested in, 
it is thus possible to easily obtain the identified set $\IS$.

The way the algorithm works is substantially based on Algorithm 2 in \cite{BK19}, which provides all the admissible orthogonal 
matrices for locally- but not globally-identified SVARs. Our Algorithm \ref{algo:EstimSeqRec} is an extension in that it accounts 
for the presence of possibly related $s$ regimes, where the relationships among the regimes is intended as possible restrictions 
involving the parameters (or functions of them) of different regimes. 

Given the particular pattern of the restrictions it is possible to attack the problem recursively, equation-by-equation first 
(or more precisely, column-by-column of $Q_p$), and regime-by-regime for each specific equation (or column). The estimation problem, 
thus, reduces to solving a simple scalar quadratic equation, that, depending on the reduced-form parameters, can have no real solutions, 
two identical solutions, and two distinct ones. If the solutions are not real, they are simply discarded. If they are real, 
then we check whether they satisfy the normalization and sign restrictions. Those passing this further check will be added to all 
the existing $\tilde{Q}_{p,j}^{(i)}$ matrices. If there are two admissible real solutions, the total number $m$ of admissible 
(partial) matrices will be doubled, one for the first solution, and one for the second.

\end{document}